\newtheorem{theorem}{Theorem}[section]
\newtheorem{definition}{Definition}
\newtheorem{lemma}[theorem]{Lemma}
\newtheorem{corollary}[theorem]{Corollary}
\newtheorem{proposition}[theorem]{Proposition}
\newtheorem{claim}[theorem]{Claim}
\def\nat{{\mathbb N}}
 \def\real{{\mathbb R}}
\newcommand{\Until}{{\: \mbox{\large $\mathrel{\mathsf{U}}$} }}
\begin{document}
\normalsize

\title{Reachability for Branching Concurrent Stochastic Games}
\author{Kousha Etessami\thanks{\tt kousha@inf.ed.ac.uk}
\\U. of Edinburgh
\and 
Emanuel Martinov\thanks{\tt eo.martinov@gmail.com}\\U. of Edinburgh 
\and
Alistair Stewart\thanks{\tt stewart.al@gmail.com}\\ USC
\and
Mihalis Yannakakis\thanks{\tt mihalis@cs.columbia.edu}\\Columbia U.
}

\date{}
\maketitle

\begin{abstract}
  We give polynomial time algorithms for deciding almost-sure and
  limit-sure reachability in Branching Concurrent Stochastic Games
  (BCSGs). These are a class of infinite-state imperfect-information
  stochastic games that generalize both finite-state concurrent
  stochastic reachability games (\cite{deAHK07}),
  as well as branching simple stochastic reachability games
  (\cite{ESY-icalp15-IC}). 
\end{abstract}

\section{Introduction}

{\em Branching Processes} (BP) are infinite-state stochastic processes
that model the stochastic evolution of a population of entities of
distinct types. In each generation, every entity of each type $t$
produces a set of entities of various types in the next generation
according to a given probability distribution on offsprings for the
type $t$. BPs are fundamental stochastic models that have been used to
model phenomena in many fields, including biology (see,
e.g., \cite{KA02}), population genetics (\cite{HJV05}), physics and
chemistry (e.g., particle systems, chemical chain reactions), medicine
(e.g. cancer growth \cite{Bozic13,RBCN13}), marketing, and others.  In
many cases, the process is not purely stochastic but there is the
possibility of taking actions (for example, adjusting the conditions
of reactions, applying drug treatments in medicine, advertising in
marketing, etc.) which can influence the probabilistic evolution of
the process to bias it towards achieving desirable objectives.  Some
of the factors that affect the reproduction may be controllable (to
some extent) while others are not and also may not be sufficiently
well-understood to be modeled accurately by specific probability
distributions, and thus it may be more appropriate to consider their
effect in an adversarial (worst-case) sense.  {\em Branching
  Concurrent Stochastic Games} (BCSG) are a natural model to represent
such settings.  There are two players, who have a set of available
actions for each type $t$ that affect the reproduction for this type;
for each entity of type $t$ in the evolution of the process, the two
players select concurrently an action from their available set
(possibly in a randomized manner) and their choice of actions
determines the probability distribution for the offspring of the
entity.  The first player represents the controller that can control
some of the parameters of the reproduction and the second player
represents other parameters that are not controlled and are treated
adversarially.  The first player wants to select a strategy that
optimizes some objective.  In this paper we focus on {\em reachability
  objectives}, a basic and natural class of objectives. Some types are
designated as undesirable (for example, malignant cells), in which
case we want to minimize the probability of ever reaching any entity
of such type.  Or conversely, some types may be designated as
desirable, in which case we want to maximize the probability of
reaching an entity of such a type.

BCSGs generalize the purely stochastic Branching Processes as well as Branching Markov Decision Processes
(BMDP) and Branching Simple Stochastic Games (BSSG)
which were studied for reachability objectives in \cite{ESY-icalp15-IC}.
In BMDPs there is only one player who aims to maximize or minimize a reachability objective.
In BSSGs there are two opposing players but they control different types.  These models were studied previously also under another
basic objective, namely the optimization of {\em extinction probability},
i.e., the probability that the process will eventually become extinct, 
that is, that the population will become empty \cite{esy-icalp12, rcsg2008}.
We will later discuss in detail the prior results in these models 
and compare them with the results in this paper.

BCSGs can also be seen as a
generalization of finite-state concurrent games \cite{deAHK07}
(see also \cite{Everett57}), namely
the extension of such finite games with branching.
Concurrent games have been used in the verification area to model the
dynamics of open systems, where one player represents the system and
the other player the environment. Such a system moves sequentially
from state to state depending on the actions of the two players (the
system and the environment).  Branching concurrent games model the
more general setting in which processes can spawn new processes that
proceed then independently in parallel (e.g.. new threads are created
and terminated).  We note incidentally that even if there are no
probabilities in the system itself, in the case of concurrent games,
probabilities arise naturally from the fact that the optimal
strategies are in general randomized; as a consequence it can be shown
that branching concurrent stochastic games are expressively and
computationally equivalent to the non-stochastic version (see
\cite{rcsg2008}).

We now summarize our main results and compare and contrast them with
previous results on related models. First, we show that a Branching
concurrent stochastic game $G$ with a reachability objective has a
well-defined value, i.e., given an initial (finite) population $\mu$
of entities of various types and a target type $t^*$, if the sets of
(mixed) strategies of the two players are respectively $\Psi_1$,
$\Psi_2$, and if $\Upsilon_{\sigma, \tau}(\mu,t^*)$ denotes the
probability of reaching eventually an entity of type $t^*$ when
starting from population $\mu$ under strategy $\sigma \in \Psi_1$ for
player 1 and strategy $\tau \in \Psi_2$ for player 2, then
$\inf_{\sigma \in \Psi_1} \sup_{\tau \in \Psi_2} \Upsilon_{\sigma,
  \tau}(\mu,t^*)$ $= \sup_{\tau \in \Psi_2} \inf_{\sigma \in \Psi_1}
\Upsilon_{\sigma, \tau}(\mu,t^*)$, which is the value $v^*$ of the
game.  Furthermore, we show that the player who wants to minimize the
reachability probability always has an optimal (mixed) {\em static
  strategy} that achieves the value, i.e., a strategy $\sigma^*$ which
uses for all entities of each type $t$ generated over the whole
history of the game the same probability distribution on the available
actions, independent of the past history, and which has the property
that $v^* = \sup_{\tau \in \Psi_2} \Upsilon_{\sigma^*,
  \tau}(\mu,t^*)$.  The optimal strategy in general has to be mixed
(randomized); this was known to be the case even for finite-state
concurrent games \cite{deAHK07}.  On the other hand, the player that
wants to maximize the reachability probability of a BCSG may not have
an optimal strategy (whether static or not), and it was known that
this holds even for BMDPs, i.e., even when there is only one player
\cite{ESY-icalp15-IC}.  This also holds for finite-state CSGs: the
player aiming to maximize reachability probability does not
necessarily have any optimal strategy \cite{deAHK07}.

To analyze BCSGs with respect to reachability objectives, we model
them by a system of equations $x=P(x)$, called a {\em minimax
  Probabilistic Polynomial System} (minimax-PPS for short), where $x$
is a tuple of variables corresponding to the types of the BCSG.  There
is one equation $x_i=P_i(x)$ for each type $t_i$, where $P_i(x)$ is
the value of a (one-shot) two-player zero-sum matrix game, whose payoff for
every pair of actions is given by a polynomial in $x$ whose
coefficients are positive and sum to at most 1 (a probabilistic
polynomial). The function $P(x)$ defines a monotone operator from
$[0,1]^n$ to itself, and thus it has, in particular, 
a {\em greatest fixed point}
(GFP) $g^*$ in $[0,1]^n$. We show that the coordinates $g^*_i$ of the
GFP give the optimal {\em non-reachability} probabilities for the BCSG
game when started with a population that consists of a single entity
of type $t_i$. The value of the game for any initial population $\mu$
can be derived easily from the GFP $g^*$ of the minimax-PPS.
This generalizes a result in \cite{ESY-icalp15-IC},
which established an analogous result for the special case of BSSGs.
It also follows from our minimax-PPS equational characterization
that {\em quantitative} decision problems for BCSGs, such as deciding
whether the reachability game value is $\geq p$ for a given 
$p \in (0,1)$ are all solvable in PSPACE.

Our main algorithmic results concern the qualitative analysis of the
reachability problem, that is, the problem of determining whether one of
the players can win the game with probability 1, i.e.,  if the value of
the game is 0 or 1.  We provide the first polynomial-time algorithms
for qualitative reachability analysis for branching concurrent
stochastic games.  For the value=0 problem, the algorithm and its
analysis are rather simple.  If the value is 0, the algorithm computes
an optimal strategy $\sigma^*$ for the player that wants to minimize
the reachability probability; the constructed strategy $\sigma^*$ is
in fact static and deterministic, i.e., it selects for each type
deterministically a single available action, and guarantees
$\Upsilon_{\sigma^*, \tau}(\mu,t^*)=0$ for all $\tau \in \Psi_2$.  If the
value is positive then the algorithm computes a static mixed strategy
$\tau$ for the player maximizing reachability probability 
that guarantees $\inf_{\sigma \in \Psi_1} \Upsilon_{\sigma, \tau}(\mu,t^*) >0$.

The value=1 problem is much more complicated.
There are two versions of the value=1 problem,
because it is possible that the value of the game is 1 but there
is no strategy for the maximizing player that guarantees reachability with probability 1. The critical reason for this is the
concurrency in the moves of the two players: for BMDPs and BSSGs, it is known that if the value is 1 then there is a strategy $\tau$ that achieves it \cite{ESY-icalp15-IC};\footnote{When the value is positive and {\em not} equal
to 1, even for BMDPs
there need not exist an optimal strategy for the player maximizing
reachability probability \cite{ESY-icalp15-IC}.}
 on the other hand, this is not 
the case even for finite-state concurrent games \cite{deAHK07}.
Thus, we have two versions of the problem.
In the first version, called the
{\em almost-sure problem}, we want to determine whether there exists a strategy $\tau^*$ for player 2 that guarantees that the
target type $t^*$ is reached with probability 1 regardless of the
strategy of player 1, i.e.,  such that $\Upsilon_{\sigma, \tau^*}(\mu,t^*) =1$
for all $\sigma \in \Psi_1$.
In the second version of the problem, called the {\em limit-sure problem}, we want to determine if the value 
$v^* =  \sup_{\tau \in \Psi_2} \inf_{\sigma \in  \Psi_1} \Upsilon_{\sigma, \tau}(\mu,t^*)$ is 1, i.e., if for every $\epsilon >0$ there is a strategy
$\tau_{\epsilon}$ of player 2 that guarantees that the
probability of reaching the target type is at least $1-\epsilon$ regardless of the strategy $\sigma$ of player 1; such a strategy $\tau_{\epsilon}$ is called $\epsilon -optimal$.
The main result of the paper is to provide polynomial-time algorithms
for both versions of the problem. The algorithms are nontrivial,
building upon the algorithms of both \cite{deAHK07} and \cite{ESY-icalp15-IC}
which both address different special subcases of qualitative BCSG reachability.

In the almost-sure problem, if the answer is positive, our algorithm
constructs (a compact description of) a strategy $\tau^*$ of player 2
that achieves value 1; the strategy is a randomized non-static
strategy, and this is inherent (i.e., there may not exist a static
strategy that achieves value 1). If the answer is negative, then our
algorithm constructs a (non-static, randomized) strategy $\sigma$ for
the opposing player 1 such that $\Upsilon_{\sigma, \tau}(\mu,t^*) < 1$
for all strategies $\tau$ of player 2.  In the limit-sure problem, if
the answer is positive, i.e., the value is 1, our algorithm constructs
for any given $\epsilon >0$, a static, randomized $\epsilon$-optimal
strategy, i.e., a strategy $\tau_{\epsilon}$ such that
$\Upsilon_{\sigma, \tau_{\epsilon}}(\mu,t^*) \geq 1-\epsilon$ for all
$\sigma \in \Psi_1$.  If the answer is negative, i.e., the value is 
$<
1$, our algorithm constructs a static randomized strategy $\sigma'$ for
player 1 such that $\sup_{\tau \in \Psi_2} \Upsilon_{\sigma',\tau} <
1$.

\medskip

\noindent {\bf Related Work.}
As mentioned, the two works most closely related to ours are \cite{deAHK07}
and \cite{ESY-icalp15-IC}.  Our results generalize both.
Firstly, de Alfaro,
Henzinger, and Kupferman \cite{deAHK07} studied finite-state
concurrent (stochastic) games (CSGs) 
with reachability objectives and provided polynomial
time algorithms for their qualitative analysis, both for the
almost-sure and the limit-sure reachability problem
(see also \cite{HKM09,FM13,HIM14,ChatHansenIbsen17} for more recent results on 
finite-state CSG reachability).  Branching Markov
Decision Processes (BMDPs) and Branching Simple Stochastic Games
(BSSGs) with reachability
objectives were studied in \cite{ESY-icalp15-IC}, which provided
polynomial-time algorithms for their qualitative analysis.  The
paper \cite{ESY-icalp15-IC}
also gave polynomial time algorithms for the approximate quantitative
analysis of BMDPs, i.e., for the approximate computation of the
optimal reachability probability for maximizing and minimizing BMDPs,
and showed that this problem for BSSGs is in TFNP. Note that even for
finite-state simple stochastic games the question of whether the value
of the game can be computed in polynomial time is a well-known
long-standing open problem \cite{Condon92}. It was also shown in
\cite{ESY-icalp15-IC} that the optimal non-reachability
probabilities of maximizing or minimizing BMDPs and BSSGs were
captured by the greatest fixed point of a system of equations
$x=P(x)$, where the right-hand side $P_i(x)$ of each equation is the
maximum or minimum of a set of probabilistic polynomials in $x$; note
that these types of equation systems are special cases of minimax-PPS,
and correspond to the case where in each one-shot game on the rhs of the
minimax-PPS equations only one of the two players has a choice of actions.

The quantitative problem for finite-state concurrent games, i.e.,
computing or approximating the value $v^*$ of the game (the optimal
reachability probability), has been studied previously and seems to be
considerably harder than the qualitative problem.  The problem of
determining if the value $v^*$ exceeds a given rational number, for
example 1/2, is at least as hard as the long-standing square-root sum
problem (\cite{rcsg2008}), a well-known open problem in numerical computation, which is
currently not known whether it is in NP or even in the polynomial
hierarchy. The problem of approximating the value
$v^*$ within a given desired precision can be solved however in the
polynomial hierarchy, specifically in TFNP[NP] \cite{FM13}.  It is
open whether the approximation problem is in NP (or moreover in P).
It was shown in \cite{HIM14} that the standard algorithms for
(approximately) solving these games, value iteration and policy
iteration, can be extremely slow in the worst-case: they can take a
doubly exponential number of iterations to obtain any nontrivial
approximation, even when the value $v^*$ is 1.  
Note also that there
are finite-state CSGs, with reachability value$=1$, 
for which (near-)optimal strategies 
for minimizer (maximizer, respectively) need to 
have some action
probabilities that are doubly-exponentially small 
\cite{HKM09,ChatHansenIbsen17}; thus a fixed
point representation of the probabilities would need an exponential
number of bits, and one must use a suitable compact
representation to ensure polynomial space. This is of course the case
also for branching stochastic games; the optimal or $\epsilon$-optimal
strategies constructed by our algorithms may use double-exponentially
small probabilities, which can however be represented succinctly so
that the algorithms run in polynomial time.

Another important objective, the probability of extinction, has been
studied previously for Branching Concurrent Stochastic Games, as well
as BMDPs and BSSGs, and the purely stochastic model of Branching
Processes (BPs). These branching models under the extinction objective
are equivalent to corresponding subclasses of recursive Markov models,
called respectively, 1-exit Recursive Concurrent Stochastic Games
(1-RCSG), Markov Decision Processes (1-RMDP), and Markov Chains
(1-RMC), and related subclasses of probabilistic pushdown processes
under a termination objective
\cite{rmc,ESY12,rmdp,esy-icalp12,rcsg2008,EKM}. The extinction
probabilities for these models are captured by the {\em least fixed
  point} (LFP) solutions of similar systems of probabilistic
polynomial equations; for example, the optimal extinction
probabilities of a BCSG are given by the LFP of a
minimax-PPS. Polynomial time-algorithms for qualitative analysis, as
well as for the approximate computation of the optimal extinction
probabilities of Branching MDPs (and 1-RMDPs) were given in
\cite{rmdp,esy-icalp12}.  However, negative results were shown also
which indicate that the problem is much harder for branching
concurrent (or even simple) stochastic games, even for the qualitative
extinction problem.  Specifically, it was shown in \cite{rmdp} that
the qualitative extinction (termination) problem for BSSG
(equivalently, 1-RSSG) is at least as hard as the well-known open
problem of computing the value of a
finite-state simple stochastic game \cite{Condon92}.  
Furthermore, it
was shown in \cite{rcsg2008} that (both 
the almost-sure and limit-sure) qualitative extinction problems
for BCSGs (equivalently 1-RCSGs) are at least as hard as the square-root
sum problem,  which is not even known to be in NP.\footnote{The 
results in \cite{rcsg2008} were phrased in terms of the limit-sure
problem, where it was shown that (a) deciding whether the
value of a finite-state CSG reachability game is at least
a given value $p \in (0,1)$ is square-root-sum-hard,
and (b) that the former problem is reducible to the limit-sure
decision problem for BCSG extinction games.
But the hardness proofs of (b) and (a) in \cite{rcsg2008} 
apply {\em mutatis mutandis} 
to (b) the almost-sure
problem for BCSG extinction, and to (a) the corresponding
problem of deciding, given a finite-state CSG and a value $p \in (0,1)$,
whether the maximizing
player has a strategy that achieves at least value $p$,
regardless of the strategy of the minimizer.
Thus, both the almost-sure and limit-sure extinction problem
for BCSGs are square-root-sum hard, and also both are at least as hard
as Condon's problem of computing
the exact value of a finite-state SSG reachability game.}
 Thus, the
extinction problem for BCSGs seems to be very different than the
reachability problem for BCSGs: obtaining analogous results for the extinction
problem of BCSGs to those of the present paper for reachability would
resolve two major open problems.

The equivalence between branching models (like e.g. BPs, BMDPs, BCSGs)
and recursive Markov models (like 1-RMC, 1-RMDP, 1-RCSG) with respect
to extinction does not hold for the reachability objective. For
example, almost-sure and limit-sure reachability coincide for a BMDP,
i.e., if the supremum probability of reaching the target is 1 then
there exists a strategy that ensures reachability with probability
1. However, this is not the case for 1-RMDPs. Furthermore, it is known
that almost-sure reachability for 1-RMDPs can be decided in polynomial
time \cite{BBFK08,BBKO11}, but limit-sure reachability for 1-RMDPs is not
even known to be decidable. The qualitative reachability problem for
1-RMDPs and 1-RSSGs (and equivalent probabilistic pushdown models) was
studied in \cite{BBKO11, BKL14}. These results do not apply to the
corresponding branching models (BMDP, BSSG).
Another objective
considered in prior work is the {\em expected total reward} objective
for 1-RSSGs and (\cite{EWY08}) and 1-RCSGs (\cite{wojt2013}) with
positive rewards.  In particular, \cite{wojt2013} shows that the
``qualitative'' problem of determining whether the game value for a
1-RCSG total reward game is $= \infty$ is in PSPACE.  None of these
prior results have any implications for BCSGs with reachability
objectives.

For richer objectives beyond reachability or extinction,
Chen et. al. \cite{CDK12} studied model checking of 
purely stochastic branching
processes (BPs) with respect to properties expressed by
deterministic parity tree automata, and showed that the qualitative
problem is in P-time (hence this holds in particular for reachability
probability in BPs), and that the quantitative problem of comparing the
probability with a rational is in PSPACE. 
Michalevski and Mio \cite{MichMio15} extended this
to properties of BPs  expressed by ``game automata'', a subclass
of alternating parity tree automata. 
More recently, Przyby{\l}ko and Skrzypczak \cite{PrzSkrz16}
considered existence and complexity of game values of Branching
turn-based (i.e., simple) stochastic games, with regular objectives, 
where the two players aim to maximize/minimize the probability that
the generated labeled tree belongs to a 
regular language (given by a tree automaton).
They showed that (unlike our case of simpler reachability games) already 
for some basic regular properties these games are not even
determined, meaning they do not have a value.
They furthermore showed that 
for a probabilistic turn-based branching
game, with a regular tree objective, it is undecidable
to compare the value that a given
player can force to $1/2$; 
whereas for deterministic turn-based branching games they showed 
it is decidable and \mbox{2-EXPTIME}-complete (respectively, EXPTIME-complete), 
to determine whether the player aiming to satisfy (respectively, falsify)
a given regular tree objective has a pure winning strategy. 
Other past research includes work in operations research on
(one-player) Branching MDPs \cite{pliska76,rotwhit82,denrot05a}. 
None of these prior works bear on any of the results on
BCSG reachability problems established in this paper.

\medskip

\noindent {\bf On the complexity of quantitative problems for BCSGs.}

All quantitative decision and approximation problems for BCSG
extinction and reachability games are in PSPACE.  
This follows by
exploiting the minimax-PPS equations whose least (and greatest) fixed
point solution captures the extinction (and non-reachability) values
of these games, and by then appealing to PSPACE upper bounds for deciding
the existential (and bounded-alternation) theory of
reals (\cite{Ren92}), in order to decide
questions about, and to approximate, the LFP and GFP of such equations.  
This was shown already for BCSG extinction
games in \cite{rcsg2008}.  A directly analogous proof yields
the same PSPACE upper bound for BCSG reachability games.  As mentioned before, 
the corresponding decision problems (e.g., deciding whether
the BCSG game value is at least a given probability $p \in (0,1)$),
are square-root-sum-hard,  already for finite-state CSG
reachability games \cite{rcsg2008}  (which are subsumed by
both BCSG extinction and BCSG reachability games).  This implies that 
even placing these decision problems in the
polynomial time hierarchy would require a breakthrough.  An
interesting question is how much the PSPACE upper bounds can be
improved for the {\em approximation} problems.  As noted
    earlier, Frederiksen and Miltersen \cite{FM13} have shown that for
    finite-state CSG reachability games, the game value can be
    approximated to desired precision in TFNP[NP].
We do not know an analogous complexity result for 
quantitative approximation
problems for BCSG extinction or reachability games,
nor do we know square-root-sum-hardness for these approximation problems.
We leave these as interesting open questions.

\medskip

\noindent{\bf Organization of the paper.}

Section 2 gives background and basic definitions. Section 3 shows the
relationship between the optimal non-reachability probabilities of a
game and the greatest fixed point of a minimax-PPS.  Section 4
presents the algorithm for determining if the value of a game is
0. Section 5 presents the algorithm for almost-sure reachability, and
Section 6 for limit-sure reachability.

\section{Background}

This section introduces some definitions and background
for Branching Concurrent Stochastic Games.  
It builds directly on, and generalizes, the 
definitions in \cite{ESY-icalp15-IC} associated with 
reachability problems for Branching MDPs and Branching Simple
Stochastic Games.

We first define the general model of a (multi-type) 
Branching Concurrent Stochastic Games(BCSGs), as well as
some important restrictions of the general model: 
Branching Simple Stochastic Games (BSSGs),
Branching MDPs (BMDPs), and (multi-type) Branching Processes (BPs).

\begin{definition}
  A \textbf{Branching Concurrent Stochastic Game(BCSG)} is a 2-player
  zero-sum game that consists of a finite set $V = \{T_1, \dots T_n\}$
  of types, two finite non-empty sets $\Gamma_{max}^i , \Gamma_{min}^i
  \subseteq \Sigma$ of actions (one for each player) for each type
  $T_i$ ($\Sigma$ is a finite action alphabet), and a finite set
  $R(T_i, a_{max}, a_{min})$ of probabilistic rules associated with
  each tuple $(T_i, a_{max}, a_{min})$, $i \in [n]$, where $a_{max}
  \in \Gamma_{max}^i$ and  $a_{min} \in \Gamma_{min}^i$. Each rule $r \in
  R(T_i, a_{max}, a_{min})$ is a triple $(T_i, p_r, \alpha_r)$, which
  we can denote by $T_i \xrightarrow{p_r} \alpha_r$, where $\alpha_r
  \in \mathbb{N}^n$ is a $n$-vector of natural numbers that
  denotes a finite multi-set over the set $V$, and where $p_r \in
  (0,1] \cap \mathbb{Q}$ is the probability of the rule $r$ (which we
  assume to be a rational number, for computational purposes), where
  we assume that for all $T_i \in V$ and $a_{max} \in \Gamma_{max}^i,
  \; a_{min} \in \Gamma_{min}^i$, the rule probabilities in $R(T_i,
  a_{max}, a_{min})$ sum to 1, i.e., $\sum_{r \in R(T_i, a_{max},
    a_{min})}p_r = 1$.
	\label{def:BCSG}
\end{definition}

If for all types $T_i \in V$,  either $|\Gamma_{max}^i| = 1$
or $|\Gamma_{min}^i| = 1$, then the model is a ``turn-based'' 
perfect-information game and is called a  \textbf{Branching
Simple Stochastic Game}  (\textbf{BSSG}).
If for all $T_i \in V$, $|\Gamma_{max}^i|
= 1$ (respectively, $|\Gamma_{min}^i| = 1$), then it is called a
\textit{minimizing} \textbf{Branching Markov Decision Process} (\textbf{BMDP})
(respectively, a \textit{maximizing} BMDP). If both
$|\Gamma_{min}^i| = 1 = |\Gamma_{max}^i|$ for all $i \in [n]$, then
the process is a classic, purely stochastic, \textbf{multi-type Branching
Process (BP)}  (\cite{Harris63}).

A {\em play} of a BCSG  defines a (possibly infinite) 
node-labeled forest, whose nodes are labeled by the type of the
object they represent.
A play contains a sequence of ``generations'', $X_0, X_1, X_2,
\dots$ (one for each integer time $t \ge 0$, corresponding
to nodes at depth/level $t$ in the forest).
For each $t \in {\mathbb{N}}$,  $X_t$ consists of the 
population (set of objects of given types), 
at time $t$.  
$X_0$ is
the initial population 
at generation
0 (these are the roots of the forest).  
$X_{k+1}$ is obtained from $X_k$ in the following way: for
each object $e$ in the set $X_k$, assuming $e$ has type $T_i$, 
both players select simultaneously
and independently actions 
$a_{max} \in \Gamma_{max}^i,$ and  
$a_{min} \in \Gamma_{min}^i$   (or distributions on such actions),   
according to their strategies; 
thereafter a rule $r \in R(T_i,a_{max}, a_{min})$ is chosen randomly 
and independently (for object $e$) with
probability $p_r$; each such object $e$ in $X_k$ is then replaced by 
the set of
objects specified by the multi-set $\alpha_r$ associated 
with the corresponding randomly chosen rule $r$. This process 
is repeated in each generation, 
as long as the current generation is not empty, and if for some $k \ge
0, \; X_k = \emptyset$ then we say the process \textit{terminates} or
becomes \textit{extinct}.

The strategies of the players can in general be arbitrary.
Specifically, at each generation, $k$, each player can,
in principle, select actions for the objects in $X_k$ 
based on the entire past history, may use randomization (a
mixed strategy), and may make different choices for objects of the
same type.  The  {\em history} of the process up to time $k-1$ 
is a forest of depth $k-1$ that
includes
not only the populations $X_0, X_1, \ldots, X_{k-1}$, but also the
information regarding all the past actions and rules applied and the
parent-child relationships between all the objects up to the
generation of $k-1$.  The history can be represented by a forest of
depth $k-1$, with internal nodes labelled by rules and actions, and
whose leaves at level $k-1$ form the population $X_{k-1}$.  Thus, a
strategy of player 1 (player 2, respectively) 
is a function that maps every 
finite history
(i.e., labelled forest of some finite depth as above) to a 
function that maps each object $e$ in the current population $X_k$
(assuming that the history has depth $k$) to a 
probability distribution 
on the actions $\Gamma^i_{max}$  (to the actions $\Gamma^i_{min}$, respectively),
assuming that object $e$ has type $T_i$.

Let $\Psi_1, \Psi_2$ be the set of all
strategies of players 1, 2.  We say that a strategy is {\em
  deterministic} if for every history it maps each object $e$ 
in the current population to a 
single action
with probability 1  (in other words, it does not randomize
on actions).  We say that a strategy is {\em static} if for
each type $T_i \in V$, 
and for any object $e$ of type $T_i$, the player always chooses the
same distribution on actions, irrespective of the history.

Different objectives can be considered for the BCSG game model. The
\textit{extinction} (or \textit{termination}) objective,
where players aim to maximize/minimize 
the extinction probability, has
already been studied in detail in \cite{rmdp} for BSSGs and in \cite{rcsg2008}
for BCSGs\footnote{Strictly speaking,
the model studied in \cite{rcsg2008} is {\em 1-exit Recursive concurrent
stochastic games} (1-RCSGs) with the objective of {\em termination}, 
but such games are easily seen to be equivalent to BCSGs
with the extinction objective: there is a simple linear-time transformation
from a 1-RCSG termination game to a BCSG extinction game, and vice versa
(see \cite{rmc} for the same correspondence, in the purely stochastic
 setting).}.  In particular, in \cite{rcsg2008} it 
was shown that the player minimizing extinction probability 
for BCSGs always has
an optimal (randomized) static strategy,
whereas the player maximizing extinction probability in general may only have
$\epsilon$-optimal randomized static strategies, for all $\epsilon > 0$.
(For BSSGs, it was shown in \cite{rmdp} that
both players have optimal deterministic static strategies
for optimizing extinction probability.)

This paper, on the other hand, deals with the (existential)
\textit{reachability} objective for BCSGs, where the aim of the players is to
maximize/minimize the probability of reaching a generation that
contains at least one object of a given target type $T_{f^*}$.
This objective was previously studied in \cite{ESY-icalp15-IC},
but only for BMDPs and BSSGs, not for the more general model of BCSGs.
It was already shown in
\cite{ESY-icalp15-IC} that in a BSSG
the player minimizing reachability probability
always has a deterministic static optimal strategy,
whereas
(unlike for the extinction objective) 
in general there need not 
exist {\em any} optimal strategy for the player maximizing 
reachability probability in a BMDP 
(and hence also in a BSSG and BCSG).
On the other hand, it was shown 
in \cite{ESY-icalp15-IC} that for BMDPs and BSSGs,
if the reachability game value is $=1$, then there is in
fact an optimal strategy (but not in general a static one,
even when randomization is allowed)
for the player maximizing the
reachability probability that forces the value $1$
(irrespective of the strategy of the player minimizing
the reachability probability).  It was
also shown that deciding whether the value $=1$ 
for BSSG reachability game can be decided
in P-time, and if the answer is ``yes'' then an optimal (non-static,
but deterministic)
strategy that achieves reachability value $1$ for the maximizer 
can be computed in P-time,
whereas if the answer is ``no'' a deterministic static strategy
that forces value $< 1$ can be computed for the minimizer 
in P-time.

We will show in this paper that 
the reachability game also has a \textit{value} for 
the more general imperfect-information game class of BCSGs. 
We do so by establishing systems of nonlinear 
minimax-equations whose greatest fixed point gives the
vector of values of the non-reachability game.

Let us note right away that there is a natural ``duality''
between the objectives of optimizing reachability probability and that of
optimizing extinction probability for BCSGs.    This duality was previously
detailed in \cite{ESY-icalp15-IC} for BSSGs.
The objective of optimizing the extinction probability (i.e., the
probability of generating a finite tree), starting from a
given type, can equivalently be rephrased as a 
``{\em universal reachability}'' objective (on a slightly
modified BCSG), 
where the goal is to
optimize the probability of eventually reaching the 
target type (namely ``death'')
on {\em all}  paths starting at the root of the tree. 
Likewise, the ``universal reachability'' objective 
can equivalently be rephrased as the objective of optimizing 
extinction probability (on a slightly modified BCSG).
By contrast, the
{\em reachability} objective that we study in this paper is the
``{\em existential reachability}'' objective 
of optimizing
the probability of reaching the target type on {\em some} path in the
generated tree.  
Despite this natural duality between these two objectives,
we show that there is a wide disparity between them,
both in terms of the nature and existence of optimal strategies, 
and in terms
of computational complexity: we show that the qualitative 
(existential) reachability problem for BCSGs can be solved in 
polynomial time, both in the almost-sure and limit-sure sense.

The BCSG reachability game can of course also be viewed 
as a ``non-reachability'' game (by just reversing the
role of the players).   It turns out this is useful to do,
and we will exploit it in crucial ways (and this was also
exploited in \cite{ESY-icalp15-IC} for BMDPs and BSSGs).
So we provide some notation for this purpose. 
Given an initial population $\mu \in \mathbb{N}^n$, with
$\mu_{f^*} = 0$, and given an integer $k \ge 0$, and strategies
$\sigma \in \Psi_1, \tau \in \Psi_2$, let $g_{\sigma, \tau}^k(\mu)$ be
the probability that the process does \textit{not} reach a generation
with an object of type $T_{f^*}$ in at most $k$ steps, under
strategies $\sigma, \tau$ and starting from the initial population
$\mu$. To be more formal, this is the probability that $(X_l)_{f^*} =
0$ for all $0 \le l \le k$. Similarly, let $g_{\sigma, \tau}^*(\mu)$
be the probability that $(X_l)_{f^*} = 0$ for all $l \ge 0$. We define
$g^k(\mu) = \sup_{\sigma \in \Psi_1} \inf_{\tau \in \Psi_2} g_{\sigma,
  \tau}^k(\mu)$ to be the value of the $k$-step non-reachability game
for the initial population $\mu$, and $g^*(\mu) = \sup_{\sigma \in
  \Psi_1} \inf_{\tau \in \Psi_2} g_{\sigma, \tau}^*(\mu)$ to be the
\textit{value} of the game under the non-reachability objective and
for the initial population $\mu$.  The next section will demonstrate that
these games are determined, meaning they have a value where $g^*(\mu)
= \sup_{\sigma \in \Psi_1} \inf_{\tau \in \Psi_2} g_{\sigma,
  \tau}^*(\mu) = \inf_{\tau \in \Psi_2} \sup_{\sigma \in \Psi_1}
g_{\sigma, \tau}^*(\mu)$. Similarly, for $g^k(\mu)$.

In the case where the initial population $\mu$ is a single object of
some given type $T_i$, then for the value of the game we write $g_i^*$
(or similarly, $g_i^k$, and when strategy $\sigma$ and $\tau$ are fixed,
we write $(g^*_{\sigma,\tau})_i$).
The collection of these values, namely the
vector $g^*$ of $g_i^*$'s, is called the vector of the
non-reachability values of the game. We will see that, having the
vector of $g_i^*$'s, the non-reachability value for a starting
population $\mu$ can be computed simply as $g^*(\mu) = f(g^*, \mu) :=
\prod_i (g_i^*)^{\mu_i}$. So given a BCSG, the \textit{aim} is to
compute the vector of non-reachability values. As our original
objective is reachability, we point out that the vector
of reachability values is $r^* = \mathbf{1} - g^*$ (where $\mathbf{1}$ is
the all-1 vector), and hence 
the reachability value $r^*(\mu)$ of the game starting
with population $\mu$ is 
$r^*(\mu) = 1 - g^*(\mu)$.

We will associate with any given BCSG a system of \textit{minimax
probabilistic polynomial equations} (\textbf{minimax-PPS}), $x = P(x)$, 
for the non-reachability
objective. This system will be constructed to have 
one variable $x_i$ and one equation $x_i = P_i(x)$ for each
type $T_i$ other than the target type $T_{f^*}$.
We will show that the vector of
non-reachability values $g^*$ for different starting types is
precisely the Greatest Fixed Point(GFP) solution of the system $x =
P(x)$ in $[0,1]^n$.

In order to define these systems of equations, some shorthand
notation will be useful. We use $x^v$ to denote the monomial $x_1^{v_1} *
x_2^{v_2} \cdots * x_n^{v_n} $ for an $n$-vector of variables $x =
(x_1, \cdots, x_n)$ and a vector $v \in \mathbb{N}^n$. Considering a
multi-variate polynomial $P_i(x) = \sum_{r \in R} p_r x^{\alpha_r}$
for some rational coefficients $p_r, r \in R$, we will call $P_i(x)$ a
\textbf{probabilistic polynomial}, if $p_r \ge 0$ for all $r \in R$
and $\sum_{r \in R}p_r \le 1$.
\begin{definition} 
  A \textbf{probabilistic polynomial system of equations} (\textbf{PPS}), $x = P(x)$,
  is a system of $n$ equations, $x_i = P_i(x)$,
  in $n$ variables where for all $i \in \{1, \dots, n\}$, $P_i(x)$ is
  a probabilistic polynomial.

  A \textbf{minimax probabilistic polynomial system of equations} (\textbf{minimax-PPS}), $x
  = P(x)$, is a system of $n$ equations
  in $n$ variables $x = (x_1,\dots,x_n)$, where for each $i \in
  \{1,\dots,n\}$,  $P_i(x) :=
  Val(A_i(x))$ is an associated \textsc{Minimax-probabilistic-polynomial}.  
By this we mean that $P_i(x)$ is
defined to be, for each $x \in \real^n$, the minimax value of the two-player zero-sum
matrix game given by a finite game payoff matrix $A_i(x)$ whose rows are
indexed by the actions $\Gamma_{max}^i$, and whose columns
are indexed by the actions $\Gamma_{min}^i$, where, 
for each pair $a_{max} \in \Gamma^i_{max}$ and $a_{min} \in \Gamma^i_{min}$,
the matrix entry $(A_i(x))_{a_{max}, a_{min}}$ is
given by a probabilistic polynomial $q_{i,a_{max},a_{min}}(x)$.
Thus, if $n_i = |\Gamma_{max}^i|$ and
$m_i = |\Gamma_{min}^i|$, and if we assume
w.l.o.g. that $\Gamma_{max}^i = \{1,\ldots,n_i\}$ and
that $\Gamma_{min}^i = \{1,\ldots,m_i\}$, then
$Val(A_i(x))$
 is defined as the minimax value of the zero-sum matrix game,
given by the following payoff matrix:
\[A_i(x) =
\begin{bmatrix}
    q_{i,1,1}(x) & q_{i,1,2}(x) & \dots & q_{i,1,m_i}(x) \\
    q_{i,2,1}(x) & \hdotsfor{3} \\
    \vdots & \vdots & \vdots & \vdots \\
    q_{i,n_i,1}(x) & \hdotsfor{2} & q_{i,n_i,m_i}(x)
\end{bmatrix}
\]
with each $q_{i,j,k}(x) := \sum_{r \in R(T_i, j, k)} p_r x^{\alpha_r}$ 
being a probabilistic polynomial for the actions pair $j, k$. 

If for all $i \in \{1, \dots, n\}$, either $|\Gamma_{min}^i| = 1$ or $|\Gamma_{max}^i| = 1$, then we call such a system \textbf{min-max-PPS}.
If for all $i \in \{1, \dots, n\}$, $|\Gamma_{min}^i| = 1$ (respectively, if $|\Gamma_{max}^i| = 1$ for all $i$) then
we will call such a system a \textbf{maxPPS} (respectively, a \textbf{minPPS}). Finally, a \textbf{PPS} is a minimax-PPS with 
both $|\Gamma_{min}^i| = 1 = |\Gamma_{max}^i|$ for every $i \in \{1, \cdots, n\}$.
	\label{def:(minimax-)PPS}
\end{definition}

For computational purposes, 
we assume that all coefficients are rational and that there
are no zero terms in the probabilistic
polynomials, and we assume the
coefficients and non-zero exponents of each term are given in binary.
We  denote by $|P|$ the total bit encoding length of a system $x =
P(x)$ under this representation.

This paper will examine minimax-PPSs.
Since $P(x)$ defines a monotone function
$P:[0,1]^n \rightarrow [0,1]^n$,  it follows by 
Tarski's theorem (\cite{tarski55})) that 
any such system has both a \textbf{Least Fixed Point (LFP)} 
solution $q^* \in [0,1]^n$,
and 
a \textbf{Greatest Fixed Point(GFP)} solution, $g^* \in
[0,1]^n$.
In other words, $q^* = P(q^*)$ and
 $g^* = P(g^*)$ and
moreover, for any $s^* \in [0,1]^n$ such that $s^* = P(s^*)$,
we have $q^* \leq s^* \leq g^*$  (coordinate-wise inequality).

We will show that the GFP of a
minimax-PPS, $g^*$, corresponds to the vector of \textit{values} for
a corresponding BCSG with  \textit{non}-reachability objective. 
We note that it has previously been
shown in \cite{rcsg2008} that the LFP solution, $q^* \in [0,1]^n$
of a minimax-PPS is the vector of \textit{extinction/termination}
values for a corresponding (but different) BCSG with the extinction objective,
and that the GFP of a
min-max-PPS is the vector of \textit{non}-reachability values
for a corresponding BSSG \cite{ESY-icalp15-IC}.
\begin{definition}
	
  A (possibly randomized) \textbf{policy} for the max (min) player in
  a minimax-PPS, $x = P(x)$, is a function that assigns a probability
  distribution to each variable $x_i$ such that the support of the
  distribution is a subset of $\Gamma_{max}^i$
  \; ($\Gamma_{min}^i$, respectively), 
where these now denote the possible actions(i.e., choices of
rows and columns) available for the respective player in the game matrix
$A_i(x)$ that defines $P_i(x)$.
	\label{def:policy}
\end{definition}
\noindent Intuitively, a policy is the same as a static strategy in the 
corresponding BCSG.

\begin{definition}
  For a minimax-PPS, $x = P(x)$, and policies $\sigma$ and $\tau$ for
  the max and min players, respectively, we write $x = P_{\sigma,
    \tau}(x)$ for the PPS obtained by fixing both these policies. We
  write $x = P_{\sigma, *}(x)$ for the minPPS obtained by fixing
  $\sigma$ for the max player, and $x = P_{*, \tau}(x)$ for the maxPPS
  obtained by fixing $\tau$ for the min player. More specifically,
  for policy $\sigma$ for the max player, we define the
  minPPS, $x = P_{\sigma, *}(x)$, as follows: for all  $i \in [n]$,
  $(P_{\sigma, *}(x))_i := \min\{s_{k}: k \in \Gamma_{min}^i \}$, where
  $s_k := \sum_{j \in \Gamma_{max}^i} \sigma(x_i, j) * q_{i,j,k}(x)$, where
  $\sigma(x_i, j)$ is the probability that the fixed policy
  $\sigma$ assigns to action $j \in \Gamma_{max}^i$ in variable
  $x_i$. We similarly define $x = P_{*, \tau}(x)$ and $x = P_{\sigma,
    \tau}(x)$.

For a minimax-PPS, $x = P(x)$, and a (possibly randomized) policy, 
$\sigma$ for the max player, we use $q_{\sigma, *}^*$ and $g_{\sigma, *}^*$ 
to denote the LFP and GFP solution vectors of the corresponding minPPS, 
$x = P_{\sigma, *}(x)$, respectively. Likewise we use $q_{*, \tau}^*$ and $g_{*, \tau}^*$ to denote the LFP and GFP solution vectors of the maxPPS, 
$x = P_{*, \tau}(x)$.
	\label{def:fixingPolicy}
\end{definition}

{\bf Note:}  we  overload 
notations such as $(g^*_{\sigma, *})_i$  and $(g^*_{*,\tau})_i$
to mean slightly different things, depending on whether $\sigma$ and $\tau$ are 
as static strategies (policies),  or are more general non-static strategies.
Specifically, let $E_i \in \nat^n$ denote the unit vector
which is  $1$ in the $i$'th coordinate and $0$ elsewhere. 
When $\tau \in \Psi_2$ is a general non-static strategy we use the notation 
$(g^*_{*,\tau})_i := 
g^*_{*,\tau}(E_i) =  \sup_{\sigma \in \Psi_1} g^*_{\sigma,\tau}(E_i)$.    We likewise
define $(g^*_{\sigma,*})_i$.    It will typically be clear from the context
which interpretation of $(g^*_{*,\tau})_i$ is intended.

\begin{definition}
	For a minimax-PPS, $x = P(x)$, a policy $\sigma^*$ is called \textbf{optimal} for the max player for the LFP (respectively, the GFP) if $q_{\sigma^*, *}^* = q^*$ (respectively, $g_{\sigma^*, *}^* = g^*$). 
	
An optimal policy $\tau^*$ for the min player for the LFP and GFP, respectively, is defined similarly. 

For $\epsilon > 0$, a policy $\sigma'$ for the max player is called
$\epsilon$\textbf{-optimal} for the LFP (respectively, the GFP), if
$||q_{\sigma', *}^* - q^*||_{\infty} \le \epsilon$ (respectively,
$||g_{\sigma', *}^* - g^*||_{\infty} \le \epsilon$). An
$\epsilon$-optimal policy $\tau'$ for the min player is defined
similarly.
	\label{def:(epsilon-)optimal}
\end{definition}

For convenience in proofs throughout the paper and to simplify the
structure of the matrices involved in the
\textit{minimax-probabilistic-polynomials}, $P_i(x)$, we shall observe that minimax-PPSs 
can always be cast in 
the following normal form.

\begin{definition}
  A minimax-PPS in \textbf{simple normal form(SNF)}, $x = P(x)$, is a
  system of $n$ equations in $n$ variables $\{x_1, \cdots, x_n\}$,
  where each $P_i(x)$ for $i = 1, 2, \dots, n$ is one of three forms:
	\begin{itemize}
        \item \textsc{Form L:} $P_i(x) = a_{i,0} +
          \sum_{j=1}^{n}a_{i,j}x_j$, where for all $j$, $a_{i,j} \ge
          0$, and $\sum_{j=0}^{n}a_{i,j} \le 1$
		\item \textsc{Form Q:} $P_i(x) = x_jx_k$ for some $j,k$
		\item \textsc{Form M:} $P_i(x) = Val(A_i(x))$,
where $A_i(x)$ is a $(n_i \times m_i)$ matrix,
such that for all $a_{max} \in [n_i]$ and $a_{min} \in [m_i]$,
the entry $A_i(x)_{(a_{max}, a_{min})}  \in \{x_1,\ldots, x_n\}  \cup \{1\}$.

(The reason we also allow ``1'' as an entry in the matrices $A_i(x)$ will 
 become clear later in the context of our algorithm.)

	\end{itemize}

	\label{def:SNF-form}
\end{definition}

We shall often assume a minimax-PPS in its SNF form, and say that a 
variable $x_i$ is ``of form/type'' L, Q, or M,
meaning that $P_i(x)$ has the corresponding form. 
The following proposition shows 
that we can efficiently convert any minimax-PPS into SNF form.

\begin{proposition}
  [cf. \cite{rmc,ESY12,esy-icalp12}] Every minimax-PPS, $x = P(x)$,
  can be transformed in P-time to an ``equivalent" minimax-PPS, $y =
  Q(y)$ in SNF form, such that $|Q| \in O(|P|)$. More precisely, the
  variables $x$ are a subset of the variables $y$, and both the LFP
  and GFP of $x = P(x)$ are, respectively, the projection of the LFP
  and GFP of $y = Q(y)$, onto the variables $x$, and furthermore an
  optimal policy (respectively, $\epsilon$-optimal policy) for the LFP
  (respectively, GFP) of $x = P(x)$ can be obtained in P-time from an
  optimal (respectively, $\epsilon$-optimal) policy for the LFP
  (respectively, GFP) of $y = Q(y)$.
	\label{prop:TranslationSNF-PPS}
\end{proposition}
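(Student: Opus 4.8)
The plan is to proceed in two stages, following the standard recipe for putting recursive/probabilistic equation systems into a normal form (as in \cite{rmc,ESY12,esy-icalp12}), adapting it to the minimax case. First I would reduce the \emph{degree} of the polynomials appearing inside the matrix games, and then I would reduce the \emph{size/shape} of the matrix entries themselves. Throughout, the key invariant to maintain is: every newly introduced variable $y_\ell$ is ``pinned down'' by its defining equation so that, at \emph{both} the LFP and the GFP of the enlarged system, $y_\ell$ takes exactly the value of the subexpression it abbreviates. Monotonicity of all right-hand sides (so Tarski applies and LFP/GFP exist) must be preserved at every step, which it is, since all substitutions replace probabilistic polynomials by probabilistic polynomials and $Val(\cdot)$ of a matrix of probabilistic polynomials is monotone in $x$.

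The first stage handles an individual equation $x_i = Val(A_i(x))$ where the entries $q_{i,a_{\max},a_{\min}}(x) = \sum_{r} p_r x^{\alpha_r}$ are arbitrary probabilistic polynomials. I would introduce, for each distinct monomial $x^{\alpha}$ occurring anywhere in the system with total degree $\ge 2$, a fresh variable together with \textsc{Form Q} equations that build it up by repeated pairwise multiplication: e.g. $y = x_{j_1} x_{j_2}$, $y' = y\, x_{j_3}$, etc., using $O(\log(\text{degree}))$ or at most $O(\text{degree})$ new \textsc{Form Q} variables per monomial, which is linear in the encoding length of that monomial's exponent vector given in binary. After this substitution, each matrix entry $q_{i,a_{\max},a_{\min}}$ has become an affine (degree $\le 1$) probabilistic polynomial in the (old and new) variables. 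To make each matrix entry a \emph{single} variable (or the constant $1$), as \textsc{Form M} demands, I introduce one more fresh variable $z_{i,a_{\max},a_{\min}}$ per matrix entry, with a \textsc{Form L} equation $z_{i,a_{\max},a_{\min}} = q_{i,a_{\max},a_{\min}}(\cdot)$, and replace the entry in $A_i$ by $z_{i,a_{\max},a_{\min}}$. Finally, if the original equation for $x_i$ was already linear or quadratic but not in \textsc{Form L} / \textsc{Form Q} exactly (e.g. a sum of more than one quadratic monomial), I route it through \textsc{Form L} over the auxiliary \textsc{Form Q} variables in the same way. The resulting system $y = Q(y)$ is in SNF, and $|Q| \in O(|P|)$ because each transformation adds a number of variables and equation-symbols linear in the bit-length of the object being abbreviated.

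The correctness claim — that LFP and GFP of $x=P(x)$ are the projections onto the $x$-variables of the LFP and GFP of $y=Q(y)$, and that (near-)optimal policies transfer — I would prove by the standard ``tight coupling'' argument. In one direction, given any fixed point $s$ of $x=P(x)$, extend it to $\hat s$ by setting each auxiliary variable to the value of its defining subexpression evaluated at $s$; then $\hat s$ is a fixed point of $y=Q(y)$, so $q^*_Q \le \hat s \le g^*_Q$ coordinatewise, and projecting gives $q^*_Q|_x \le s \le g^*_Q|_x$; taking $s = q^*_P$ and $s = g^*_P$ sandwiches things. In the other direction, given a fixed point $t$ of $y=Q(y)$, I observe that the defining equations \emph{force} every auxiliary variable at $t$ to equal the subexpression it abbreviates (a \textsc{Form Q} equation $y=x_jx_k$ literally says $t_y = t_{x_j}t_{x_k}$; a \textsc{Form L} abbreviation says $t_z$ equals the affine expression; chaining these, each original monomial/entry is reconstructed exactly), so $t|_x$ is a fixed point of $x=P(x)$, giving $q^*_P \le t|_x \le g^*_P$; taking $t = q^*_Q$ and $t=g^*_Q$ finishes. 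For policies: the max/min action sets $\Gamma^i_{\max},\Gamma^i_{\min}$ are unchanged for the \textsc{Form M} variables and the auxiliary \textsc{Form L}/\textsc{Form Q} variables are trivial (singleton action sets), so a policy for $Q$ restricts to a policy for $P$ and conversely; and fixing a policy commutes with the whole construction — $(P_{\sigma,\tau})$ transformed to SNF is $(Q_{\sigma,\tau})$ — so the fixed-point correspondence just proved applies verbatim to $P_{\sigma,*}$, $P_{*,\tau}$, $P_{\sigma,\tau}$, yielding that a policy is ($\epsilon$-)optimal for the LFP/GFP of $Q$ iff its restriction is for $P$ (here one uses that $\|\cdot\|_\infty$ on the projected coordinates is dominated by $\|\cdot\|_\infty$ on all coordinates, and that the auxiliary-variable discrepancies are controlled by the $x$-variable discrepancies through the defining equations — a short Lipschitz-type estimate, since probabilistic polynomials are $1$-Lipschitz in $\|\cdot\|_\infty$ on $[0,1]^n$ up to a factor of the degree, which is itself polynomially bounded).

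The main obstacle I anticipate is not any single step but the bookkeeping to show the \textsc{Form M} matrices end up with entries in $\{x_1,\dots,x_n\}\cup\{1\}$ \emph{and} that the ``$1$'' entries are never actually needed by the transformation itself — the paper flags that the constant $1$ is there for a later purpose, so here I must make sure every matrix entry becomes a genuine variable, which forces me to introduce the $z_{i,a_{\max},a_{\min}}$ layer rather than cleverly reusing existing variables, and to double-check that a probabilistic polynomial equating $z$ to an affine form with nonnegative coefficients summing to $\le 1$ is exactly \textsc{Form L}. A secondary subtlety is the $\epsilon$-optimality transfer: one must confirm that the factor lost in converting an $\|\cdot\|_\infty$ bound from the full variable set to the $x$-subset (and back) is at worst polynomial, so that ``$\epsilon$-optimal'' is preserved up to a polynomial rescaling of $\epsilon$ — which suffices, or can be avoided entirely by noting the auxiliary variables are exact functions of the $x$-variables at any fixed point, so no $\epsilon$ is lost in the direction that matters.
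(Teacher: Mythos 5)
Your proposal is correct and follows essentially the same route as the paper's proof: the paper likewise introduces fresh variables for non-variable matrix entries, for non-variable monomials, for high powers via repeated squaring, and splits longer products into pairwise (\textsc{Form Q}) products, and then simply asserts what you spell out in more detail, namely that the LFP/GFP of the original system are the projections of those of the SNF system and that policies correspond bijectively (the auxiliary variables carry no choices). The one point to be careful about is that with binary-encoded exponents you must take the repeated-squaring option (your $O(\log(\mathrm{degree}))$ route), not $O(\mathrm{degree})$ successive multiplications, in order to keep $|Q| \in O(|P|)$ --- which your write-up already accommodates.
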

\begin{proof}
We can easily convert, in P-time, any minimax-PPS into SNF form, using the following procedure.
\begin{itemize}
\item For each equation $x_i = P_i(x) := Val(A_i(x))$, for each
  probabilistic polynomial $q_{i,j,k}(x)$ on the right-hand-side that
  is not a variable, add a new variable $x_d$, replace $q_{i,j,k}(x)$
  with $x_d$ in $P_i(x)$, and add the new equation $x_d =
  q_{i,j,k}(x)$.

\item For each equation $x_i = P_i(x) = \sum_{j=1}^{m} p_j
  x^{\alpha_j}$, where $P_i(x)$ is a probabilistic polynomial that is
  not just a constant or a single monomial, replace every
  (non-constant) monomial $x^{\alpha_j}$ on the right-hand-side that
  is not a single variable by a new variable $x_{i_j}$ and add the
  equation $x_{i_j} = x^{\alpha_j}$.

\item For each variable $x_i$ that occurs in some polynomial with
  exponent higher than 1, introduce new variables $x_{i_1}, \dots ,
  x_{i_k}$ where $k$ is the logarithm of the highest exponent of $x_i$
  that occurs in $P(x)$, and add equations $x_{i_1} = x_i^2,\; x_{i_2}
  = x_{i_1}^2, \dots, x_{i_k} = x_{i_{k-1}}^2$. For every occurrence
  of a higher power $x_i^l,\; l > 1$, of $x_i$ in $P(x)$, if the
  binary representation of the exponent $l$ is $a_k \dots a_2a_1a_0$,
  then we replace $x_i^l$ by the product of the variables $x_{i_j}$
  such that the corresponding bit $a_j$ is 1, and $x_i$ if $a_0 =
  1$. After we perform this replacement for all the higher powers of
  all the variables, every polynomial of total degree $>$2 is just a
  product of variables.

\item If a polynomial $P_i(x) = x_{j_1} \dots x_{j_m}$ in the current
  system is the product of $m > 2$ variables, then add $m - 2$ new
  variables $x_{i_1}, \dots , x_{i_{m - 2}}$, set $P_i(x) =
  x_{j_1}x_{i_1}$, and add the equations $x_{i_1} = x_{j_2}x_{i_2},\;
  x_{i_2} = x_{j_3}x_{i_3}, \dots , x_{i_{m - 2}} = x_{j_{m -
      1}}x_{j_m}$.

\end{itemize}
Now all equations are of the form L, Q, or M. 

The above procedure allows us to convert any minimax-PPS into one in
SNF form by introducing $O(|P|)$ new variables and blowing up the size
of $P$ by a constant factor $O(1)$. It is clear that both the LFP and
the GFP of $x = P(x)$ arise as the projections of the LFP and GFP of
$y = Q(y)$ onto the $x$ variables. Furthermore, there is an obvious
(and easy to compute) bijection between policies for the resulting SNF
form minimax-PPS and the original minimax-PPS.
\end{proof} 

Thus from now on, and for the rest of this paper \textit{we may assume
  if needed, without loss of generality, that all minimax-PPSs are in
  SNF normal form}.
\begin{definition}
  The \textbf{dependency graph} of a minimax-PPS, $x = P(x)$, is a
  directed graph that has one node for each variable $x_i$, and
  contains an edge $(x_i, x_j)$ if $x_j$ appears in $P_i(x)$. The
  dependency graph of a BCSG has one node for each type, and contains
  an edge $(T_i, T_j)$ if there is a pair of actions $a_{max} \in
  \Gamma_{max}^i, a_{min} \in \Gamma_{min}^i$ and a rule $T_i
  \xrightarrow{p_r} \alpha_r$ in $R(T_i, a_{max}, a_{min})$ such that
  $T_j$ appears in $\alpha_r$.
\end{definition}

\section{Non-reachability values for BCSGs and the Greatest Fixed
  Point}

This section will show that for a given BCSG with a target type
$T_{f^*}$, a minimax-PPS, $x = P(x)$, can be constructed such that its
\textit{Greatest Fixed Point}(GFP) $g^* \in [0,1]^n$ is precisely the
vector $g^*$ of non-reachability values for the BCSG.

For simplicity, from now on let us call a
\textit{maximizer} (respectively, a \textit{minimizer}) the player that
aims to \textit{maximize}(respectively, \textit{minimize}) the
probability of \textit{not} reaching the target type. That is, we swap
the roles of the players for the benefit of less confusion in
analysing the minimax-PPS. While the players' goals in the game are
related to the objective of reachability, the equations we construct
will capture the optimal non-reachability values in the GFP of the
minimax-PPS.

For each type $T_i \not= T_{f^*}$, the minimax-PPS will have an
associated variable $x_i$ and an equation $x_i = P_i(x)$, and the
\textsc{Minimax-probabilistic-polynomial} $P_i(x)$ is built in the following
way. For each action $a_{max} \in \Gamma_{max}^i$ of the
maximizer (i.e., the player aiming to maximize the probability of
\textit{not} reaching the target) and action $a_{min} \in
\Gamma_{min}^i$ of the minimizer in $T_i$, let $R'(T_i, a_{max},
a_{min}) = \{r \in R(T_i, a_{max}, a_{min}) \;|\; (\alpha_r)_{f^*} =
0\}$ be the set of probabilistic rules $r$ for type $T_i$ and players'
action pair $(a_{max}, a_{min})$ that generate a multi-set $\alpha_r$
which does not contain an object of the target type. For each actions
pair for $T_i$, there is a probabilistic polynomial
$q_{i,a_{max},a_{min}} := \sum_{r \in R'(T_i, a_{max},
  a_{min})}p_rx^{\alpha_r}$. Observe that there is no need to include
rules where $\alpha_r$ contains an item of type $T_{f^*}$, because
then the term with monomial $x^{\alpha_r}$ will be 0. Now after a
polynomial is constructed for each pair of players' moves, we
construct $P_i(x)$ as the value of a zero-sum matrix game $A_i(x)$,
where the matrix is constructed as follows: (1) rows belong to the max
player in the minimax-PPS (i.e., the player trying to maximize the
non-reachability probability), and columns belong to the min player;
(2) for each row and column (i.e., pair of actions $(a_{max}, a_{min})$)
there is a corresponding probabilistic polynomial $q_{i,a_{max},
  a_{min}}(x)$ in the matrix entry $A_i(x)_{a_{max},a_{min}}$.

The following theorem captures the fact that the optimal
\textit{non-reachability values} $g^*$ in the BCSG correspond to the
\textit{Greatest Fixed Point(GFP)} of the minimax-PPS.

\begin{theorem}
	The non-reachability game values $g^* \in [0,1]^n$ of a BCSG 
reachability game exist, and
correspond to the Greatest Fixed Point(GFP) of the minimax-PPS, $x = P(x)$, in $[0,1]^n$. That is, $g^* = P(g^*)$, and for all other fixed points $g' = P(g')$ in $[0,1]^n$, it holds that $g' \le g^*$. Moreover, for an initial population $\mu$, the optimal non-reachability value is $g^*(\mu) = \prod_i(g_i^*)^{\mu_i}$ and the game is determined, i.e., $g^*(\mu) = \sup_{\sigma \in \Psi_1} \inf_{\tau \in \Psi_2} g_{\sigma, \tau}^*(\mu) = \inf_{\tau \in \Psi_2} \sup_{\sigma \in \Psi_1} g_{\sigma, \tau}^*(\mu)$.
Finally, the player maximizing non-reachability probability in the
BCSG has a (mixed) static optimal strategy. 
	\label{theorem:GFP-NonReach}
\end{theorem}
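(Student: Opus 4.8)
The plan is to analyse the finite-horizon truncations of the game and tie them to the downward Kleene iteration of $P$ from the top of the lattice. First I would record the analytic preliminaries: $P:[0,1]^n\to[0,1]^n$ is monotone and continuous, since each $P_i = Val(A_i(\cdot))$ is the composition of probabilistic polynomials with the value function of a zero-sum matrix game, and the latter is (Lipschitz‑)continuous in the matrix entries. Hence by Tarski there is a GFP $g^*$, and moreover the decreasing sequence $P^k(\mathbf{1})$ converges to $g^*$: continuity lets the limit pass through $P$, so $\lim_k P^k(\mathbf{1})$ is a fixed point, while monotonicity shows every fixed point lies below every $P^k(\mathbf{1})$, hence below the limit.

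The technical heart is to show, by induction on $k$, that the $k$-step non-reachability game is determined with value $g^k(\mu) = \prod_i (g^k_i)^{\mu_i}$, and that the vector $g^k = (g^k_i)_i$ of single-object $k$-step values satisfies $g^0 = \mathbf{1}$ and $g^{k+1}_i = P_i(g^k)$ (so $g^k = P^k(\mathbf{1})$). For one object of type $T_i$ at horizon $k+1$: the first round is a one-shot matrix game in which choosing a row $a_{max}$, a column $a_{min}$, and then a random rule $r$ leads --- by the inductive hypothesis applied to the remaining $k$ steps from population $\alpha_r$ --- to continuation value $\prod_j (g^k_j)^{(\alpha_r)_j}$, or $0$ when $\alpha_r$ contains the target; averaging over $r$ yields exactly the matrix entry $q_{i,a_{max},a_{min}}(g^k)$, so by von Neumann's minimax theorem the first round has value $Val(A_i(g^k)) = P_i(g^k)$ with optimal mixed strategies for both players, and a routine backward-induction argument shows that playing the first round optimally and the continuation optimally is optimal, giving determinacy and value $P_i(g^k)$. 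For a general population $\mu$, the offspring multisets of distinct objects are drawn independently, so the maximizer can secure $\prod_i (g^k_i)^{\mu_i}$ by acting independently and $k$-step-optimally on each object, and the minimizer can likewise hold the value down to the same product; since only these two inequalities are needed, one never has to claim that a player's optimal play is uncorrelated across objects --- a player may correlate its choices across a whole generation and adapt to the history freely. Letting $k\to\infty$ gives $g^k_i\downarrow g^*_i$ and $g^k(\mu)\downarrow\prod_i(g^*_i)^{\mu_i}$. (When every $A_i$ has a single row or a single column this reduces to the BMDP/BSSG analysis of \cite{ESY-icalp15-IC}.)

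It remains to sandwich the infinite-horizon value. For the upper bound, fix $k$ and let player $2$ follow a $k$-step-optimal strategy for the first $k$ steps and play arbitrarily afterwards: against any $\sigma$, never reaching the target has probability at most that of not reaching it within $k$ steps, which is at most $g^k(\mu)$; hence $\inf_\tau \sup_\sigma g^*_{\sigma,\tau}(\mu) \le \inf_k g^k(\mu) = \prod_i (g^*_i)^{\mu_i}$. For the matching lower bound, which simultaneously produces the claimed static optimal strategy, let $\sigma^*$ be the static strategy that on every object of type $T_i$ plays an optimal row strategy of the matrix game $A_i(g^*)$. Since $g^* = P(g^*)$, optimality of that row strategy gives $\sum_a \sigma^*(x_i,a)\, q_{i,a,b}(g^*) \ge Val(A_i(g^*)) = g^*_i$ for every column $b$. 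Now fix $\sigma^*$ and an arbitrary $\tau$, start from $X_0 = \mu$, and let $M_t$ be $\prod_j (g^*_j)^{(X_t)_j}$ if no target object has appeared by time $t$ and $M_t = 0$ otherwise. Conditioning on the history and on the (possibly correlated) realized column choices of player $2$ for the objects of $X_t$, the offspring draws are independent across those objects and each object of type $i$ contributes in expectation at least $g^*_i$ by the displayed inequality, so $(M_t)$ is a $[0,1]$-valued submartingale; therefore $g^*_{\sigma^*,\tau}(\mu) = \lim_t P(\text{no target by time } t) \ge \lim_t \mathbb{E}[M_t] \ge \mathbb{E}[M_0] = \prod_i (g^*_i)^{\mu_i}$. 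Combining the two bounds, $\prod_i (g^*_i)^{\mu_i} \le \inf_\tau g^*_{\sigma^*,\tau}(\mu) \le \sup_\sigma \inf_\tau g^*_{\sigma,\tau}(\mu) \le \inf_\tau \sup_\sigma g^*_{\sigma,\tau}(\mu) \le \prod_i (g^*_i)^{\mu_i}$, so all quantities coincide: the game is determined, its value from $\mu$ is $\prod_i (g^*_i)^{\mu_i}$, the per-type values $g^*_i$ form exactly the GFP of $x = P(x)$, and $\sigma^*$ is a (mixed) static optimal strategy for the maximizer.

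The step I expect to be the main obstacle is the population case of the induction, together with the parallel submartingale argument for $\sigma^*$: one has to set up the bookkeeping carefully --- the ``no target yet'' convention, the zero payoff once the target appears, and the conditional independence of the offspring draws given the history and the opponent's realized action choices --- so that the maximizer's product guarantee genuinely survives an adversary who correlates its moves across the whole generation and adapts to the past. A second, routine but necessary, ingredient is the continuity of $Val(A_i(\cdot))$, which is precisely what makes $\lim_k P^k(\mathbf{1})$ equal $g^*$ and hence identifies $\lim_k g^k$ with the GFP.
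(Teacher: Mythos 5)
Your proposal is correct, and its overall skeleton matches the paper's: both identify the finite-horizon values with the Kleene iterates $P^k(\mathbf{1})$ by a backward-induction/minimax-theorem argument (the paper's Lemma \ref{lemma:GFP-k-step-NonReach}), both get the upper bound $\inf_\tau\sup_\sigma g^*_{\sigma,\tau}(\mu)\le f(g^*,\mu)$ by having the minimizer play a $k$-step-optimal strategy, and both obtain the lower bound via the static strategy $\sigma^*$ that plays an optimal mixed strategy of the matrix game $A_i(g^*)$ at every object of type $T_i$. Where you genuinely diverge is in how the lower bound is closed. The paper fixes $\sigma^*$, observes that the resulting minPPS $x=P_{\sigma^*,*}(x)$ satisfies $P_{\sigma^*,*}\le P$ while $g^*$ remains a fixed point of it, concludes that its GFP equals $g^*$, and then reuses the finite-horizon induction (now for a minimizing BMDP) to push the $k$-step bounds to the limit. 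You instead run a direct probabilistic argument: with $M_t=\prod_j (g^*_j)^{(X_t)_j}$ (set to $0$ once the target appears), the inequality $\sum_a\sigma^*(x_i,a)q_{i,a,b}(g^*)\ge g^*_i$ for every column $b$, together with conditional independence of offspring draws given the history and the opponent's realized columns, makes $(M_t)$ a bounded submartingale, whence $g^*_{\sigma^*,\tau}(\mu)\ge \mathbb{E}[M_0]=f(g^*,\mu)$ uniformly in $\tau$. Your route is more self-contained (no comparison of fixed points of the restricted system is needed) and it handles, explicitly, minimizer strategies that correlate actions across a whole generation; the paper's route stays entirely within the fixed-point framework and yields as a by-product the identification of $g^*$ with the GFP of $x=P_{\sigma^*,*}(x)$, a fact of the type that is convenient for the later policy-based machinery. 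A minor bonus of your write-up is that you justify why $\lim_k P^k(\mathbf{1})$ is the GFP via continuity of $Val(A_i(\cdot))$, a detail the paper passes over quickly.
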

\begin{proof}
Note that $P:[0,1]^n \rightarrow [0,1]^n$ is a monotone operator, since all coefficients in all the polynomials $P_i(x)$ are non-negative, and for $x \le y$, where $x,y \in [0,1]^n$, it holds that 
$A_i(x) \le A_i(y)$  (entry-wise inequality) and thus 
$Val(A_i(x)) \le Val(A_i(y))$. Thus, $P_i(x) \le P_i(y)$. Let $x^0 = \mathbf{1}$ and $x^k = P(x^{k-1}) = P^k(\mathbf{1}), \; k > 0$ be the $k$-fold application of $P$ on the vector $\mathbf{1}$ (i.e., the all-1 vector). 
By induction on $k$ the sequence $x^k$ is monotonically non-increasing, i.e., $x^{k+1} \le x^k \le \mathbf{1}$.

By Tarski's theorem (\cite{tarski55}), $P(\cdot)$ has a Greatest Fixed Point (GFP)
$x^* \in [0,1]^n$.  The GFP is the limit of the monotone the sequence $x^k$,
i.e., $x^* = \lim_{k \rightarrow \infty} x^k$. 
To continue the proof, we 
need the following lemma.

\begin{lemma}
  For any initial non-empty population $\mu$, assuming it does not
  contain the target type $T_{f^*}$, and for any $k \ge 0$, the value
  of not reaching $T_{f^*}$ in $k$ steps is $g^k(\mu) = f(x^k, \mu) :=
  \prod_{i=1}^n(x_i^k)^{(\mu)_i}$. Also, there are strategies for the
  players, $\sigma^k \in \Psi_1$ and $\tau^k \in \Psi_2$, that achieve
  this value, that is $g^k(\mu) = \sup_{\sigma \in \Psi_1} g_{\sigma,
    \tau^k}^k(\mu) = \inf_{\tau \in \Psi_2} g_{\sigma^k,
    \tau}^k(\mu)$.
	\label{lemma:GFP-k-step-NonReach}
\end{lemma}
\begin{proof}
 
  Before we begin the proof, let us make a quick observation. 
For a fixed vector $x \in \real^n$, 
consider the zero-sum matrix game defined by
the payoff matrix $A_i(x)$ for player 1 (the row player).
Consider  fixed mixed strategies ${\mathbf{s}}_i$ and ${\mathbf{t}}_i$ 
for the row and column players in this matrix game.
Thus, ${\mathbf{s}}_i(a_{max})$   (${\mathbf{t}}_i(a_{min})$, respectively) 
defines the probability placed on action $a_{max}$ (on action $a_{min}$, respectively) 
in $\mathbf{s}_i$  (in $\mathbf{t}_i$, respectively).
The expected payoff to player 1 (the maximizing player), 
under these mixed strategies is:
\begin{eqnarray}
	&& \sum_{a_{max} \in \Gamma^i_1 ,a_{min} \in \Gamma^i_2} \mathbf{s}_i(a_{max}) 
\mathbf{t}_i(a_{min}) 
q_{i,a_{max},a_{min}}(x)   \nonumber \\ & = & 
\sum_{a_{max},a_{min}} \Big[\mathbf{s}_i(a_{max}) \mathbf{t}_i(a_{min}) 
\sum_{r \in R'(T_i,a_{max},a_{min})}p_rx^{\alpha_r} \Big] \nonumber \\
	&= & \sum_{a_{max},a_{min}} \;\; \sum_{r \in R'(T_i,a_{max},a_{min})} 
\mathbf{s}_i(a_{max}) \mathbf{t}_i(a_{min}) p_r x^{\alpha_r} = \sum_{r \in R'(T_i)} p'_r x^{\alpha_r}
	\label{eq:matrixVal}
\end{eqnarray}
where $R'(T_i)$ is the set of all probabilistic rules for type $T_i$; the newly defined probability $p'_r$ of a rule $r$ is equal to 
$\mathbf{s}_i(a_{max}) * \mathbf{t}_i(a_{min}) * p_r$ for the pair $(a_{max},a_{min})$ 
for which the rule $r$ is in $R'(T_i,a_{max},a_{min})$, and where 
$\alpha_r$ is the population that rule $r$ generates, meaning
rule $r$ is defined by $T_i \stackrel{p_r}{\rightarrow} \alpha_r$. 

Now let us prove the Lemma by induction on $k$. For the basis step,
clearly $g^0(\mu) = \mathbf{1}$, since the initial population does not
contain any objects of the target type. Moreover, $x^0 = \mathbf{1}$
and so $f(\mathbf{1}, \mu) = \mathbf{1}$.

For the inductive step, first we demonstrate that $g^k(\mu) \ge f(x^k,
\mu)$. Consider a strategy ${{\sigma}}^k := (\hat{\mathbf{s}}, \sigma^{k-1})$ for the max
player (i.e., the player aiming to maximize the non-reachability
probability), constructed in the following way. 
For all $i$, and for every object of
type $T_i$ in the initial population $\mu = X_0$, the max player 
chooses as a first step the minimax-optimal mixed strategy 
$\hat{\mathbf{s}}_i$ 
in the
zero-sum matrix game $A_i(x^{k-1})$ (which exists, due to the
minimax theorem).  The min player (player 2), as part of its strategy,
chooses some distributions on actions for all objects in the population
$X_0$ (independently of player 1), and then the rules are chosen 
according to the
resulting probabilities, forming the next generation $X_1$ at time
1. Thereafter, the max player acts according to an optimal
$(k-1)$-step strategy $\sigma^{k-1}$, starting from population $X_1$
($\sigma^{k-1}$ exists by the inductive assumption, and we will indeed
prove by induction that the thus defined $k$-step strategy 
$\sigma^k$ is optimal in the $k$-step game).
 Note that ${{\sigma}}^k$ can be mixed, and can also be 
non-static since the
action probabilities can depend on the generation and history.

Now let $\tau$ be any strategy for the min player.  In
the first step, $\tau$ chooses some
distributions on actions for each object in $X_0 = \mu$. 
After the choices of ${\sigma}^k$ and $\tau$ are made
in the first step, rules are picked
probabilistically and the population $X_1$ is generated. By the inductive
assumption, $g^{k-1}(X_1) = f(x^{k-1}, X_1)$, i.e., the value of not
reaching the target type in next $k-1$ steps, starting in population
$X_1$, is precisely $f(x^{k-1}, X_1)$. Therefore, the $k$-step probability
of not reaching the target,
starting in $\mu$, using strategies $\sigma^k$ and $\tau$,
is $g^k_{\sigma^k,\tau}(\mu) = \sum_{X_1}p(X_1)g^{k-1}_{\sigma^{k-1},\tau}(X_1) 
\geq \sum_{X_1}p(X_1)f(x^{k-1}, X_1)$, where the sum is over all possible
next-step populations $X_1$, and in each term $f(x^{k-1}, X_1)$ 
is multiplied by the
probability $p(X_1)$ of generating that particular population $X_1$.
The reason for the inequality is because, by optimality of
$\sigma^{k-1}$ for the max player in the $(k-1)$-step game,
we know that $g^{k-1}_{\sigma^{k-1},\tau}(X_1) \geq g^{k-1}(X_1) = f(x^{k-1},X_1)$.  

The sum $\sum_{X_1}p(X_1)f(x^{k-1}, X_1)$ can
be rewritten as a product of $|\mu|$ terms, one for each object in the
initial population $X_0$. Specifically, given $X_0$, 
let $L_{X_0,X_1}$ denote the set of  
all possible tuples of rules $(r_1,\ldots,r_{|X_0|})$, which 
associate to each object $e_j$ in the population $X_0$, 
a rule $r_j$ such that
if $e_j$ has type $T_i$, then $r_j \in R'(T_i)$ is a rule for type $T_i$,
and furthermore such that if we apply the rules $(r_1,\ldots, r_{|X_0|})$,
they generate multisets $\alpha_1, \ldots, \alpha_{|X_0|}$,
such that we obtain the population $X_1 = \bigcup \alpha_i$ from them.
 
Then for $X_0 = \mu$, we can rewrite $\sum_{X_1}p(X_1)f(x^{k-1}, X_1)$ as:

\begin{eqnarray*} 
\sum_{X_1}p(X_1)f(x^{k-1}, X_1) & = & \sum_{X_1} \sum_{(r_1,\ldots,r_{|X_0|}) \in L_{X_0,X_1}} 
(\prod_{j=1}^{|\mu|} p'_{r_j}) \cdot
 (\prod_{j=1}^{|\mu|}f(x^{k-1}, \alpha_{r_j}))\\
& =  & \prod_{j=1}^{|\mu|} \sum_{r_j}p'_{r_j}f(x^{k-1}, \alpha_{r_j}) \quad , 
\end{eqnarray*}
where $r_j$ ranges over all
rules that can be generated by the type of object $e_j$, 
and $p'_{r_j}$ is the probability of generating rule $r_j$
for object $e_j$ in the first step, under strategies ${{\sigma}}^k$ and 
$\tau$.
$\alpha_{r_j}$ is the population produced from $e_j$
under rule $r_j$.  Note that the term 
$\sum_{r_j}p'_{r_j}f(x^{k-1}, \alpha_{r_j})$ for an object $e_j$ 
of type $T_i$ has the same form as 
equation (\ref{eq:matrixVal}) above.  
This observation implies that, since the mixed strategy 
$\hat{\mathbf{s}}_i$ is
minimax-optimal in the zero-sum matrix game with matrix $A_i(x^{k-1})$,   
the term $\sum_{r_j}p'_{r_j}f(x^{k-1}, \alpha_{r_j})$
corresponding to each object $e_j$ of type $T_i$
is  $\geq Val(A_i(x^{k-1}))
= P_i(x^{k-1}) = x_i^k$. Hence, for
any strategy $\tau$ chosen the min player, starting with the objects in 
$\mu = X_0$, the
probability of not reaching the target type in next $k$ steps under
strategies ${\sigma}^k$ and $\tau$ is 
$g^k_{{\sigma}^k, \tau}(\mu) \ge \prod_{i=1}^{|\mu|}x_i^k = f(x^k, \mu)$. 
Therefore, the $k$-step non-reachability value is
$g^k(\mu) = \sup_{\sigma \in \Psi_1} \inf_{\tau \in \Psi_2} g_{\sigma,
  \tau}^k(\mu) \ge \inf_{\tau \in \Psi_2} g_{\sigma^k, \tau}^k(\mu)
\ge f(x^k, \mu)$.

Symmetrically we can prove the reverse inequality by using the other
player as an argument. That is, similarly let $\tau^k$ selects as a
first step for each object of type $T_i$ in the initial population
$\mu = X_0$ the (mixed) optimal strategy in the corresponding zero-sum
matrix game $A_i(x^{k-1})$ (exists by the minimax
theorem). Simultaneously and independently the max player chooses
moves for the objects, and then rules are picked in order to generate
population $X_1$. Afterwards, the min player acts according to an
optimal $k-1$-step strategy $\tau^{k-1}$ (which exists by the inductive
hypothesis). As before, $g^k(\mu)$ can be written as a product of
$|\mu|$ terms, where each term is $\sum_{r_j} p'_{r_j}
f(x^{k-1}, \alpha_{r_j})$. Again, by the
choice of $\tau^k$, it follows that the term for each object $e_j$ 
of type $T_i$ is at
most $Val(A_i(x^{k-1})) = P_i(x^{k-1}) = x_i^k$. Thus, showing that
$\sup_{\sigma \in \Psi_1} g_{\sigma, \tau^k}^k(\mu) \le f(x^k, \mu)$,
and $g^k(\mu) \le f(x^k, \mu)$. So, at the end $g^k(\mu) =
\sup_{\sigma \in \Psi_1} g_{\sigma, \tau^k}^k(\mu) = \inf_{\tau \in
  \Psi_2} g_{\sigma^k, \tau}^k(\mu) = f(x^k, \mu) = \prod_{i=1}^n
(x_i^k)^{(\mu)_i}$.     Note that the constructed strategy $\sigma^k$
(and $\tau^k$) is thus optimal for the player maximizing (respectively,
minimizing), the probability of not reachability the target
type in $k$ steps.
If the initial population consists of a single
object of type $T_i \not= T_{f^*}$, then the Lemma states that $g_i^k
= x_i^k$ for all $k \ge 0$. 
\end{proof}

Now we continue the proof of Theorem \ref{theorem:GFP-NonReach}.
We
show that the game is determined, i.e., $g^*(\mu) = \sup_{\sigma \in
  \Psi_1} \inf_{\tau \in \Psi_2} g_{\sigma, \tau}^*(\mu) = \inf_{\tau
  \in \Psi_2} \sup_{\sigma \in \Psi_1} g_{\sigma, \tau}^*(\mu)$, and
that the game value for the objective of not reaching $T_{f^*}$ 
is precisely 
$f(x^*, \mu)$, where $x^* = \lim_{k \rightarrow \infty}x^k \in
[0,1]^n$ is the GFP of the system $x = P(x)$, which exists by Tarski's
theorem.   As a special case, if the initial population $\mu$ is just a single
object of type $T_i \not= T_{f^*}$, we have $g_i^* = x_i^*$.

Since the sequence $x^k$ converges to $x^*$ monotonically from above 
(recall $x^0 =
\mathbf{1}$ and the sequence is monotonically non-increasing), then
$f(x^k, \mu)$ converges to $f(x^*, \mu)$ from above, i.e., for any
$\epsilon > 0$ there is a $k(\epsilon)$ where $f(x^*, \mu) \le
f(x^{k(\epsilon)}, \mu) < f(x^*, \mu) + \epsilon$. By Lemma
\ref{lemma:GFP-k-step-NonReach}, the min player strategy
$\tau^{k(\epsilon)}$ (as described in the Lemma) achieves the
$k(\epsilon)$-step value of the game, i.e., $\sup_{\sigma \in \Psi_1}
g_{\sigma, \tau^{k(\epsilon)}}^{k(\epsilon)}(\mu) = f(x^{k(\epsilon)},
\mu) < f(x^*, \mu) + \epsilon$. But for any strategy $\sigma$,
$g_{\sigma, \tau^{k(\epsilon)}}^*(\mu) \le g_{\sigma,
  \tau^{k(\epsilon)}}^{k(\epsilon)}(\mu)$, since the more steps the
game takes, the lower the probability of non-reachability is. So it
follows that $\sup_{\sigma \in \Psi_1} g_{\sigma,
  \tau^{k(\epsilon)}}^*(\mu) \le \sup_{\sigma \in \Psi_1} g_{\sigma,
  \tau^{k(\epsilon)}}^{k(\epsilon)}(\mu) < f(x^*, \mu) +
\epsilon$. And since it holds for every $\epsilon > 0$, then
$\inf_{\tau \in \Psi_2} \sup_{\sigma \in \Psi_1} g_{\sigma,
  \tau}^*(\mu) \le f(x^*, \mu)$. Thus, by standard facts, $g^*(\mu) =
\sup_{\sigma \in \Psi_1} \inf_{\tau \in \Psi_2} g_{\sigma,
  \tau}^*(\mu) \le \inf_{\tau \in \Psi_2} \sup_{\sigma \in \Psi_1}
g_{\sigma, \tau}^*(\mu) \le f(x^*, \mu)$.

To show the reverse inequality, namely $g^*(\mu) \ge f(x^*, \mu)$, let
$\sigma^*$ be the (mixed) static strategy for the max player (i.e., the
player aiming to maximize the probability of \textit{not} reaching the
target type), that for each object of type $T_i$ always selects the
(mixed) optimal strategy in the zero-sum matrix game $A_i(x^*)$
(which exists by the minimax theorem). Fixing $\sigma^*$, the BCSG becomes a
minimizing BMDP and the minimax-PPS, $x = P(x)$, becomes a minPPS, $x
= P'(x) = P_{\sigma^*, *}(x)$. In this new system of equations, for
every type $T_i$ (i.e., variable $x_i$), the function on the right-hand
side changes from $P_i(x) = Val(A_i(x))$ to $P'_i(x) = \min \{m_b : b
\in \Gamma_{min}^i\}$, where $m_b := \sum_{j \in \Gamma_{max}^i}
\sigma^*(x_i, j)*q_{i,j,b}(x)$. Hence, $P'(x) \le P(x)$ for all $x \in
[0,1]^n$. Thus, if we denote by $y^k, k \ge 0$ the vectors obtained from the
k-fold application of $P'(x)$ on the vector $\mathbf{1}$ (i.e., the all-1
vector), then $y^k \le x^k$ for all $k \ge 0$. So it follows that $y^*
\le x^*$, with $y^*$ and $x^*$ being the GFP of $x = P'(x)$ and $x =
P(x)$, respectively. But since the fixed strategy $\sigma^*$ is the 
optimal
strategy for the max player with respect to vector $x^*$ and achieves
the value $P_i(x^*) = Val(A_i(x^*))$ for all variables, $x^*$
must also be a fixed point of $x = P'(x)$ and hence $x^* = y^*$.

Now consider any strategy $\tau$ for the min player in the minimizing
BMDP. Recall that a minimizing BMDP is a BCSG where in every type the
max player has a single available action. Then by the induction step
in the proof of Lemma \ref{lemma:GFP-k-step-NonReach} it holds that
for every $k \ge 0$, starting in the initial population $\mu$, the
probability of \textit{not} reaching the target type $T_{f^*}$ in $k$
steps under strategy $\tau$ is at least $f(y^k, \mu)$. Hence, the
infimum probability of \textit{not} reaching the target type (in any
number of steps) is at least $\lim_{k \rightarrow \infty} f(y^k, \mu) =
f(y^*, \mu) = f(x^*, \mu)$. Therefore, $\inf_{\tau \in \Psi_2}
g_{\sigma^*, \tau}^*(\mu) \ge f(x^*, \mu)$. However, we know that
$g^*(\mu) = \sup_{\sigma \in \Psi_1} \inf_{\tau \in \Psi_2} g_{\sigma,
  \tau}^*(\mu) \ge \inf_{\tau \in \Psi_2} g_{\sigma^*, \tau}^*(\mu)$,
which shows the reverse inequality.

We can deduce that $g^*(\mu) = \sup_{\sigma \in \Psi_1} \inf_{\tau \in
  \Psi_2} g_{\sigma, \tau}^*(\mu) = \inf_{\tau \in \Psi_2}
\sup_{\sigma \in \Psi_1} g_{\sigma, \tau}^*(\mu) = f(x^*, \mu)
=\;\allowbreak \inf_{\tau \in \Psi_2} g_{\sigma^*, \tau}^*(\mu)$ and
$\sigma^*$ is an optimal (mixed) static strategy for the max player
under the non-reachability objective. 
\end{proof}

Note that the player minimizing the non-reachability probability
need not have any optimal strategy, even for a BMDP  (see Example 3.2 in
\cite{ESY-icalp15-IC}).

\begin{corollary}
Given a BCSG reachability game, and a probability $p \in (0,1)$, 
deciding whether the game value is $\geq p$ is in PSPACE.
\end{corollary}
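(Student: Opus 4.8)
The plan is to reduce the problem to deciding the truth of a sentence in the first-order theory of the reals with a bounded number of quantifier alternations, and then invoke the PSPACE decision procedure for such sentences (\cite{Ren92}). This mirrors the PSPACE upper bound argument given for BCSG \emph{extinction} games in \cite{rcsg2008}; the only essential change is that we must capture the \emph{greatest} fixed point rather than the least. By Theorem~\ref{theorem:GFP-NonReach}, the non-reachability value for a given initial population $\mu$ equals $\prod_i (g^*_i)^{\mu_i}$, where $g^*$ is the (unique) greatest fixed point in $[0,1]^n$ of the associated minimax-PPS $x = P(x)$; hence the reachability value is $1 - \prod_i (g^*_i)^{\mu_i}$, and deciding whether it is $\ge p$ amounts to deciding whether $\prod_i (g^*_i)^{\mu_i} \le 1-p$ (we take the given $p \in (0,1)$ to be rational and part of the binary-encoded input). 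So it suffices to express, by a polynomial-size sentence over the reals with bounded alternation, the assertion ``the GFP $g^*$ of $x = P(x)$ satisfies $\prod_i (g^*_i)^{\mu_i} \le 1-p$''.

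First I would express ``$y$ is a fixed point of $x = P(x)$'', i.e.\ $y_i = Val(A_i(y))$ for every $i$, by a formula with only a leading existential block of fresh variables over a quantifier-free polynomial matrix. Using the minimax theorem together with the fact that $\Gamma^i_{max}, \Gamma^i_{min}$ are finite, $y_i = Val(A_i(y))$ holds iff there exist mixed strategies $s^i$ (row player) and $t^i$ (column player) — encoded by the quantifier-free constraints $s^i_j \ge 0$, $t^i_k \ge 0$, $\sum_j s^i_j = 1$, $\sum_k t^i_k = 1$ — such that the finite conjunctions $\bigwedge_k \sum_j s^i_j (A_i(y))_{j,k} \ge y_i$ and $\bigwedge_j \sum_k t^i_k (A_i(y))_{j,k} \le y_i$ both hold; since the entries $(A_i(y))_{j,k}$ are variables or the constant $1$ in SNF (probabilistic polynomials in $y$ in general), this matrix is polynomial in $y$ and the $s^i,t^i$. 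Call the resulting formula $\Phi(y,\bar s,\bar t)$, so that $\exists \bar s\,\exists \bar t\,\Phi(y,\bar s,\bar t)$ is equivalent to ``$y$ is a fixed point''. The monomial $\prod_i y_i^{\mu_i}$, with exponents $\mu_i$ given in binary, is handled in the standard way by introducing polynomially many auxiliary existential variables computing the powers $y_i^{2^j}$ via repeated squaring and then multiplying out the relevant ones; write $\Pi(y)$ for the resulting value, with these auxiliary equations folded into the existential block.

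The target sentence is then
\[
\exists g\, \exists \bar s\, \exists \bar t\ \Big[\ \Phi(g,\bar s,\bar t)\ \wedge\ \Pi(g)\le 1-p\ \wedge\ \forall y\,\forall \bar s'\,\forall \bar t'\ \big(\Phi(y,\bar s',\bar t')\Rightarrow y\le g\big)\ \Big],
\]
whose truth is exactly the statement that the GFP $g^*$ (well defined by Tarski's theorem \cite{tarski55} and identified with the non-reachability values by Theorem~\ref{theorem:GFP-NonReach}) satisfies $\Pi(g^*)\le 1-p$: the universally quantified conjunct correctly expresses ``every fixed point is $\le g$'' because $\forall y\,\forall z\,(\Phi(y,z)\Rightarrow y\le g)$ is equivalent to $\forall y\,(\exists z\,\Phi(y,z)\Rightarrow y\le g)$, and $g^*$ is uniquely determined by being a fixed point above all others. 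This sentence consists of one block of existential quantifiers followed by one block of universal quantifiers over polynomially many variables, with a quantifier-free matrix of size polynomial in $|P|$ and the bit lengths of $\mu$ and $p$; hence by the decision procedure for the theory of the reals with a fixed number of quantifier alternations \cite{Ren92}, its truth can be decided in PSPACE, which gives the claim.

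The one place requiring care — the ``main obstacle'', such as it is — is keeping the alternation depth at two while faithfully encoding $Val(\cdot)$: writing $Val(A_i(y))$ naively as $\max_{s}\min_{k}(\cdot)$ without exploiting finiteness of the action sets and LP/minimax duality would introduce extra alternations, and one must likewise ensure the binary-encoded population $\mu$ does not blow up the formula. Both points are dispatched exactly as in the PSPACE upper bound proof for BCSG extinction in \cite{rcsg2008}, which our argument parallels verbatim except for the substitution of ``greatest fixed point'' for ``least fixed point''.
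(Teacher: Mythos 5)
Your route is exactly the one the paper intends and merely sketches: reduce via Theorem~\ref{theorem:GFP-NonReach} to a quantitative query about the GFP of the minimax-PPS, encode each condition $y_i = Val(A_i(y))$ existentially through the minimax theorem for finite matrix games (a pair of mixed strategies witnessing value $\ge y_i$ and $\le y_i$), handle the binary exponents of $\mu$ by repeated squaring, and appeal to the PSPACE decision procedure of \cite{Ren92} for sentences with a bounded number of quantifier alternations, exactly as in \cite{rcsg2008} with ``greatest'' in place of ``least'' fixed point. So the approach matches the paper's.

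There is, however, one concrete flaw in the sentence as you wrote it: nowhere do you restrict the quantified vectors to $[0,1]^n$, and this matters, because a minimax-PPS can have fixed points outside the unit cube. For instance, the SNF system $x_1 = \frac{1}{2}x_2$, $x_2 = x_1^2$ (arising from a type that reaches the target with probability $1/2$ and otherwise produces a type spawning two copies of the first) has fixed points $(0,0)$ and $(2,4)$; its GFP in $[0,1]^2$ is $(0,0)$, so the reachability value is $1 \ge p$, yet the only $g$ satisfying your unrestricted conjunct ``every fixed point is $\le g$'' is $(2,4)$, for which $\Pi(g) \le 1-p$ fails, so your sentence is false and the algorithm answers incorrectly. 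The fix is immediate and costs nothing: add $\mathbf{0} \le g \le \mathbf{1}$ to the existential block and weaken the universal conjunct to $\forall y\,\forall \bar s'\,\forall \bar t'\,\big((\mathbf{0}\le y\le \mathbf{1} \wedge \Phi(y,\bar s',\bar t'))\Rightarrow y\le g\big)$; then the unique witness is $g^*$ and the rest of your argument (two quantifier blocks, polynomial-size matrix, \cite{Ren92}) goes through. As a side remark, since $\prod_i y_i^{\mu_i}$ is monotone on $[0,1]^n$ and $g^*$ dominates every fixed point in $[0,1]^n$, one can dispense with the universal block entirely: ``value $< p$'' is equivalent to the purely existential assertion that some fixed point $y \in [0,1]^n$ satisfies $\prod_i y_i^{\mu_i} > 1-p$, and PSPACE is closed under complement; this is the ``existential theory'' shortcut the paper alludes to.
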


The PSPACE upper bound follows from Theorem
\ref{theorem:GFP-NonReach}, by appealing to decision procedures
for the (existential) theory of reals to answer quantitative
questions about the GFP of the corresponding minimax-PPS equations.
This is entirely analogous
to very similar arguments in \cite{rcsg2008,ESY-icalp15-IC,rmc}, so we do
not elaborate.  Any substantial improvement on PSPACE for such
quantitative decision problems would require a major breakthrough on
exact numerical computation, even for BPs or BMDPs (see
\cite{rmc,ESY-icalp15-IC,rcsg2008}).

\section{P-time algorithm for deciding reachability value $= 0$ for BCSGs}

In this section we show that there is a P-time algorithm for computing
the variables $x_i$ with value $g_i^* = 1$ for the GFP in a given
minimax-PPS, or in other words, for a given BCSG, deciding whether the
value for reaching the target type, starting with an object of 
a given type $T_i$,
is $0$. The algorithm does not take into consideration the actual
probabilities on the transitions in the game (i.e., the coefficients
of the polynomials), but rather depends only on the structure of the game
(respectively, the dependency graph structure of the minimax-PPS) and
performs an AND-OR graph reachability analysis.  The algorithm is 
easy, and is very
similar to the algorithm given for deciding $g_i^* = 1$ for BSSGs in
\cite{ESY-icalp15-IC}.

\begin{proposition}{(cf. \cite{ESY-icalp15-IC}, Proposition 4.1)}
  There is a P-time algorithm that given a BCSG or equivalently a
 corresponding
  minimax-PPS, $x=P(x)$, with n variables and GFP $g^* \in [0,1]^n$,
  and given $i \in [n]$, 
  decides whether $g_i^* = 1$ or $g_i^* < 1$. Equivalently, for a
  given BCSG with non-reachability objective and a starting object of
  type $T_i$, it decides whether the non-reachability game value is 1. 
  In the case of $g_i^* =
  1$, the algorithm produces a deterministic policy (or deterministic
  static strategy in the BCSG case) $\sigma$ for the max player 
(maximizing non-reachability) that
  forces $g_i^* = 1$. Otherwise, if $g^*_i < 1$, 
the algorithm produces a mixed policy $\tau$ (a mixed static strategy) 
for the min player (minimizing non-reachability) that guarantees $g_i^* < 1$.
	\label{prop:QualNonReach}
\end{proposition}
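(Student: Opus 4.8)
The plan is to reduce the problem to a purely combinatorial attractor-style computation on the dependency graph of the SNF minimax-PPS, and then to read off both players' witnessing strategies from that computation. First I would compute a decreasing chain of variable sets $[n] = S_0 \supseteq S_1 \supseteq \cdots$, where $S_{t+1}$ is obtained from $S_t$ by deleting: every Form-L variable $x_i$ that is ``deficient'' (i.e.\ $\sum_{j\ge 0} a_{i,j} < 1$) or has some $j$ with $a_{i,j}>0$ and $j\notin S_t$; every Form-Q variable $x_i = x_j x_k$ with $j\notin S_t$ or $k\notin S_t$; and every Form-M variable $x_i = Val(A_i(x))$ such that every row of $A_i$ contains an entry that is a variable $x_j$ with $j\notin S_t$ (entries equal to the constant $1$ never cause a deletion). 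Since $|S_{t+1}|<|S_t|$ as long as the chain is not yet stable, it stabilizes within $n$ rounds, each round is polynomial, and I set $S := \bigcap_t S_t$. The claim is $S = \{i : g_i^* = 1\}$, and the two inclusions will be established by building the two required strategies.

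For $S\subseteq\{i:g_i^*=1\}$ and the max player's strategy: for each surviving Form-M variable $x_i$ fix a witness row $a^{(i)}_{\max}$ all of whose entries are either the constant $1$ or variables in $S$ (this exists precisely because $x_i$ was never deleted), and let $\sigma$ be the deterministic static policy choosing $a^{(i)}_{\max}$ on each Form-M type and arbitrary choices elsewhere. Now consider the minPPS $x = P_{\sigma,*}(x)$. The key observation is that the equations of $P_{\sigma,*}$ for variables in $S$ mention only variables in $S$ (and the constant $1$), and the all-ones vector is a fixed point of this sub-system: Form-L equations restricted to $S$ are non-deficient and sum only over $S$-variables, Form-Q equations multiply two $S$-variables, and a Form-M equation in $S$ becomes a $\min$ over the witness row, every entry of which evaluates to $1$. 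Hence $(g^*_{\sigma,*})_i = 1$ for all $i\in S$, and since fixing the row player can only shrink each matrix-game value we have $P_{\sigma,*}(x)\le P(x)$ pointwise, so $g_i^* \ge (g^*_{\sigma,*})_i = 1$, i.e.\ $g_i^* = 1$. Moreover fixing $\sigma$ turns the BCSG into a minimizing BMDP whose minimax-PPS is exactly $P_{\sigma,*}$, so Theorem~\ref{theorem:GFP-NonReach} gives $\inf_{\tau\in\Psi_2} g^*_{\sigma,\tau}(E_i) = (g^*_{\sigma,*})_i = 1$, and $\sigma$ forces non-reachability value $1$ from $T_i$.

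For $\{i:g_i^*=1\}\subseteq S$ and the min player's strategy: let $\tau$ be the static policy playing, on each Form-M type, the uniform distribution over its columns, and let $g'$ be the GFP of the resulting maxPPS $x=P_{*,\tau}(x)$; since fixing a (possibly suboptimal) column strategy can only raise matrix-game values, $P_{*,\tau}(x)\ge P(x)$ pointwise and so $g'\ge g^*$. I will use the general fact that for any monotone probabilistic-polynomial system, the set of coordinates whose GFP value is $1$ is the largest set closed under the obvious local conditions — for one direction analyze each form at the GFP, and for the other note that for any closed set $T$ the indicator vector $\mathbf 1_T$ (value $1$ on $T$, $0$ off $T$) is a post-fixed point and hence $\le$ GFP by Knaster--Tarski. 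Applied to $P_{*,\tau}$, this says the ``value-$1$ set'' $S' := \{j:g'_j=1\}$ has the property that every Form-M variable $j\in S'$ has a row whose averaged entry at $g'$ equals $1$, which — since every entry is $\le 1$ — forces every individual entry of that row to be the constant $1$ or a variable in $S'$. A straightforward induction on $t$ then gives $S'\subseteq S_t$ for all $t$: an L- or Q-variable of $S'$ has its relevant neighbours in $S'\subseteq S_t$ and is non-deficient, so it survives round $t+1$; an M-variable of $S'$ has a row that is all-$1$-or-in-$S'\subseteq S_t$, so it survives. Hence $S'\subseteq S$, so $i\notin S$ implies $g'_i<1$; with $g_i^*\le g'_i$ this yields $g_i^*<1$, and applying Theorem~\ref{theorem:GFP-NonReach} to the maximizing BMDP obtained by fixing $\tau$ gives $\sup_{\sigma\in\Psi_1} g^*_{\sigma,\tau}(E_i) = g'_i < 1$, so $\tau$ certifies a positive reachability probability from $T_i$ against every strategy of the max player.

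The algorithm is then: run the set iteration to obtain $S$; if $i\in S$ output ``$g_i^*=1$'' with the deterministic $\sigma$, else output ``$g_i^*<1$'' with the uniform $\tau$. I expect the main obstacle to be the Form-M analysis: one must pin down that a zero-sum matrix game with all entries in $[0,1]$ has value $1$ iff some row is entirely $1$, and then observe that this simultaneously (a) yields a \emph{deterministic} witness row for the max player when $g_i^*=1$ and (b) makes the uniform column distribution an adequate \emph{static} choice for the min player when $g_i^*<1$, with the monotone coupling $S'\subseteq S$ between the decreasing set iteration and the fixed-min-policy maxPPS being the technical heart; the remainder is a routine greatest-fixed-point / AND-OR graph-reachability argument essentially as in \cite{ESY-icalp15-IC}.
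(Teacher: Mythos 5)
Your proposal is correct and follows essentially the same route as the paper's proof: the same probability-independent AND--OR/attractor computation on the SNF system (your deletion-to-stability iteration computes exactly the complement of the paper's set $S$), a deterministic static witness-row policy for the max player certified by the all-ones fixed point of the induced subsystem of the minPPS, and a mixed static policy for the min player certified to force value $<1$. The only cosmetic differences are that your min policy is uniform over all columns rather than over the witness columns, and you verify it via the GFP of the fixed-$\tau$ maxPPS together with Theorem~\ref{theorem:GFP-NonReach} instead of the paper's direct induction on the order in which variables enter $S$; both variants are sound.
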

\begin{proof}

Let $W = \{x_1,\ldots,x_n\}$ denote the set of all variables
in the minimax-PPS, $x=P(x)$.
  Recall that the dependency graph of $x=P(x)$ 
  has a directed edge $(x_i, x_j)$ iff
  variable $x_i$ depends on variable $x_j$, i.e., $x_j$ occurs in
  $P_i(x)$. Let us call a variable $x_i$  \textit{deficient} if
  $P_i(x)$ is of form L and $P_i(\mathbf{1}) < 1$. Let $Z \subseteq
 \{x_1, \ldots, x_n\}$ be the set
  of deficient variables. 
The remaining variables $X = W - Z$ are
  partitioned, according to their SNF-form equations: $X = L \cup Q
  \cup M$.

\begin{figure}[ht]
	\begin{enumerate}
		\item Initialize $S := Z$.
		\item Repeat until no change has occurred:
		\begin{enumerate}
			\item if there is a variable $x_i \not\in S$ of form L or Q such that $P_i(x)$ contains a variable already in $S$, then add $x_i$ to 
$S$.
			\item if there is a variable 
$x_i \not\in S$ of form M such that for every  action
$a_{max} \in \Gamma_{max}^i$, there exists an action
$a_{min} \in \Gamma_{min}^i$,  such that $A_i(x)_{(a_{max}, a_{min})} \in S$,
then add $x_i$ to $S$.
		\end{enumerate}
\item Output the set $\bar{S} := W - S$.
	\end{enumerate}
	\caption{Simple P-time algorithm for computing the set
of types with reachability value $0$ in a given BCSG, or equivalently the set
of variables $\{x_i \;|\; g_i^* = 1\}$ of the associated minimax-PPS.}
	\label{fig:QualNonReach}
\end{figure}

Figure \ref{fig:QualNonReach} gives the algorithm.  The intuition
behind it is as follows: notice that in 2.(b) no matter what
strategy the max player chooses in the particular variable (i.e., type
in the game), the min player can ensure with positive probability to
end up in a successor variable that already is bad for the max
player. The resulting winning strategies for players' corresponding
winning sets (it is irrelevant to define strategies in the losing
nodes) are: (i) for $x_i \in S$, the min player's strategy (mixed
static) $\tau$ selects uniformly at random among the ``witness'' moves
from step 2.(b), and (ii) for $x_i \in \overline{S}$ the max player's
strategy (deterministic static) $\sigma$ chooses an action $a_{max}
\in \Gamma_{max}^i$ that ensures staying within $\overline{S}$ no
matter what the minimizer's action (which must exist, otherwise $x_i$ would have
been added to $S$).

We need to prove that $g_i^* < 1$ iff $x_i \in
S$. First, we show that $x_i \in S$ implies $g_i^* < 1$.
Assume $x_i \in S$ (and therefore
$\tau$ is defined).  We analyse by induction, based on the time (iteration) 
in which 
variable $x_i$ was added to $S$ in the iterative algorithm. For the
base case, if $x_i$ was added at the initial step (i.e., $x_i
\in Z$), then $g_i^* \le P_i(\mathbf{1}) < 1$. For the
induction step, if variable $x_i$ is of type L or Q, then $g_i^* =
P_i(g^*)$ is a linear combination (with positive coefficients whose
sum is $\le 1$) or a quadratic term, containing at least one variable $x_j$
that was already in $S$ prior to $x_i$, and hence, by induction, $g_j^* < 1$. 
Hence, $g^*_i < 1$. If $x_i$ is
of form M, then for $\forall a_{max} \in \Gamma_{max}^i, \; \exists
a_{min} \in \Gamma_{min}^i$ such that the corresponding variable
$x_{(a_{max}, a_{min})} \in S$ (i.e., $g_{(a_{max},a_{min})}^* < 1$),
and $\tau$ gives positive probability to all such witnesses
$a_{min}$. So for any strategy $\sigma$ that the maximizer picks,
$\sum_{a_{min}, a_{max}} \sigma(x_i, a_{max})\tau(x_i,
a_{min})g_{(a_{max},a_{min})}^* < 1$. It follows that for any 
strategy $\sigma$, $(g_{\sigma,\tau}^*)_i < 1$, or in other words $(g_{*,
  \tau}^*)_i < 1$. Thus, $g_i^* \le (g_{*, \tau}^*)_i < 1$.

Next, to show that if $g^*_i < 1$ then $x_i \in S$, we show
the contrapositive.   Assume $x_i \in
\overline{S}$ (and therefore $\sigma$ is defined). All variables of form
$L \cup Q$ depend only on variables in $\overline{S}$ (otherwise they would
have been added to $S$). Moreover, for every $x_i$ of type M, there is
a maximizer action $a_{max}$ such that, 
all variables in row $a_{max}$ of the matrix
of $A_i(x)$ are in
$\overline{S}$. 
If no such action exists, then $x_i$ would have been
added to $S$ in step 2.(b). 
Let $\sigma(x_i)$ choose such an action $a_{max}$ deterministically 
(i.e., with probability 1).  
In the dependency graph of the resulting minPPS, $x=P_{\sigma,*}(x)$, 
there are no edges from $\bar{S}$ to $S$: all variables
of type $L$, $Q$, or $M$ depend only on $\bar{S}$ variables, otherwise
they would have been added to $S$.
Moreover, $\overline{S}$ does not contain
any deficient variables. So, $P_i(\mathbf{1}) = 1$ for every $x_i \in
\overline{S}$, and the all-1 vector is a fixed point for 
the subsystem of the minPPS,
$x = P_{\sigma,*}(x)$ induced by the variables $\bar{S}$.  In other words, 
$(g_{\sigma,*}^*)_i = 1$ (thus $g_i^* = 1$) for all $x_i \in \bar{S}$.
\end{proof}

\section{P-time algorithm for almost-sure reachability for BCSGs}

In this section the focus is on the qualitative almost-sure
reachability problem, i.e., starting with an object of type $T_i$,
decide whether the reachability value is 1 {\em and} there exists an
optimal strategy to achieve this value for the player aiming to
maximize the reachability probability. That is, the algorithm
presented here computes a set $F$ of variables (types), 
such that for any $x_i \in F$, 
starting from one object of type $T_i$ 
there is a strategy $\tau$ for the player aiming to reach the target
type $T_{f^*}$, such that no matter what the other player does,
almost-surely an object of type $T_{f^*}$ will be reached.  
We of course also wish to compute
such a strategy if it exists.  Before presenting the algorithm, we
give some preliminary results based on the results in
\cite{ESY-icalp15-IC}.

Following the definitions introduced in (\cite{ESY-icalp15-IC}, Section 5), a
\textit{linear degenerate (LD)-PPS} is a PPS where every polynomial
$P_i(x)$ is linear, containing no constant term (i.e., $P_i(x) =
\sum_{j=1}^{n}p_{ij}x_{j}$) and where the coefficients $p_{ij}$ sum to 1. Hence,
a LD-PPS has for LFP ($q^*$) and GFP ($g^*$) the all-0 and the all-1
vectors, respectively. Furthermore, a PPS that does not contain a
linear degenerate bottom strongly-connected component (i.e., a component
in the dependency graph that is strongly connected and has no edges
going out of it), is called a \textit{linear degenerate
  free(LDF)-PPS}. In other words, a LDF-PPS is a PPS that satisfies
the conditions of 
Lemma \ref{lemma:5.1}(ii)  below. 
Given a minimax-PPS $x=P(x)$, a policy $\tau$ for the min player is
called LDF if the resulting PPS for all max player policies $\sigma$, 
namely $x =
P_{\sigma, \tau}(x)$, is a LDF-PPS. Having introduced this, now we can
reference some known results from \cite{ESY-icalp15-IC} and give a
concurrent version (Lemma \ref{lemma:LDFpolicy}) of one of the Lemmas
from \cite{ESY-icalp15-IC}.

\begin{lemma}
	[cf. \cite{ESY-icalp15-IC}, Lemma 5.1] For any PPS, $x = P(x)$, exactly one of the following two cases holds:
	\begin{enumerate}[label=(\roman*)]
		\item $x = P(x)$ contains a linear degenerate bottom strongly-connected component (BSCC), S, i.e., $x_S = P_S(x_S)$ is a LD-PPS, and $P_S(x_S) \equiv B_Sx_S$, for a stochastic matrix $B_S$.
		\item every variable $x_i$ either is, or depends (directly or indirectly) on, a variable $x_j$ where $P_j(x)$ has one of the following properties:
		\begin{enumerate}
			\item[1.] $P_j(x)$ has a term of degree 2 or more,
			\item[2.] $P_j(x)$ has a non-zero constant term, i.e., $P_j(\mathbf{0}) > 0$ or
			\item[3.] $P_j(\mathbf{1}) < 1$.
		\end{enumerate}
	\end{enumerate}
	\label{lemma:5.1}
\end{lemma}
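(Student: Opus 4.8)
The statement is an ``exactly one of'' dichotomy, so I would prove separately that the two cases are mutually exclusive and that at least one always holds.

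\emph{Mutual exclusivity.} Suppose (i) holds, witnessed by a linear‑degenerate bottom SCC $S$ with $x_S = P_S(x_S) \equiv B_S x_S$ for a stochastic matrix $B_S$. For any $x_i \in S$ the polynomial $P_i(x) = \sum_j (B_S)_{ij} x_j$ is linear (no term of degree $\ge 2$), has no constant term (so $P_i(\mathbf{0}) = 0$), and satisfies $P_i(\mathbf{1}) = \sum_j (B_S)_{ij} = 1$. Moreover, since $S$ is a \emph{bottom} SCC, each $x_i \in S$ depends only on variables of $S$, every one of which has exactly these three properties. Hence no variable in $S$ is, or depends on, a variable of the ``bad'' kind listed in (ii)1--3, so (ii) fails, and (i) and (ii) cannot hold simultaneously.

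\emph{Exhaustiveness.} Now assume (ii) fails; I would exhibit a witness for (i). Failure of (ii) gives a variable $x_i$ such that neither $x_i$ nor any variable it depends on (directly or indirectly) has property 1, 2, or 3. Let $D$ be the set of variables reachable from $x_i$ in the dependency graph (including $x_i$). Then $D$ is closed under taking successors, and by the choice of $x_i$ no variable in $D$ is ``bad''; hence for every $x_j \in D$ the polynomial $P_j$ is linear, $P_j(\mathbf{0}) = 0$, and $P_j(\mathbf{1}) = 1$, which forces $P_j(x) = \sum_k p_{jk} x_k$ with $p_{jk} \ge 0$ and $\sum_k p_{jk} = 1$. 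The condensation of the subgraph induced on $D$ is a finite DAG and therefore has a sink, i.e.\ an SCC $S \subseteq D$ with no edges leaving $S$ within $D$; since $D$ is successor‑closed, $S$ has no outgoing edges in the whole dependency graph, so $S$ is a bottom SCC. For $x_j \in S$ every variable occurring in $P_j$ is a successor of $x_j$ and hence lies in $S$, so the restriction of the system to $S$ is $x_S = B_S x_S$ with $B_S = (p_{jk})_{j,k\in S}$ nonnegative and each row summing to $1$ — a stochastic matrix — which is exactly an LD‑PPS, giving case (i). (One degenerate subcase to record: if $|S|=1$, say $S = \{x_j\}$, then $x_j$ must carry a self‑loop, because a variable with no successors has a constant $P_j$, and $P_j(\mathbf{0}) = 0$ would then force $P_j \equiv 0$, whence $P_j(\mathbf{1}) = 0 < 1$, i.e.\ $x_j$ would be ``bad'' — contradiction; so $P_j(x) = x_j$, the $1\times 1$ stochastic matrix $[1]$.)

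\emph{Main obstacle.} There is no serious difficulty: the heart of the argument is the elementary observation that a linear polynomial with zero constant term and value $1$ at $\mathbf{1}$ is exactly a convex (stochastic) combination of the variables, together with the fact that a finite DAG has a sink. The only points that need care are (a) arguing that a sink SCC of the induced subgraph on the successor‑closed set $D$ is genuinely a bottom SCC of the whole dependency graph, and (b) eliminating the degenerate sink/singleton configurations so that the extracted $S$ really is a linear‑degenerate bottom SCC in the precise sense of (i). Since the statement is verbatim Lemma 5.1 of \cite{ESY-icalp15-IC}, one could alternatively simply invoke that result.
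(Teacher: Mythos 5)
Your proof is correct. Note that the paper itself gives no proof of this lemma at all: it is imported verbatim, by citation, from Lemma 5.1 of \cite{ESY-icalp15-IC}, so there is no in-paper argument to compare against line by line. Your self-contained argument is the natural one and is sound on both halves: for exclusivity, a linear-degenerate bottom SCC $S$ contains a variable that neither is nor depends on any variable with properties 1--3 (bottomness confines all its dependencies to $S$, and the LD form rules out degree-$2$ terms, constant terms, and deficiency), so (ii) fails; for exhaustiveness, negating (ii) yields a witness $x_i$ whose successor-closed reachable set $D$ consists entirely of variables whose polynomials are linear, constant-free, and have coefficients summing to $1$, and a sink SCC of the condensation of $D$ is then a bottom SCC of the whole dependency graph on which the system restricts to $x_S = B_S x_S$ with $B_S$ stochastic. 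You also correctly dispose of the only delicate corner case, a singleton sink with no self-loop, by observing that a successor-free non-bad variable would have $P_j \equiv 0$ and hence $P_j(\mathbf{1}) = 0 < 1$, a contradiction, forcing $P_j(x) = x_j$ in that case. The two points you flag as needing care (a sink of the condensation of a successor-closed set is a genuine bottom SCC, and the degenerate singleton) are exactly the right ones, and both are handled; alternatively, as you note, one may simply invoke the cited result, which is what the paper does.
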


\begin{lemma}
	[cf. \cite{ESY-icalp15-IC}, Lemma 5.2] If a PPS, $x = P(x)$, has either GFP $g^* < \mathbf{1}$, or LFP $q^* > \mathbf{0}$, then x = P(x) is a LDF-PPS.
	\label{lemma:5.2}
\end{lemma}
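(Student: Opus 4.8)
The plan is to prove the contrapositive: I will show that if $x = P(x)$ is \emph{not} an LDF-PPS, then neither $g^* < \mathbf{1}$ nor $q^* > \mathbf{0}$ can hold, i.e.\ $g^*_i = 1$ and $q^*_i = 0$ for some coordinate $i$. The starting point is the definition of LDF-PPS recalled just before the statement: an LDF-PPS is precisely a PPS satisfying condition (ii) of Lemma~\ref{lemma:5.1}. So if $x = P(x)$ is not LDF, then by the dichotomy of Lemma~\ref{lemma:5.1} it satisfies condition (i): there is a bottom strongly connected component $S$ of the dependency graph whose induced subsystem is a linear degenerate PPS, $x_S = P_S(x_S) \equiv B_S x_S$ for a (row-)stochastic matrix $B_S$.

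Next I would exploit the fact that $S$ is a \emph{bottom} SCC, meaning no variable $x_i$ with $i \in S$ depends (directly or indirectly) on any variable outside $S$; equivalently, for $i \in S$ the polynomial $P_i(x)$ mentions only variables indexed by $S$. This decouples the subsystem, so that for every $k \ge 0$ we have $(P^k(\mathbf{1}))_S = P_S^k(\mathbf{1}_S)$ and $(P^k(\mathbf{0}))_S = P_S^k(\mathbf{0}_S)$, where $\mathbf{1}_S, \mathbf{0}_S$ denote the all-$1$ and all-$0$ vectors indexed by $S$. Since $B_S$ is stochastic, $B_S \mathbf{1}_S = \mathbf{1}_S$ and $B_S \mathbf{0}_S = \mathbf{0}_S$, hence $P_S^k(\mathbf{1}_S) = \mathbf{1}_S$ and $P_S^k(\mathbf{0}_S) = \mathbf{0}_S$ for all $k$. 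Recalling that the GFP of a PPS is the (monotone) limit of $P^k(\mathbf{1})$ from above, and the LFP is the (monotone) Kleene limit of $P^k(\mathbf{0})$ from below, we get $g^*_S = \mathbf{1}_S$ and $q^*_S = \mathbf{0}_S$. Choosing any $i \in S$ then gives $g^*_i = 1$ and $q^*_i = 0$, which contradicts both $g^* < \mathbf{1}$ and $q^* > \mathbf{0}$, completing the contrapositive.

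I do not expect a serious obstacle: the argument reduces, via Lemma~\ref{lemma:5.1}, to the elementary observation that a stochastic matrix fixes both $\mathbf{1}$ and $\mathbf{0}$, together with the fact that iterating the fixed-point operator commutes with restriction to a bottom SCC. The only point needing mild care is that decoupling step (justifying $(P^k(\cdot))_S = P_S^k(\cdot_S)$), which follows immediately from the definition of a bottom SCC having no outgoing edges in the dependency graph. One could instead phrase the proof non-contrapositively, but isolating ``presence of a linear degenerate bottom SCC'' as the sole obstruction and contradicting it directly is the cleanest route.
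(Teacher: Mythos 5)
Your proof is correct, and it is essentially the standard argument for this lemma, which the paper itself does not re-prove but imports from \cite{ESY-icalp15-IC}: the dichotomy of Lemma~\ref{lemma:5.1} produces a linear degenerate bottom SCC $S$, the bottom-SCC decoupling together with stochasticity of $B_S$ keeps every Kleene iterate equal to $\mathbf{1}_S$ (resp.\ $\mathbf{0}_S$) on $S$, and hence $g^*_S=\mathbf{1}_S$ and $q^*_S=\mathbf{0}_S$. The only step worth flagging is your use of $g^*=\lim_k P^k(\mathbf{1})$ and $q^*=\lim_k P^k(\mathbf{0})$, which is legitimate because $P$ is continuous on $[0,1]^n$ (the same fact is used in the proof of Theorem~\ref{theorem:GFP-NonReach}); alternatively one can bypass it by observing that the vector agreeing with $g^*$ outside $S$ and equal to $1$ on $S$ satisfies $y \le P(y)$, so Tarski's theorem gives $g^*_S=\mathbf{1}_S$ directly, and dually for the LFP.
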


\begin{lemma}
	[cf. \cite{ESY-icalp15-IC}, Lemma 5.5] For any LDF-PPS, $x = P(x)$, and $y < \mathbf{1}$, if $P(y) \le y$ then $y \ge q^*$ and if $P(y) \ge y$, then $y \le q^*$. In particular, if $q^* < \mathbf{1}$, then $q^*$ is the only fixed-point $q$ of $x = P(x)$ with $q < \mathbf{1}$.
	\label{lemma:5.5}
\end{lemma}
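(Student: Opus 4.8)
The plan is to establish the two inequalities separately; the ``in particular'' clause is then immediate, since a fixed point $q' < \mathbf{1}$ satisfies both $P(q') = q' \le q'$ and $P(q') = q' \ge q'$, so the two inequalities give $q' \ge q^*$ and $q' \le q^*$, i.e.\ $q' = q^*$.

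The inequality $P(y) \le y \Rightarrow y \ge q^*$ is the easy direction and in fact uses neither the LDF hypothesis nor $y < \mathbf{1}$: since $P$ is monotone and (being polynomial in each coordinate) continuous on $[0,1]^n$, from $\mathbf{0} \le y$ we get $P^k(\mathbf{0}) \le P^k(y) \le y$ for every $k$ --- the last inequality by an easy induction using $P(y) \le y$ --- and $P^k(\mathbf{0}) \nearrow q^*$, hence $q^* \le y$.

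The inequality $P(y) \ge y$ with $y < \mathbf{1} \Rightarrow y \le q^*$ is where both hypotheses are essential: e.g.\ for $x = x$, which is a linear-degenerate and hence non-LDF PPS, every point is a fixed point, so $P(y) \ge y$ holds for all $y < \mathbf{1}$ while $q^* = \mathbf{0}$. First I would pass to the fixed point $\hat{q} := \lim_k P^k(y)$: since $P(y) \ge y$ the sequence $P^k(y)$ is nondecreasing and bounded by $\mathbf{1}$, so $\hat{q}$ is a fixed point with $\hat{q} \ge y$, and it suffices to show $\hat{q} = q^*$ (the inequality $\hat q \ge q^*$ being automatic as $q^*$ is the LFP). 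Suppose $\hat q \neq q^*$ and put $D := \{\, i : \hat{q}_i > q^*_i \,\} \neq \emptyset$. Because $\hat q \ge q^*$ and each $P_i$ is monotone, $i \in D$ forces the strict increase $P_i(\hat q) > P_i(q^*)$ to pass through some coordinate that strictly increased: for a Form L variable because $P_i(\hat q) - P_i(q^*)$ is a nonnegative combination $\sum_j a_{ij}(\hat q_j - q^*_j)$, and for a Form Q variable $x_i = x_j x_k$ because a strict increase of a product of nonnegative reals forces a strict increase of a factor. Thus every vertex of $D$ has an out-edge of the dependency graph into $D$, so $D$ contains a bottom SCC of the subgraph it induces. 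On that SCC I would run a maximum-gap argument: with $M := \max_i (\hat q_i - q^*_i)$ over the SCC, attained at $i^*$, the equation for $i^*$ forces $M$ to be attained again at the successors of $i^*$ relevant to that equation, and forces $i^*$, if of Form L, to be \emph{conservative} (zero constant term, coefficients summing to $1$) and hence \emph{not} to satisfy any of the three ``good-variable'' conditions of Lemma~\ref{lemma:5.1}(ii). Iterating, the set $A$ of maximum-attaining vertices is closed under out-edges and contains no good variable; being a closed set of conservative linear equations, $A$ would contain a bottom SCC of the whole dependency graph that is linear-degenerate, contradicting that $x = P(x)$ is LDF.

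The step I expect to be the main obstacle is pushing this propagation through the Form Q (degree-$\ge 2$) vertices. For linear vertices the propagation and the ``not good'' conclusion are immediate, and once we are inside a strongly connected set of conservative linear vertices with no good variable we have exactly a linear-degenerate BSCC and are finished; the difficulty is entirely at a vertex $x_i = x_j x_k$, where the decomposition $\hat q_j \hat q_k - q^*_j q^*_k = \hat q_k(\hat q_j - q^*_j) + q^*_j(\hat q_k - q^*_k)$ delivers the desired conclusion only together with an inequality of the form $\hat q_j + q^*_k \ge 1$. This is precisely where $y < \mathbf{1}$ must be used: one has to separate the coordinates at which $\hat q$ (equivalently, already $y$) is strictly below $1$ from those at which $\hat q_i = 1$ --- the latter forming a successor-closed set consisting only of conservative-linear and Form Q equations, on which one recurses into a bottom SCC --- and exploit that a product of two reals strictly below $1$ is strictly smaller than each of them, so as to drive the extremal quantity down to $0$, whence $y$ itself vanishes on those coordinates and is trivially $\le q^*$ there. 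Organizing this case analysis as an induction on the SCC structure of the dependency graph, with the conservative-linear BSCC case furnishing the contradiction via LDF, is the technical core of the proof.
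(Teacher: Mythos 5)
You should first note that the paper does not actually prove this lemma: it is imported verbatim from \cite{ESY-icalp15-IC} (Lemma 5.5 there), so your proposal has to stand on its own. The easy half ($P(y)\le y\Rightarrow y\ge q^*$), the ``in particular'' clause, the reduction of the hard half to showing $\hat q:=\lim_k P^k(y)=q^*$, and the observation that every vertex of $D=\{i:\hat q_i>q^*_i\}$ has an out-edge back into $D$ are all correct. But the hard half is not proved: precisely at the point you yourself flag as ``the main obstacle'' --- propagation through Form Q vertices --- the argument stops being a proof and becomes a plan, and the plan as stated does not work. Even when all relevant coordinates are strictly below $1$, the additive maximum gap $M=\max_i(\hat q_i-q^*_i)$ simply does not propagate through $x_i=x_jx_k$: at a maximizing Q vertex the decomposition only yields an inequality like $\hat q_j+q^*_k\ge 1$, which leads nowhere, so the set $A$ of maximizers need not be closed under out-edges and need not consist of conservative linear equations. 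What is needed is a different extremal quantity (for instance the normalized gap $(\hat q_i-q^*_i)/(1-q^*_i)$, for which one can show that a Form Q vertex whose children both have values $<1$ can never attain the maximum); your proposal never identifies any such quantity, and without it the entire third paragraph collapses at the first quadratic vertex.

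The sketched treatment of the coordinates with $\hat q_i=1$ is also flawed. That set is successor-closed and non-deficient, but it can contain linear equations with constant terms (e.g.\ $x_i=\frac{1}{2}x_j+\frac{1}{2}$), not only ``conservative-linear and Form Q'' ones. More importantly, when a Form Q vertex has a child with value $1$, the resulting ``effectively linear-degenerate'' behaviour does not correspond to a linear-degenerate bottom SCC of the actual dependency graph, so the contradiction you are aiming for (``$A$ would contain a bottom SCC of the whole dependency graph that is linear-degenerate'') is unavailable there: the system $x_1=x_1x_2$, $x_2=\frac{1}{2}x_2^2+\frac{1}{2}$ is LDF with $q^*=(0,1)$, and the only thing that rules out $y_1>0$ is the hypothesis $y<\mathbf{1}$ used directly via $y_1\le y_1y_2<y_1$ --- no structural degeneracy is present. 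Your closing assertion that on such coordinates ``$y$ itself vanishes \ldots and is trivially $\le q^*$ there'' is exactly what would have to be proved on those coordinates, and it is asserted, not derived. So the technical core of the lemma --- the quadratic case and its interaction with the value-$1$ coordinates, which is where the hypotheses $y<\mathbf{1}$ and LDF must be combined quantitatively --- is missing; what you have is a correct setup plus an honest description of the hole.
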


\begin{lemma}
	[cf. \cite{ESY-icalp15-IC}, Lemma 9.1] For a minimax-PPS, $x = P(x)$, if the GFP $g^* < \mathbf{1}$, then:
 	\begin{enumerate}
 		\item there exists a (mixed) LDF policy $\tau$ for the min player such that $g_{*,\tau}^* < \mathbf{1}$.
 		\item for any LDF min player's policy $\tau'$, it holds that $g^* \le q_{*,\tau'}^*$.
 	\end{enumerate}
 	\label{lemma:LDFpolicy}
\end{lemma}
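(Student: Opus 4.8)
The plan is to obtain both parts from Proposition~\ref{prop:QualNonReach} (the value$=0$ algorithm) together with the structural facts Lemma~\ref{lemma:5.2} and Lemma~\ref{lemma:5.5}, via two elementary monotonicity observations. First, for any max policy $\sigma$ and any (possibly randomized) min policy $\tau$, the PPS operator satisfies $P_{\sigma,\tau}(x) \le P_{*,\tau}(x)$ for all $x\in[0,1]^n$: each coordinate $P_{\sigma,\tau}(x)_i$ is a convex combination, over the max player's rows $a_{max}$, of the expected payoffs $\sum_b \tau(x_i,b)\,q_{i,a_{max},b}(x)$ of those rows against $\tau$, hence at most the largest such payoff, which is exactly $P_{*,\tau}(x)_i$. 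By monotonicity of fixed points this gives $g^*_{\sigma,\tau}\le g^*_{*,\tau}$ and $q^*_{\sigma,\tau}\le q^*_{*,\tau}$. Second, for any fixed min policy $\tau$ we have $P(x)\le P_{*,\tau}(x)$ pointwise, since $Val(A_i(x))\le\max_{a_{max}}\sum_b\tau(x_i,b)\,q_{i,a_{max},b}(x) = P_{*,\tau}(x)_i$: against the fixed column strategy $\tau(x_i)$ the row player can always guarantee at least the game value, and a pure best response attains the maximum on the right.

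For part~(1): since $g^* < \mathbf{1}$, the algorithm of Proposition~\ref{prop:QualNonReach}, run on $x=P(x)$, places every variable into its computed set $S$ and returns a single mixed static min policy $\tau$ with $(g^*_{*,\tau})_i < 1$ for all $i$, i.e.\ $g^*_{*,\tau} < \mathbf{1}$. It remains only to check that this $\tau$ is LDF. For an arbitrary max policy $\sigma$, the PPS $x = P_{\sigma,\tau}(x)$ has GFP $g^*_{\sigma,\tau} \le g^*_{*,\tau} < \mathbf{1}$, so by Lemma~\ref{lemma:5.2} it is an LDF-PPS; as this holds for every $\sigma$, $\tau$ is an LDF policy, proving~(1). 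For part~(2): let $\tau'$ be an arbitrary LDF min policy. From $P(g^*) = g^*$ and $P(x) \le P_{*,\tau'}(x)$ we get $g^* \le P_{*,\tau'}(g^*)$. Let $\sigma_0$ be the deterministic max policy that at each variable $x_i$ selects a row attaining $\max_{a_{max}} \sum_b \tau'(x_i,b)\,q_{i,a_{max},b}(g^*)$; then $P_{\sigma_0,\tau'}(g^*) = P_{*,\tau'}(g^*) \ge g^*$. Because $\tau'$ is LDF, $x = P_{\sigma_0,\tau'}(x)$ is an LDF-PPS, and since $g^* < \mathbf{1}$ with $P_{\sigma_0,\tau'}(g^*) \ge g^*$, Lemma~\ref{lemma:5.5} yields $g^* \le q^*_{\sigma_0,\tau'}$. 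Finally $q^*_{\sigma_0,\tau'} \le q^*_{*,\tau'}$ (LFPs respect the pointwise order $P_{\sigma_0,\tau'} \le P_{*,\tau'}$), so $g^* \le q^*_{*,\tau'}$, as required.

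The delicate step is part~(2): rather than hunting for a max policy that is globally optimal for the LFP of the maxPPS $x = P_{*,\tau'}(x)$ --- whose existence, let alone deterministic form, would need separate justification --- it suffices to take the max policy $\sigma_0$ that is merely a best response to $\tau'$ evaluated at the single point $g^*$, and then to invoke the ``$P(y)\ge y \Rightarrow y \le q^*$'' direction of Lemma~\ref{lemma:5.5} for the LDF-PPS $P_{\sigma_0,\tau'}$. Part~(1) is comparatively routine once one notices that $g^*_{*,\tau} < \mathbf{1}$ upgrades ``for free'' to LDF-ness of every $P_{\sigma,\tau}$ through Lemma~\ref{lemma:5.2}, so no separate analysis of the structure of the policy produced by Proposition~\ref{prop:QualNonReach} is needed.
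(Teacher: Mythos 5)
Your proof is correct and follows essentially the same route as the paper: part (1) via the policy returned by Proposition~\ref{prop:QualNonReach} plus Lemma~\ref{lemma:5.2}, and part (2) by fixing a max policy that attains $P_{*,\tau'}(g^*)$ (a best response to $\tau'$ at the point $g^*$, which the paper phrases in SNF form as picking the successor maximizing $g^*_i$) and then applying Lemma~\ref{lemma:5.5} to the LDF-PPS $x=P_{\sigma,\tau'}(x)$ with $y=g^*$, followed by $q^*_{\sigma,\tau'}\le q^*_{*,\tau'}$. No substantive differences.
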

\begin{proof}
For the first point, recall that since $g^* < \mathbf{1}$, the algorithm from the previous section will return a mixed static strategy(policy) $\tau$ for the min player such that $g_{*, \tau}^* < \mathbf{1}$. Thus for all max's strategies $\sigma: g_{\sigma, \tau}^* \le \sup_{\pi \in \Psi_1} g_{\pi, \tau}^* = g_{*, \tau}^* < \mathbf{1}$. By Lemma \ref{lemma:5.2}, all PPSs, $x = P_{\sigma, \tau}(x)$, are LDF, which results in the policy $\tau$ being LDF as well. 

Showing the second claim, let us fix any LDF policy $\tau'$ for the
min player. Notice that $g^* = P(g^*) = \inf_{\pi} P_{*, \pi}(g^*) \le
P_{*, \tau'}(g^*)$. In the resulting maxPPS, there exist a strategy
$\sigma$ for the max player such that $g^* \le P_{\sigma, \tau'}(g^*)
= P_{*, \tau'}(g^*)$. For every variable $x_i$ with $g_i^* = \max
\{g_1^*, \dots , g_{d_i}^*\}$ in the maxPPS, the strategy itself
chooses the successor in the dependency graph that maximizes
$g_i^*$. Now using Lemma \ref{lemma:5.5} with LDF-PPS $x = P_{\sigma,
  \tau'}(x)$ and $y := g^* < \mathbf{1}$, it follows that $g^* \le
q_{\sigma, \tau'}^* \le \sup_{\pi \in \Psi_1} q_{\pi, \tau'}^* = q_{*,
  \tau'}^*$.
\end{proof}

We now present the algorithm.  First, as a preprocessing step, we apply
the algorithm of Figure \ref{fig:QualNonReach}, which identifies in
P-time all the variables $x_i$ where $g_i^* = 1$. We 
then remove these
variables from the system, substituting the value 1 in their place.
We then simplify and reduce the resulting SNF-form
minimax-PPS into a reduced form, with GFP $g^* <1$. 
Note that the resulting reduced SNF-form minimax-PPS may contain some variables
$x_j$ of form M, whose corresponding matrix $A_j(x)$ has some
entries that contain the value 1 
rather than a variable (because we substituted 1 for removed 
variables $x_j$, where $g^*_j = 1$). 
Note also that in the reduced SNF-form 
minimax-PPS each variable $x_i$ of form Q has
an associated quadratic equation $x_i = x_j x_k$,
because if one of the variables (say $x_k$) on the right-hand side was set to $1$
during preprocessing, the resulting equation ($x_i  = x_j$) would have been declared
to have form L in the reduced minimax-PPS.
We 
henceforth assume that the minimax-PPS is in SNF-form, with $g^* < 1$, and we let $X$ be
its set of (remaining) variables.  We apply now the algorithm of
Figure \ref{fig:Qual-AS-Reach} to the minimax-PPS with $g^* <1$, which
identifies the variables $x_i$ in the minimax-PPS (equivalently, the
types in the BCSG), from which we can almost-surely reach the target
type $T_{f^*}$ (i.e., $g_i^* = 0$ {\em and} there is a strategy
$\tau^*$ for the player minimizing non-reachability probability that
achieves this value, no matter what the other player does).

\begin{figure}[ht]
	\begin{enumerate}
		\item Initialize $S := \{x_i \in X \mid P_i(\mathbf{0}) > 0$, that is $P_i(x)$ has a constant term $\}$. \\
			  Let $\gamma_0^i := \Gamma_{min}^i$ for every variable $x_i \in X - S$.  Let $t:=1$.
		\item Repeat until no change has occurred to $S$:
		\begin{enumerate}
                \item if there is a variable $x_i \in X - S$ of form L
                  where $P_i(x)$ contains a variable already in $S$, then
                  add $x_i$ to $S$.
                \item if there is a variable $x_i \in X - S$ of form Q
                  where both variables in $P_i(x)$ are already in $S$,
                  then add $x_i$ to $S$.
                \item if there is a variable $x_i \in X - S$ of form M
                  and if for all $a_{min} \in \Gamma_{min}^i$,
                  there exists a $a_{max} \in
                  \Gamma_{max}^i$ such that $A_i(x)_{(a_{max},a_{min})} \in S
\cup \{1\}$,
                  then add $x_i$ to $S$.
		\end{enumerate}
              \item For each $x_i \in X-S$ of form M, let:\\
                $\gamma_t^i := \{a_{min} \in \gamma_{t-1}^i \;|\;
                \forall a_{max} \in \Gamma_{max}^i, \; 
A_i(x)_{(a_{max},a_{min})} \not\in S \cup \{1\}\}$.  \  (Note that $\gamma_t^i
                \subseteq \gamma_{t-1}^i$.)
		\item Let $F := \{x_i \in X-S \;|\; P_i(\mathbf{1}) < 1$, or $P_i(x)$ is of form Q $\}$
		\item Repeat until no change has occurred to $F$:
		\begin{enumerate}
                \item if there is a variable $x_i \in X-(S \cup F)$ of
                  form L where $P_i(x)$ contains a variable already in $F$,
                  then add $x_i$ to $F$.
                \item if there is a variable $x_i \in X-(S \cup F)$ of form M
                  such that for $\forall a_{max} \in \Gamma_{max}^i$,
                  there is a min player's action $a_{min} \in
                  \gamma_t^i$ such that
                  $A_i(x)_{(a_{max}, a_{min})} \in F$, then add $x_i$ to
                  $F$.
		\end{enumerate}
		\item If $X = S \cup F$, {\bf return} $F$, and halt.
		\item Else, let $S := X - F$,  $t := t+1$,  and go to step 2.
	\end{enumerate}
	\caption{P-time algorithm for computing almost-sure reachability 
types $\{x_i \;|\; \exists \tau \in \Psi_2 \; (g_{*, \tau}^*)_i = 0\}$ for a minimax-PPS (in SNF), associated with a given BCSG.}
	\label{fig:Qual-AS-Reach}
\end{figure}

\begin{theorem}
Given a BCSG with minimax-PPS, $x=P(x)$, such that 
the GFP $g^* < \mathbf{1}$, the algorithm in 
Figure \ref{fig:Qual-AS-Reach} terminates in polynomial time and
returns the following set of variables:\\ $\{x_i \in X \;|\; 
\exists \tau \in \Psi_2 \; (g_{*, \tau}^*)_i = 0\}$.
	\label{theorem:Qual-AS-Reach}
\end{theorem}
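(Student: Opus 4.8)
The plan is to establish the theorem by proving the correctness of each phase of the algorithm separately, and then arguing that the outer loop (step 7, which reiterates after shrinking $S$) stabilizes in polynomially many rounds. First I would fix terminology: call a variable $x_i$ \emph{good} if $\exists \tau \in \Psi_2$ with $(g^*_{*,\tau})_i = 0$, i.e., the minimizer of non-reachability (= maximizer of reachability) has a static policy forcing reachability of $T_{f^*}$ from a single $T_i$ with probability $1$. The goal is to show the returned set $F$ equals the set of good variables.

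For one direction — every $x_i \in F$ is good — I would proceed by induction on the order in which variables enter $S$ and then $F$, exhibiting the witnessing min-player policy $\tau^*$ explicitly. The set $S$ collects variables that are ``hopeless for the minimizer of non-reachability'' in the sense that the \emph{max} player (of non-reachability) can force staying inside a region where the all-$1$ vector persists as a fixed point of the relevant sub-minPPS; so for $x_i \notin S$ we restrict the min player's action set to $\gamma_t^i$, which by construction keeps every matrix entry \emph{out of} $S \cup \{1\}$ regardless of the max player's action. Inside $X - S$ with actions restricted to $\gamma_t^i$, the set $F$ is exactly an AND-OR reachability set toward ``forcing moves'' (form-Q variables, form-L variables with deficient constant, and form-M variables where, against every $a_{max}$, some $a_{min}\in\gamma_t^i$ leads into $F$). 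I would show that the min policy $\tau^*$ which (a) on form-M variables in $F$ picks, against the max player's realized column choices, a witness $a_{min}\in\gamma_t^i$ driving toward $F$, and (b) elsewhere uses the LDF-type policy guaranteed by Lemma~\ref{lemma:LDFpolicy}, yields $(g^*_{*,\tau^*})_i = 0$. The key quantitative step is: in $x = P_{\sigma,\tau^*}(x)$ for any max strategy $\sigma$, the GFP restricted to $F$-variables is $< \mathbf 1$ (because form-Q and deficient-form-L variables strictly decrease the product/sum, and the AND-OR structure propagates this strict decrease to all of $F$), hence by Lemma~\ref{lemma:5.5} the \emph{only} fixed point below $\mathbf 1$ is $q^*$, and one shows $q^*_F = \mathbf 0$ on $F$; a standard branching-process ``bounded-below-probability, finitely-many-types'' argument then upgrades $q^*_i = 0$ to actual almost-sure reachability under the non-static, memoryful play of $\tau^*$ responding to $\sigma$'s columns.

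For the converse — every good $x_i$ lies in $F$ — I would show the contrapositive: if $x_i \notin F$ at termination, then $x_i \in S$ at termination (since termination means $X = S \cup F$), and then exhibit a max (of non-reachability) strategy $\sigma$ forcing $(g^*_{\sigma,\tau})_i > 0$ for \emph{every} $\tau$. Here the structure of $S$ does the work: $S$ is closed under the rule ``form-M variable where every $a_{min}$ has \emph{some} $a_{max}$ leading into $S \cup \{1\}$'', so on each $x_i \in S$ of form M the max player has, against any fixed $a_{min}$, a response $a_{max}$ landing in $S \cup \{1\}$; combined with form-L variables in $S$ depending on $S$, and form-Q variables needing \emph{both} successors outside $S$, one argues the sub-minPPS obtained by this max policy restricted to $S$ has all-$1$ (after substituting $1$ for the entries ``$\in\{1\}$'') as a fixed point, so $(g^*_{\sigma,*})_i = 1$ there, i.e. reachability value $0$, contradicting goodness. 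The subtlety is the \emph{interaction} between $S$ and the $\gamma_t^i$ sets across outer iterations: once $S$ is reset to $X - F$ and $t$ incremented, previously available min-actions may be removed, so I would argue monotonically that $F$ only grows and $S$ only grows across outer rounds after the reset (since $\gamma^i_t$ shrinks monotonically and $X$ is finite), giving termination in $O(|X|\cdot \max_i |\Gamma^i_{min}|)$ rounds, hence polynomial time overall.

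The main obstacle I anticipate is the passage from the \emph{equational} fact $q^*_i = 0$ (or $g^*$-is-$<\mathbf 1$ on $F$) to the \emph{game-theoretic} fact that a concrete min-player strategy — necessarily non-static and history-dependent, reacting to the max player's concurrent column choices at each form-M node — achieves non-reachability value exactly $0$, i.e. forces reaching $T_{f^*}$ almost surely. This requires an argument in the style of \cite{ESY-icalp15-IC} and \cite{deAHK07}: the min player, by always responding to the realized max-action with a $\gamma_t^i$-witness into $F$, ensures that in each generation, conditioned on the history, there is a probability bounded below by some fixed $\delta > 0$ (depending only on the minimum rule probability and the finite type set) of ``making progress'' toward a forcing event within a bounded number of steps, from every live object; a Borel–Cantelli / martingale argument then gives that the target type is hit with probability $1$ on the sub-tree, uniformly over $\sigma$. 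Making this uniform over \emph{all} (including non-static, randomized) max strategies $\sigma$ — rather than just static ones — is the delicate point, and I would handle it by reducing to the worst case via the minimax structure of each one-shot matrix game $A_i(g^*)$ together with the fact established in Theorem~\ref{theorem:GFP-NonReach} that the max player (of non-reachability) has a static optimal response, so it suffices to bound $(g^*_{\sigma,\tau^*})_i$ over static $\sigma$.
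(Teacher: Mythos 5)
The converse direction is where your argument breaks. You claim that for $x_i \in S$ at termination the maximizer of non-reachability has a policy under which ``the sub-minPPS restricted to $S$ has all-$1$ as a fixed point, so $(g^*_{\sigma,*})_i = 1$, i.e.\ reachability value $0$''. This is false and cannot be repaired: after the preprocessing step the standing hypothesis is $g^* < \mathbf{1}$, so no remaining variable has non-reachability value $1$; worse, $S$ typically contains variables with $g^*_i = 0$ (limit-sure but not almost-sure types --- exactly the concurrent phenomenon this theorem must separate). The mechanism is flawed because at a form-M variable the response $a_{max}$ landing in $S \cup \{1\}$ depends on $a_{min}$, and in a concurrent game the maximizer cannot observe $a_{min}$ before moving; he can only hit $S\cup\{1\}$ with positive probability, so the all-$1$ vector is not a fixed point of the fixed-policy subsystem on $S$. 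What must actually be proved for $x_i \in S$ is the much weaker property the paper calls $(**)_i$: a single (necessarily non-static) max strategy $\sigma$ with $(g^*_{\sigma,\tau})_i > 0$ for \emph{every} $\tau$, where this probability is not bounded away from $0$ over $\tau$. The hard case is the set $K_t = X - (S_t \cup F_t)$ swept into $S$ in step 7; for these variables the paper builds $\sigma$ with depth-dependent probabilities $q_h = 2^{-1/2^h}$ on a witness action, so that if the minimizer stays inside $\gamma^i_t$ the play remains in $K_t$ forever with probability at least $\prod_h q_h \ge 1/4$, while any deviation outside $\gamma^i_t$ yields a positive (depth-dependent) chance of producing a child in $S_t$ or no child at all. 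Your proposal has no counterpart to this construction, and it is the crux of the proof --- precisely where the concurrent case departs from the BSSG case.

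The positive direction is closer in spirit to the paper (witnesses from $\gamma^i_{t^*}$ inside $F$, the LDF policy of Lemma~\ref{lemma:LDFpolicy} elsewhere, then infinitely many bounded-below chances to hit the target), but two steps are unsound as written. First, a min strategy that ``responds to the realized max-action with a $\gamma^i_t$-witness'' is not a legal strategy, since moves are simultaneous; the minimizer can only mix (the paper uses the uniform distribution over the witnesses $a_{min}[a_{max}]$, which gives probability at least $1/|\Gamma^i_{max}|$ of following an edge of a finite witness tree $R_i$, against any distribution of the maximizer). Second, reducing ``uniform over all max strategies'' to static $\sigma$ via Theorem~\ref{theorem:GFP-NonReach} is unjustified: that theorem provides a static optimal strategy for the maximizer of the game value (and against a fixed static min policy), not a static best response to the non-static $\tau$ you are constructing; the paper never needs such a reduction because its progress bounds hold pointwise against arbitrary max behavior, and it invokes Theorem~\ref{theorem:GFP-NonReach} only on worker subtrees where the minimizer plays the static LDF policy $\tau'$ throughout. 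Relatedly, your chain ``GFP of $P_{\sigma,\tau^*}$ restricted to $F$ is $< \mathbf{1}$, hence $q^*$ vanishes on $F$, hence almost-sure reachability'' skips the real work: a static $\tau^*$ in general does \emph{not} force non-reachability $0$ (the paper notes non-static play is inherent here), and the upgrade is carried out by the explicit queen/worker construction, not by a fixed-point identity.
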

\begin{proof}
  First, let us provide some notation and terminology for analyzing
  the algorithm.  The integer $t \geq 1$ represents the number of
  iterations of the main loop of the algorithm, i.e., the number of
  executions of steps 2 through 7 (inclusive; note that some of these
  steps are themselves loops).  Let $S_t$ denote the set $S$ inside
  iteration $t$ of the algorithm and just before we reach step 3 of
  the algorithm (in other words, just after the loop in step 2 has
  finished).  Similarly, let $F_t$ denote the set $F$ just before step
  6 in iteration $t$ of the algorithm.  We also define a new set,
  $K_t$, which doesn't appear explicitly in the algorithm.  Let $K_t
  := X - (S_t \cup F_t)$, for every iteration $t \geq 1$.  The set
  $\gamma_t^i$ in the algorithm denotes a set of moves/actions of the
  min player at variable $x_i$ (i.e. type $T_i$).\footnote{We shall
    show that $\gamma_t^i$, for $t \geq 1$, is a set of actions such
    that if the minimizer's strategy only chooses a distribution on
    actions contained in $\gamma^i_t$, for each variable $x_i$, then
    starting at any variable $x_j \in X - S_t$, the play will always
    stay out of $S_t$.}

  We now start the proof of correctness for the algorithm.
  Clearly, the algorithm terminates, i.e., step 6 eventually gets
  executed.  This is because (due to step $7.$) each extra 
iteration of the main 
loop must add at least one variable to the set $S \subseteq X$, and variables
are never removed from the set $S$.   It also follows easily that
the algorithm runs in P-time, since the main loop 
executes for at most $|X|$ iterations, and during each such 
iteration, each nested
loop within it also executes at most $|X|$ iterations.
So, the proof of correctness requires us to show that 
when the algorithm halts, the set
  $F$ is indeed the winning set for the minimizer (i.e., the player
  that aims to minimize the non-reachability probability). That is,
  we need to show that for all $x_i \in F$ 
there exists a (not-necessarily static) strategy $\tau$
for the minimizing player such that $(g^*_{*,\tau})_i = 0$, i.e.,
regardless of what strategy $\sigma$ the maximizer plays again $\tau$
the probability of {\em not} reaching
the target is $0$.     
On the other hand, if $x_i \in S$, 
we need to show that 
there is no such strategy $\tau$ for the minimizer
that forces $(g^*_{*,\tau})_i = 0$.  In fact, 
we will show that for all $x_i \in S$
the following stronger property $(**)_i$ holds:

\begin{itemize}
\item[$(**)_i$:]  There is a strategy $\sigma$ for the maximizing
player, such that for any strategy $\tau$ of the minimizing
player $(g^*_{\sigma,\tau})_i > 0$;  in other words, starting
with one object of type $T_i$, using strategy pair $\sigma$ and $\tau$,
there is a positive probability of never reaching the target type.
\end{itemize}
Note that property $(**)_i$ does not rule out that $g^*_i = 0$,
because even if $(**)_i$ holds it is
possible that $\inf_{\tau \in \Psi_2} (g^*_{\sigma,\tau})_i = 0$.

  First, let us show that if variable $x_i \in S$
when the algorithm terminates, then $(**)_i$ holds.
To show this,
  we use induction on the ``time'' when a variables is added to
  $S$. That is, if all variables $x_j$ added to $S$ in previous steps
  and previous iterations satisfy $(**)_j$, then if a new variable $x_i$ is
  added to $S$, it must also satisfy $(**)_i$. 
In the process of proving this, we shall in fact
  construct a single non-static randomized strategy $\sigma$ for the max
  player that ensures that for all $x_i \in S$, 
  regardless what strategy $\tau$ the min player plays 
  against $\sigma$, the probability of not reaching the target
  starting at one object of type $T_i$ 
  is positive.

Consider the initial set $S$ of variables $\{x_i \in X \mid 
P_i({\mathbf{0}}) > 0 \}$ that $S$ is initialized to in
  Step (1.) of the algorithm.  
Clearly all these variables satisfy $g_i^* \ge P_i(\mathbf{0})
  > 0$.  Thus, for these variables assertion $(**)_i$ holds using
{\em any} strategy $\sigma$ for the maximizer.
 Next consider a variable $x_i$ added to $S$ inside the loop in step (2.)
  of the algorithm, during some iteration.

\begin{enumerate}[label=(\roman*)]
\item If $x_i = P_i(x)$ is of form L, then $P_i(x)$ contains a
  variable $x_j$ (with a positive coefficient), that was added
  previously to $S$, and hence $(**)_j$ holds.  Thus there is a
  positive probability that one object of type $T_i$ will produce one
  object of type $T_j$ in the next generation.  It thus follows that
  $(**)_i$ holds, by using the same strategy $\sigma \in \Psi_1$ that
  witnesses the fact that $(**)_j$ holds.

\item If $x_i = P_i(x)$ is of form Q (i.e., $x_i = x_j \cdot x_r$),
  then $P_i(x)$ has both variables already added to $S$, i.e., $(**)_j$
  and $(**)_r$ both hold.  Then $(**)_i$ also holds, because starting
  from any object of type $T_i$, the next generation necessarily
  contains one object of type $T_j$ and one object of type $T_r$, and
  thus by combining the two witness strategies for $(**)_j$ and
  $(**)_r$, we have a strategy $\sigma \in \Psi_1$ that, starting from
  one object of type $T_i$, will ensure positive probability of not
  reaching the target, regardless of the strategy $\tau \in \Psi_2$ of
  the minimizer.

\item If $x_i = P_i(x)$ is of form M, then $\forall a_{min} \in
  \Gamma_{min}^i, \; \exists a_{max} \in \Gamma_{max}^i$ such that
  $A_i(x)_{(a_{max}, a_{min})} \in S \cup \{1 \}$.  
  In this case, let us define the
  strategy $\sigma$ to behave as follows at any object of type $T_i$:
  for each $a_{min} \in \Gamma_{min}^i$, we designate one ``witness''
  $a_{max}[a_{min}] \in \Gamma_{max}^i$, which witnesses that
  $A_i(x)_{(a_{max}[a_{min}], a_{min})} \in S \cup \{1\}$.  
  Then, at any object of type
  $T_i$, $\sigma$ chooses uniformly at random among the witnesses
  $a_{max}[a_{min}]$ for all $a_{min} \in \Gamma_{min}^i$.  So,
  starting with one object of type $T_i$, no matter what strategy the
  min player chooses, there is a positive probability that in the next
  step that object will either not produce any offspring (in the case 
where $A_i(x)_{(a_{max}[a_{min}], a_{min})} = 1$) and hence not reach the target, or else will
generate a single successor object of type
  $T_{(a_{max}[min],a_{min})}$, associated with variable
  $x_{(a_{max}[min],a_{min})}$ that already belongs to $S$, and hence such that
  $(**)_j$ holds. Hence, by combining with the strategies that witness
  such $(**)_j$ with the local (static) behavior of $\sigma$ described
  for any object of type $T_i$, we obtain a strategy $\sigma$ that
  witnesses the fact that $(**)_i$ holds.
\end{enumerate}

Now consider any variable $x_i$ that is added to $S$ in step (7.) of some
iteration $t$, in other words any variable $x_i \in K_t$. Since all
variables in $K_t$ were not added to $S_{t}$ or $F_t$ during iteration
$t$, we must have that: (A.) $x_i$ satisfies $P_i(\mathbf{1}) = 1$ and
$P_i(\mathbf{0}) = 0$; (B.) $x_i$ is not of $Q$ type; (C.) if $x_i$ is
of form L, then it depends directly only on variables in $K_t$; and
(D.) if $x_i$ is of form M, then

\begin{equation} 
\exists a_{max} \in
\Gamma_{max}^i \ \mbox{such that} \ \forall a_{min} \in \gamma_t^i, \;
A_i(x)_{(a_{max}, a_{min})} \not\in (F_t \cup S_t \cup \{ 1 \}) .
\label{formula-amax}
\end{equation}

Let $(q_h)_{h=0}^{\infty}, \; h \in \mathbb{N}$ be the infinite sequence of 
increasing probabilities defined by: $q_h = 2^{-(1/2^h)}$. Note that
as $h \rightarrow \infty$, the probability $q_h$ approaches $1$ from below.

Given a finite history $H$ of height $h$ (meaning the depth of the
forest that the history represents is $h$), for any object $e$ in the
current generation (the leaves) of $H$, if the object $e$ has type
$T_i$ such that the associated variable $x_i \in K_t$, we shall
construct the strategy $\sigma$ to behave as follows starting at the
object $e$.  The strategy $\sigma$ will choose one action $a_{max}$
that ``witnesses'' the statement (\ref{formula-amax}) above, and will
place probability
$q_h$ on that action, and it will distribute the remaining probability 
$1 - q_h$
uniformly among all actions in $\Gamma_{max}^i$.   We claim that
this strategy $\sigma$ ensures that for any object $e$ of type $T_i$
such that $x_i \in K_t$,  irrespective
of the strategy of the minimizing player, the probability of not reaching
the target type $T_{f^*}$ starting with $e$ (at any point in history)
is positive.   This clearly implies that 
$(g_{\sigma, *}^*)_{K_t} > \mathbf{0}$.   To prove this, there are two cases here:
\begin{enumerate}
\item First, suppose that during the entire play of the game,
at all objects $e$ whose type $T_i$ such that $x_i \in K_t$
has form $M$,
the min player only uses actions belonging to $\gamma_t^i$.
Then  in the resulting history
of play there {\em can not} be any such object $e$ whose 
child in the history 
(a necessarily unique child, since $e$ has type M)
is an object $e'$ of a type in $S_t$ (this is because step (3.)
of the algorithm, which defines $\gamma_t^i$, ensures that 
actions for the min player in $\gamma_t^i$ can not possibly produce
a child in $S_t$, no matter what the max player does).
Furthermore,  such an object $e$, occurring at 
depth $h$ in history, must with positive probability $\geq q_h$, 
produce a child $e'$ with a type in $K_t$ 
(because of point (D.) above, and because of the fact that the max player
 plays at $e$ a witness $a_{max}$ to the statement (\ref{formula-amax})
with probability $\geq q_h$).  

So consider an object $e$ of some type in $K_t$,
that occurs in a history $H$ at height $h \geq 0$,
and consider the tree of descendants of $e$.
What is the probability, under the strategy $\sigma$,
and under any strategy $\tau$ for the min player whose moves
are confined to the sets specified by $\gamma_t$, 
that the ``tree'' of descendants of $e$ 
is just a ``line'' consisting of   
an infinite sequence of objects $e_{0} = e$, $e_1$, $e_2$, $\ldots$,
all of which have types contained in $K_t$?
This probability is clearly  

$$\prod_{d=h}^{\infty}q_d = \prod_{d=h}^{\infty}2^{-(1/2^d)}  \geq
\prod_{d=0}^{\infty} 2^{-(1/2^d)} = 
 2^{- \sum^{\infty}_{d=0}  (1/2^d)}  =  2^{-2} =   \frac{1}{4}$$

 That is, irrespective of what strategy $\tau$ is played by the
minimizer,
there is 
positive probability bounded away from $0$ (indeed, $\geq 1/4$)
of staying forever confined in objects having types in
$K_t$.     In such a case, clearly, there will be positive
probability of not reaching the target type (since the types
in $K_t$ are not the target type).

\item Next suppose that, on the other hand,
there is a history $H$ of some height $h$ and
a leaf $e$ of $H$ that has type $T_i$ where $x_i \in K_t$,
such that the min player's strategy $\tau$ plays
at object $e$ some action(s) outside of the set
  $\gamma_t^i$ with positive probability. 
Note that for all actions $a'_{min} \not\in \gamma_t^i$, there is a max
player's action $a_{max} \in \Gamma_{max}^i$ such that 
$A_i(x)_{(a_{max}, a'_{min})}  \in S_t \cup \{1\}$.
Note moreover that the strategy $\sigma$ 
assigns positive probability, at least $(1-q_h)/|\Gamma^i_{max}|$
to every action in $\Gamma^i_{\max}$. 
Thus, %
if the min player's strategy $\tau$ puts positive
probability $\tau(H,e,a_{min}) > 0$ on some action $a_{min} \not\in \gamma^t_i$,
then with probability $\ge \big(\max_{a_{min}
    \not\in \gamma_t^i}{\tau(H,e, a_{min})} \big) \cdot 
\frac{(1 - q_h)}{|\Gamma_{max}^i|}$, 
either the object $e$ will have no child (since we can have 
$A_i(x)_{(a_{max},a_{min})} = 1$), or
the only child of object $e$ in the history
will be an object $e'$ whose type is in the set $S_t$, 
from which we already know
  that the target type $T_{f^*}$ is \textit{not} reached with positive
  probability.  So in either case, with positive probability 
the target type $T^*_{f^*}$
will not be reached from descendants of $e$.
\end{enumerate}

Now, let us assume the max player uses this strategy $\sigma$,
and suppose we start play at one object $e'$ of type $T_i$ 
such that $x_i \in K_t$.
Suppose, first, that during the entire history of play
the min player's strategy $\tau$ 
uses only  actions in $\gamma_t^i$ for all
variables $x_i \in K_t$ of form M.    In this case, 
with positive
probability bounded away from 0 (in fact $\geq 1/4$), the play tree
after $k$ rounds (i.e.,  depth $k$), for any positive $k \geq 1$, 
consists of simply a linear sequence of objects having types in $K_t$.
Thus in this case, with probability $\geq 1/4$, the play
will forever stay in $K_t$, and will
never reach target type $T_{f^*}$.   On the
other hand, suppose the min player's strategy $\tau$ does 
at some point
in some history consisting entirely
of a linear sequence of objects of types in $K_t$, 
namely at some specific object $e$ of type $K_t$ at depth $h$,
plays an action 
outside of
$\gamma_t^i$ with positive probability.  Then $\sigma$ ensures 
that with 
positive probability
(albeit a probability
depending on $h$ and thus not bounded away from $0$)
 either
$e$ will have no child or  
the unique child of $e$ 
will be an object of type $T_j$ such that $x_j \in S_t$, i.e., there is a 
positive probability 
of \textit{not} reaching the target
$T_{f^*}$ from the descendants of $e$, and thus also from the start
of the game (because we assumed the play staring from $e'$ and
up to $e$ consists of a linear sequence of objects all having
types in $K_t$).   Thus, 
for all strategies $\tau \in \Psi_2$,
and all $x_i \in K_t$, $(g_{\sigma, \tau}^*)_{i} > 0$.
Note however, that in general it may be the case that
$\inf_{\tau} (g^*_{\sigma,\tau})_i = 0$,  because in the
case when $\tau$ does play outside of $\gamma^i_t$,
the probability of not hitting the target type
is not bounded away from $0$  (it depends both on the depth
$h$ at which $\tau$ first moves outside of $\gamma^i_t$ with
positive probability, and it also depends on the probability
of that move, and for both reasons it can be arbitrarily close to $0$).
This establishes the first part of the proof, i.e., that
for every $x_i \in S$ the property $(**)_i$ holds.

\vspace*{0.1in}

Now we proceed to the second part of the proof.
Suppose $F$ is the set of variables output by the algorithm when it halts
(and that therefore $S = X - F$). 
Suppose the algorithm executed exactly $t^*$ iterations of the main loop
before halting (so that the value of $t$ just before halting is $t^*$).
 We will show that
there is a (randomized non-static) strategy $\tau$ of the minimizing player
such that, for all $x_i \in F$,   regardless what 
strategy $\sigma$ the maximizer employs, 
starting with on object of type $T_i$, the probability
of {\em not} reaching the target type is $0$.
In other words, that  $(g^*_{*,\tau})_i = 0$, which is what we want to prove.

Before describing $\tau$, we first describe a static 
randomized strategy (i.e., a policy) $\tau^*$ for the minimizing player, 
that will
eventually lead us toward a definition of $\tau$.

Specifically, we define the policy (randomized static strategy) $\tau^*$ 
as follows.
Let $\tau'$ be any LDF policy such that $g^*_{*,\tau'} < {\mathbf{1}}$.
Such an LDF policy $\tau'$ must exist, by Lemma 
\ref{lemma:LDFpolicy}(1.).  
For all variables $x_i \in S$, 
let 
$\tau^*(x_i)  := \tau'(x_i)$.
In other words, at all variables $x_i \in S$, 
let $\tau^*$ behave according to the exact same distribution on actions
as the LDF policy $\tau'$.
For every variable $x_i \in F$ of form M,  define $\tau^*$
as follows: note that $x_i$ must have entered $F$ in some
iteration of the inner loop in step (5.)(b) of the algorithm,
during the final iteration $t^*$ of the main loop.
Therefore, for all $a_{max} \in \Gamma^i_{max}$, there exists
a ``witness'' 
action $a_{min}[a_{max}] \in \gamma^i_{t^*}$ such that the associated 
variable $A_i(x)_{(a_{max},a_{min}[a_{max}])}$ was already in $F$, before $x_i$
was added to $F$.   For $x_i \in F$ we define the policy
$\tau^*$ at variable $x_i$, i.e., the distribution $\tau^*(x_i)$,
to be the uniform distribution over the set 
$\{ a_{min}[a_{max}] \in \gamma^i_{t^*} \mid  a_{max} \in \Gamma^i_{max} \}$
of such ``witnesses''.

We now wish to show that $\tau^*$, as defined, is itself an LDF 
policy.
Consider any fixed policy (i.e., static randomized strategy)
$\sigma$ for the max player, and consider the resulting system of
polynomial equations $x = P_{\sigma,\tau^*}(x)$.  For every variable
$x_i \in F$, 
consider the variables $x_i$ depends on
directly in the equation $x_i = (P_{\sigma,\tau^*}(x))_i$. 
Let's consider separately the cases, based on the form of 
equation $x_i = P_i(x)$:
(1) if $x_i = P_i(x)$ is of form L, then in $x_i = (P_{\sigma,\tau^*}(x))_i$
the variable $x_i$ depends
directly only on variables in $F$,
because otherwise it would have been added to set $S$;
(2) if $x_i$ is of form M,  then again it depends directly
only on variables
in $F$, because $\tau^*(x_i)$ only
puts positive probability on
actions in
$\gamma^i_{t^*}$; 
(3) if $x_i$ is of form Q, 
then $x_i$ depends directly on at least one variable in $F$, because
otherwise it would have been added to $S$. This implies that, in the
dependency graph of $x = P_{\sigma, \tau^*}(x)$, every variable in $F$
satisfies one of the three conditions in Lemma \ref{lemma:5.1}(ii)
(namely, 1. or 3.). So for every variable $x_i \in X$, 
consider paths in the dependency
graph of $x = P_{\sigma, \tau^*}(x)$ starting at $x_i$:

\begin{itemize}[topsep=0pt, parsep=0pt]
\item either there exists a path from $x_i$ in
this dependency graph to variable $x_j \in F$,
which in turn must have a path to a variable $x_{j'}$ 
such that either $P_{j'}({\mathbf{1}}) < 1$, or $x_{j'}$
has form $Q$.  In either case, this means that 
$x_i$ satisfies one of the conditions 
of Lemma \ref{lemma:5.1}(ii)  
(namely, either condition
(1.) or condition (3.));  Or
\item all paths from $x_i$ only contain variables in $S$. But 
for all variables $x_k \in S$,
$\tau^*(x_k)$ is  
exactly the same distribution as $\tau'(x_k)$,  and since
the LDF policy $\tau'$ was chosen so that
$g^*_{*,\tau'} < {\mathbf{1}}$, this 
means that there is a path from $x_i$ to a variable $x_j$ 
satisfying one
  of the three conditions in Lemma \ref{lemma:5.1}(ii)  (specifically,
condition (3.)).
\end{itemize}

Therefore, $x = P_{\sigma, \tau^*}(x)$ is a LDF-PPS. But since the fixed
strategy $\sigma$ was arbitrary, this implies that $\tau^*$ is indeed 
an LDF policy.
Since $\tau^*$ is LDF,
by Lemma \ref{lemma:LDFpolicy}(2.), it holds that $g^* \le q_{*,
  \tau^*}^*$. 
We now construct a {\em non-static} strategy $\tau$, which combines
the behavior of the 
two policies (i.e., two static strategies) $\tau'$ and $\tau^*$
in a suitable way, 
such that for all $x_i \in F$,
$(g^*_{*,\tau})_i = 0$.    In other words, $\tau$ will be a strategy
for the minimizer such that, no matter what strategy $\sigma$
the maximizer uses starting with one object
of type $T_i$, the probability of not reaching the target type is $0$.

The non-static strategy $\tau$ is defined as follows.
The strategy $\tau$ will, in each generation,
declare one object in the current generation to be
the ``queen'' (and this object will always have
a type in $F$).  Other objects in each generation
will be ``workers''.
Assume play starts
at a single object $e$ of some type $T_i$ such that $x_i \in F$.
We declare this object the ``queen'' in the initial population.
If the queen $e$ has associated variable $x_i$ of form $M$, then $\tau$
plays at $e$ according to distribution $\tau^*(x_i)$.   This results,
(with probability 1), regardless
of the strategy of the maximizer, in some successor object $e'$ 
in the next generation of type $T_j$
such that $x_j \in F$. In this case,
we declare $e'$ the queen in the next generation, 
and we apply the same strategy $\tau$ starting at the
queen $e'$ of the next generation, as if the game
is starting at this single object $e'$ of type $T_j$.  
If the variable $x_i$ associated with the queen $e$ is of form L, then in the next generation
either we hit the target (with probability $(1-P_i({\mathbf{1}}))$,
or (with probability $P_i({\mathbf{1}})$) 
we generate a single successor object $e'$ of some type $T_j$ such that $x_j \in F$.  
In this latter case again,
we declare  $e'$ the queen of the next generation, and 
we use the same strategy $\tau$ that is being defined, and
apply it to $e'$ as if the game is starting with the single
object $e'$.
If the queen $e$ has associated variable $x_i$ of form $Q$, 
then in the next generation there
are two successor objects, $e'$ and $e''$ of types 
$T_j$ and $T_k$ respectively (these may be the same type), 
such that either $x_j \in F$ or $x_k \in F$, or both are in $F$.
In this case, we choose one of the two successors whose 
type is in $F$, say wlog that this is $e'$, and
we declare $e'$ the queen of the next generation, 
we proceed from $e'$ using the same strategy $\tau$ that is being defined,
as if the game starts with the single object $e'$.
However, we declare the other object $e''$
a ``worker'', and starting with $e''$ and thereafter
(in the entire subtree of play rooted at $e''$) we use the static 
strategy (i.e., the LDF policy)  $\tau'$. 
This completes the definition of the non-static strategy $\tau$.

We now show that indeed $\tau$ satisfies that, no matter what strategy
$\sigma$ the maximizer uses against it, for any $x_i \in F$, starting
with one object of type $T_i$, the probability of not reaching the
target type is $0$.  In other words, we show that using $\tau$ the
probability of reaching the target type is $1$, no matter what
the opponent does.

To see this, first note that the LDF policy $\tau'$ was chosen
so that $g^*_{*,\tau'} < 1$.     Thus, 
since in the resulting max-PPS  $x=P_{*,\tau'}(x)$ the player
maximizing non-reachability probability always has a static
optimal strategy  (by Theorem \ref{theorem:GFP-NonReach}), it follows that
the subtree of the play rooted at any ``worker''
object $e''$ starting at which strategy $\tau'$ is applied by 
the min player, has positive probability $(1-g^*_{*,\tau'})_i > 0$
of eventually reaching the target type. 

Next note that the sequence of queens is finite if and only if we have
hit the target.
Next, we establish that if the sequence of queens
is infinite, then, with probability 1, infinitely often the queen
is of type Q and thus in the next generation it generates
both a queen and a worker.
Thus, because of the
infinite sequence of workers generated by queens,
there will be infinitely many independent chances
of hitting the target with probability at least 
$\min_i (1-g^*_{*,\tau'})_i$.
Hence,  we will hit the target (somewhere
in the entire tree of play) with probability 1.

It remains to show that, if the sequence of queens is infinite,
then, with probability 1, infinitely often a queen is of type Q. 
We in fact claim that with positive probability bounded
away from $0$, in the next $n = |X|$ generations 
either we reach a queen of type Q, or 
the queen has the target as a child.
To see this, we note that each type $x_i \in F$ has
entered $F$ in some iteration of the loop in step (5.) of the
algorithm (in the last iteration of the main loop).    
We can thus define inductively, for each
variable $x_i \in F$,  a finite tree $R_i$, rooted at $x_i$,
which shows ``why'' $x_i$ was added to $F$.
Specifically, if $P_i({\mathbf{1}}) < 1$ or $x_i$ has form Q,
then $R_i$ consists of just a single node (leaf) labeled by $x_i$.
If $x_i$ has form L, then it was added in step (5.) because 
$P_i(x)$ has a variable $x_j$ that was already in $F$.  In this case,
the tree $R_i$ has an edge from the root, labeled by $x_i$
to a single child labeled by $x_j$, such that this child
is the root of a subtree $R_j$.  If $x_i$ has form $M$
then $R_i$ has a root labeled by $x_i$ 
and has  children labeled by all variables $x_{(a_{max}, a_{min}[a_{max}])}
\in F$, and have $R_{((a_{max}, a_{min}[a_{max}])}$ as a subtree,
where $a_{max} \in \Gamma^i_{max}$ and where $a_{min}[a_{max}] \in 
\gamma^i_{t^*}$ is the ``witness'' for $a_{max}$, in the 
condition that allows step 5.(b) of the algorithm to add $x_i$ to $F$.

Clearly the tree $R_i$ is finite and has depth at most $n$ (since
there are only $n$ variables, and there is a strict order in
which the variables entered the set $F$).

Now we argue that starting at a queen of type $T_i$, using strategy $\tau$
for the minimizing player,  
with positive probability bounded away from $0$
in the next $n$ steps the sequence of queens will 
follow a root-to-leaf path in $R_i$, regardless of the strategy
of the max player.  To see this, note that if a node
is labeled by $x_j$  is of form L,
then the play will in the next step, with probability associated
with the transition in the BCSG move to the unique child (the
new queen) $x_{j'}$ that is the immediate child of the root in $R_j$,
and thus next will be at the root of the subtree $R_{j'}$.   If the node
is labeled by $x_j$ of form $M$, then irrespective of the 
distribution on actions
played by the max player, in the next step with positive probability
bounded away from $0$, we will move to a child $x_{a_{max},a_{min}[a_{max}]} \in F$
which is a child of the root in $R_j$,  
itself rooted at a subtree $R_{(a_{max},a_{min}[a_{max}])}$, 
because at queen objects we are using $\tau^*$ for the minimizer.
Thus, starting at a queen $x_i$,
with positive probability bounded away from $0$, within
$n$ steps the play arrives a leaf of the tree $R_i$.
If the leaf corresponds to a variable $x_j$ with $P_j({\mathbf{1}}) < 1$,
then the process will reach in the next step the target type with positive probability bounded away from $0$.
If, on the other hand, the leaf corresponds to a variable $x_j$ 
of form Q,  then the queen generates two children.  
The probability that the queen reaches infinitely often a leaf
of type L with $P_j({\mathbf{1}}) < 1$ but does not reach the target is 0.
Thus, if the queen never reaches the target throughout
the play, then the queen will generate more than one child
infinitely often with probability 1, and hence
will generate infinitely many independent workers with probability 1.
By the choice of the policy $\tau'$ followed by 
workers, the subtree rooted at each worker will hit the
target with positive probability bounded away from $0$.
Hence, the probability of hitting the target type is $1$. 
This completes the proof of the theorem.
\end{proof}

\vspace*{0.2in}

\begin{corollary}
Let $F$ be the set of variables output by the algorithm
in Figure \ref{fig:Qual-AS-Reach}.
\begin{enumerate}
\item Let $S = X-F$.
  There is a randomized 
non-static strategy $\sigma$ for the max
  player (maximizing non-reachability) such that for all $x_i \in S$,
and for all strategies $\tau$ of the min
player (minimizing non-reachability), starting with one object of type $T_i$, 
the probability of reaching the target type is $< 1$.
	\label{coroll:max-AS-OptPolicy}

\item There is a
  randomized non-static strategy $\tau$ for the min player (minimizing non-reachability), such that 
for all strategies $\sigma$ of the max player (maximizing non-reachability), 
and 
for all $x_i \in F$,
 starting at one object of type
  $T_i$ the probability of reaching the target
  type is $1$.
\label{corollary:min-AS-OptStrategy}
\end{enumerate}
\end{corollary}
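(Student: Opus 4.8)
The plan is to observe that both parts of the corollary are already established, \emph{constructively}, inside the proof of Theorem~\ref{theorem:Qual-AS-Reach}; the proof therefore consists of extracting the strategies built there and re-packaging the conclusions in the ``reachability'' phrasing of the corollary.

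For part (1), I would point to the first half of the proof of Theorem~\ref{theorem:Qual-AS-Reach}, where, by induction on the iteration at which a variable is added to $S$, we constructed a \emph{single} randomized non-static strategy $\sigma$ for the max player such that property $(**)_i$ holds simultaneously for every $x_i\in S = X-F$: for every strategy $\tau$ of the min player, starting from a single object of type $T_i$ there is positive probability of never generating the target type $T_{f^*}$, i.e.\ $(g^*_{\sigma,\tau})_i>0$. Since ``positive probability of never reaching $T_{f^*}$'' is the complement of ``reaching $T_{f^*}$ with probability $1$'', this is exactly the assertion of part (1). The one point to make explicit is that the construction really yields one $\sigma$ that works for all $x_i\in S$ at once: the local behavior of $\sigma$ at an object of type $T_i$ depends only on $i$ (and, for the $K_t$-variables of form M, on the height $h$ of the current history, via the probabilities $q_h$), and the witness strategies for variables added earlier are amalgamated into $\sigma$ once and for all.

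For part (2), I would similarly invoke the second half of the proof of Theorem~\ref{theorem:Qual-AS-Reach}. There we fixed an LDF policy $\tau'$ with $g^*_{*,\tau'}<\mathbf{1}$ (which exists by Lemma~\ref{lemma:LDFpolicy}(1)), defined the policy $\tau^*$ from the witnesses $a_{min}[a_{max}]\in\gamma^i_{t^*}$ of step (5)(b), showed $\tau^*$ is itself LDF, and then defined the non-static ``queen/worker'' strategy $\tau$: it plays $\tau^*$ along the distinguished line of ``queen'' objects (all of which stay in $F$) and switches to $\tau'$ on the subtree rooted at every ``worker''. The argument given there shows that, for every $x_i\in F$ and every max-player strategy $\sigma$, starting from one object of type $T_i$ the target is reached with probability $1$: the queen line is infinite iff the target is never hit, it passes through a form-Q variable infinitely often with probability $1$ (using the finite witness trees $R_i$ of depth $\le n$ to get a positive-probability-bounded-away-from-$0$ event every $n$ steps), so it spawns infinitely many independent workers, and each worker subtree hits the target with probability $\ge\min_i(1-g^*_{*,\tau'})_i>0$ by the choice of $\tau'$. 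Thus $(g^*_{*,\tau})_i=0$ for all $x_i\in F$, which is precisely part (2).

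I do not expect a genuine obstacle here, since everything needed was already proved constructively in Theorem~\ref{theorem:Qual-AS-Reach}. The only mild subtlety worth spelling out in the write-up is the uniformity of $\sigma$ (resp.\ $\tau$) across all types in $S$ (resp.\ $F$), together with the routine translation between the ``non-reachability value $>0$ / $=0$'' language of the theorem and the ``reachability probability $<1$ / $=1$'' language of the corollary statement.
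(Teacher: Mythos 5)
Your proposal is correct and coincides with the paper's own proof, which likewise just cites the single non-static strategy $\sigma$ (for all $x_i \in S$) and the queen/worker strategy $\tau$ (for all $x_i \in F$) constructed in the proof of Theorem~\ref{theorem:Qual-AS-Reach}. Your added remarks on the uniformity of the strategies across types and the translation between non-reachability and reachability phrasing are accurate elaborations of the same argument.
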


\begin{proof} 

1. The strategy $\sigma$ 
constructed in the
proof of Theorem \ref{theorem:Qual-AS-Reach}
for 
variables $x_i \in S$
achieves precisely this.
2. The strategy $\tau$ constructed
in the proof of Theorem 
\ref{theorem:Qual-AS-Reach}
for all variables $x_i \in F$ achieves precisely
this.
\end{proof}

\noindent {\bf Remark:} Neither the strategy 
$\sigma$  from Corollary 
\ref{coroll:max-AS-OptPolicy}, nor the 
strategy 
$\tau$ from \ref{corollary:min-AS-OptStrategy},
both of which were constructed in the proof of Theorem 
\ref{theorem:Qual-AS-Reach}, are {\em static} strategies.
However, we note that both of these non-static randomized 
strategies have suitable compact descriptions (as functions
that map finite histories to distributions over actions
for objects in the current populations), and that
both these strategies can be constructed and described
compactly in polynomial time,
as a function of the encoding size of the input 
BCSG.\footnote{However, it is worth pointing out that
the functions that these 
strategies compute, i.e., functions from histories to distributions,
need not themselves be polynomial-time as a function
of the encoding size of the history: this is because the probabilities
on actions that are
involved can be double-exponentially small (and double-exponentially 
close to 1), as a function of the size of the history.}

\section{P-time algorithm for limit-sure reachability for BCSGs}

In this section, we focus on the qualitative limit-sure reachability
problem, i.e., starting with one object of a type $T_i$, decide whether
the reachability value is 1. Recall that there may not exist an
optimal strategy for the player aiming to reach the target $T_{f^*}$,
which was the question in the previous section (almost-sure
reachability). However, there may nevertheless be a sequence of
strategies that achieve values arbitrarily close to 1 (limit sure
reachability), and the question of the existence of such a sequence is
what we address in this section. Since we translate reachability into
non-reachability when analysing the corresponding minimax-PPS, 
we are asking whether
there exists a sequence of strategies $\langle \tau_{\epsilon_j}^*
\mid j \in \nat \rangle$ for the min player, such that $\forall j \in
\nat$, $\epsilon_{j} > \epsilon_{j+1} > 0$, and where $\lim_{j
  \rightarrow \infty} \epsilon_j =0$, such that the strategy
$\tau_{\epsilon_j}^*$ forces non-reachability probability to be at
most $\epsilon_j$, regardless of the strategy $\sigma$ used by the max
player. In other words, for a given starting object of type $T_i$, we
ask whether $\inf_{\tau \in \Psi_2} (g_{*, \tau}^*)_i = 0$.

\begin{figure}[!htb]
	\begin{enumerate}
        \item Initialize $S := \{x_i \in X \;|\; P_i(\mathbf{0}) > 0$,
          that is $P_i(x)$ has a constant term $\}$.
        \item Repeat until no change has occurred to $S$:
		\begin{enumerate}
                \item if there is a variable $x_i \in X-S$ of form L where
                  $P_i(x)$ contains a variable already in $S$, then add
                  $x_i$ to $S$.
                \item if there is a variable $x_i \in X-S$ of form Q where
                  both variables in $P_i(x)$ are already in $S$, then
                  add $x_i$ to $S$.
                \item if there is a variable $x_i \in X-S$ of form M and if
                  for all $a_{min} \in \Gamma_{min}^i$, there 
                  exists $a_{max} \in \Gamma_{max}^i$ such
                  that $A_i(x)_{(a_{max},
                  a_{min})} \in S \cup \{1\}$, then add
                  $x_i$ to $S$.
		\end{enumerate}
              \item Let $F := \{x_i \in X-S \;|\; P_i(\mathbf{1}) <
                1$, or $P_i(x)$ is of form Q $\}$
              \item Repeat until no change has occurred to $F$:
		\begin{enumerate}
                \item if there is a variable $x_i \in X- (S \cup F)$ of form L
                  where $P_i(x)$ contains a variable already in $F$, then
                  add $x_i$ to $F$.
                \item if there is a variable $x_i \in X- (S \cup F)$ of form M
                  and if the following procedure returns ``Yes", then
                  add $x_i$ to $F$.
                  \begin{enumerate}
                  \item Set $L_0 := \emptyset, \; B_0 := \emptyset, \;
                    k := 0$. Let $O := X - (S \cup F)$.
                  \item Repeat:
				\begin{itemize}
                                \item $k := k + 1$.
                                \item $L_k := \{a_{min} \in
                                  \Gamma_{min}^i -
                                  \bigcup_{j=0}^{k-1}L_j \;|\; \forall
                                  a_{max} \in \Gamma_{max}^i -
                                  B_{k-1}, \; A_i(x)_{(a_{max}, a_{min})}
                                  \in F \cup O \}$.
                                \item $B_k := B_{k-1} \cup \{a_{max}
                                  \in \Gamma_{max}^i - B_{k-1} \;|\;
                                  \exists a_{min} \in L_k \; s.t. \;
                                  A_i(x)_{(a_{max}, a_{min})} \in F\}$.
				\end{itemize}
				Until $B_k = B_{k-1}$.
                              \item Return: ``Yes'' if $B_k =
                                \Gamma_{max}^i$, and ``No'' otherwise.
                           \end{enumerate}
               \end{enumerate}
                          \item If $X = S \cup F$, {\bf return} $F$, and
                            halt.
		\item Else, let $S: = X - F$, and go to step 2.
	\end{enumerate}
	\caption{P-time algorithm for computing the  
types that satisfy limit-sure reachability in a given BCSG,
i.e., the set of variables $\{x_i \;|\; g_i^* = 0\}$ in the associated
minimax-PPS.}
	\label{fig:Qual-LS-Reach}
\end{figure}

Again, as in the almost-sure case,
we first, as a preprocessing step,
use the P-time algorithm from Proposition
\ref{prop:QualNonReach}
to remove all variables $x_i$ such that
$g_i^* = 1$, and we substitute $1$
for these variables in the remaining equations.  
We hence obtain a reduced SNF-form minimax-PPS, for which we 
can assume $g^* < 1$.
The set of all remaining variables in the
SNF-form minimax-PPS is again denoted by $X$.
Thereafter, we apply 
the algorithm in 
Figure 
\ref{fig:Qual-LS-Reach}, which 
computes the set of variables, $x_i$,
such that $g^*_i = 0$.  In other words,
we compute the set of types, such that  starting 
from one object of that type the
value of the reachability game is $1$.   
Before considering the
algorithm in Figure \ref{fig:Qual-LS-Reach} in detail, we 
provide some preliminary results that will be used
to prove its correctness. More precisely, we first examine the
nested loop in step 4.(b) of the algorithm.
This inner loop is derived directly from a closely
related  ``limit-escape'' construction used by
de Alfaro, Henzinger, and Kupferman in 
\cite{deAHK07}.   
For completeness, we provide proofs here for the facts we need
about this construction.

For a variable $x_i$ of form M, for 1-step local
strategies $\sigma(x_i)$ and $\tau(x_i)$ at $x_i$ for the two players
(i.e., $\sigma(x_i)$ and $\tau(x_i)$ are distributions on $\Gamma^i_{max}$ and $\Gamma^i_{min}$, respectively), and for a 
set $W \subseteq X \cup \{1 \}$
which can include both variables and possibly also the constant 1, 
let us define:
\begin{align*}
  p(x_i \rightarrow W, \sigma(x_i), \tau(x_i)) = 
\sum_{ \{ (a_{max}, a_{min}) \in \Gamma_{max}^i \times \Gamma_{min}^i 
\mid A_i(x)_{(a_{max},a_{min})} \in W\} }  \sigma(x_i)
  (a_{max}) \cdot \tau(x_i)(a_{min}) 
\end{align*}
Thus   $p(x_i \rightarrow W, \sigma(x_i), \tau(x_i))$ denotes
the probability that, starting with one object of type $T_i$,
and using the 1-step strategies specified by $\sigma(x_i)$ and $\tau(x_i)$,
we will either generate a child object of type $T_j$ such that $x_j \in W$,
or (only if $1 \in W$) generate no child object (i.e., go extinct in the
next generation).

Assume that in step 4.(b) for a variable $x_i$ the loop stops at
some iteration $m$ (i.e., $B_{m-1} = B_m$), but $B_m
\stackrel{\subset}{\neq} \Gamma_{max}^i$, and hence step 4.(b) answers
``No'', and $x_i$ is not added to $F$. In such a case, let us define the
following 1-step strategy, $\sigma(x_i)$ for the max player which will
be used in the next lemma. 
Let $D^i_{max} := \Gamma_{max}^i -
B_{m}$. Let
\begin{align}
    \sigma(x_i)(a_{max}) := 
\begin{cases}
    \cfrac{1}{|D^i_{max}|} & \quad \text{for every } a_{max} \in D^i_{max} \\[1em]
    0 & \quad \text{otherwise}
\end{cases}
	\label{eq:strategy_sigma}
\end{align}

\begin{lemma}
Suppose that for a variable $x_i \in X - (S \cup F)$ 
the answer in step 4.(b) of the algorithm is ``No'', and
  let $\sigma(x_i)$ be defined as in (\ref{eq:strategy_sigma}). Then,
  there is a constant $c_i > 0$ such that for every local 1-step strategy $\tau(x_i)$ for the min
  player at $x_i$, the following inequality holds:
	\begin{align*}
		p(x_i \rightarrow S \cup \{1\}, \sigma(x_i), \tau(x_i)) \ge c_i*p(x_i \rightarrow 
(F \cup S \cup \{1\}), \sigma(x_i), \tau(x_i))
	\end{align*}
	\label{lemma:Ratio-LS-Converge}
\end{lemma}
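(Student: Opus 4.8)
The plan is to exhibit the constant $c_i$ explicitly in terms of the transition probabilities of the BCSG and the structure of the sets $B_0 \subseteq B_1 \subseteq \cdots \subseteq B_m$ and $L_1, L_2, \ldots$ produced by the inner loop in step 4.(b). The key observation is that $\sigma(x_i)$, as defined in (\ref{eq:strategy_sigma}), plays uniformly over $D^i_{max} = \Gamma^i_{max} - B_m$, so every action in $D^i_{max}$ gets probability exactly $1/|D^i_{max}| \geq 1/|\Gamma^i_{max}|$. First I would analyze, for an arbitrary local min-strategy $\tau(x_i)$, the event measured by the right-hand side: a pair $(a_{max}, a_{min})$ contributes to $p(x_i \rightarrow F \cup S \cup \{1\}, \sigma(x_i), \tau(x_i))$ only if $a_{max} \in D^i_{max}$ (since $\sigma(x_i)$ has support $D^i_{max}$) and $A_i(x)_{(a_{max},a_{min})} \in F \cup S \cup \{1\}$. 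The goal is to show that conditioned on such a contributing pair, with probability bounded below by a constant (over the randomness in $\sigma(x_i)$ and whatever $\tau(x_i)$ does), the entry actually lands in $S \cup \{1\}$ rather than in $F$.

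The crucial structural fact, which I would extract from the termination condition of the step-4.(b) loop, is this: since the loop stopped with $B_m \subsetneq \Gamma^i_{max}$, for every $a_{min} \in \Gamma^i_{min}$, if there exists some $a_{max} \in D^i_{max}$ with $A_i(x)_{(a_{max},a_{min})} \in F$, then $a_{min}$ cannot lie in any $L_k$ with $B_k$ "active" in a way that would have grown $B$; more directly, I would argue that whenever $a_{min}$ is such that some $a_{max} \in D^i_{max} = \Gamma^i_{max} - B_m$ has $A_i(x)_{(a_{max},a_{min})} \in F$, then $a_{min}$ must fall outside $\bigcup_j L_j$, and hence — by definition of the $L_k$'s and the fact that $a_{min} \notin L_k$ for all $k$ while $a_{max} \in \Gamma^i_{max} - B_{k-1}$ — there must exist some $a_{max}' \in D^i_{max}$ with $A_i(x)_{(a_{max}',a_{min})} \in S \cup \{1\}$ (an entry that is neither in $F$ nor in $O = X - (S \cup F)$). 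This dichotomy — "for each min-column, if $\sigma(x_i)$ can hit $F$ through that column, then $\sigma(x_i)$ can also hit $S \cup \{1\}$ through that same column" — is exactly what lets me compare the two probabilities column by column. Since $\sigma(x_i)$ is uniform on $D^i_{max}$, each such "good" row $a_{max}'$ carries weight at least $1/|\Gamma^i_{max}|$, so the $S\cup\{1\}$-mass in column $a_{min}$ is at least $(1/|\Gamma^i_{max}|)$ times the $\tau(x_i)(a_{min})$ weight, while the total $(F\cup S \cup\{1\})$-mass in that column is at most $\tau(x_i)(a_{min})$. Summing over columns and taking $c_i := 1/|\Gamma^i_{max}|$ (or a similarly explicit bound) gives the inequality uniformly in $\tau(x_i)$.

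The main obstacle I anticipate is pinning down precisely the combinatorial consequence of the loop's termination — i.e., proving rigorously the "dichotomy" claim above from the definitions of $L_k$ and $B_k$. One must track carefully which rows have already been absorbed into $B_{k-1}$ when column $a_{min}$ is (or is not) placed into $L_k$, and argue by induction on $k$ that a column through which an $F$-entry is reachable using a row outside $B_m$ must, at the stage it could first have been added, have had \emph{all} its rows outside $B_{k-1}$ mapping into $F \cup O$ fail — forcing some row to map into $S \cup \{1\}$ instead. Once that claim is nailed down, the probabilistic comparison is a routine column-by-column estimate, and the constant $c_i$ depends only on $|\Gamma^i_{max}|$ (it does not even need the transition probabilities of the BCSG, since the relevant entries in the SNF matrix $A_i(x)$ are single variables or the constant $1$). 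I would therefore state $c_i := 1/|\Gamma^i_{max}|$ and verify the inequality holds with this value.
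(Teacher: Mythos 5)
Your proposal is correct and is essentially the paper's own argument: the paper likewise exploits the termination of the step-4.(b) loop with $B_m = B_{m-1} \subsetneq \Gamma_{max}^i$ to show that every column in $\bigcup_{q} L_q$ sends each row of $D^i_{max}$ into $O$ (so such columns contribute nothing to the right-hand side), while every remaining column has some row of $D^i_{max}$ whose entry lies in $S \cup \{1\}$, giving the constant $c_i = 1/|D^i_{max}| \geq 1/|\Gamma_{max}^i|$, exactly as in your column-by-column comparison. The only point to make precise in your ``dichotomy'' claim is that the $S \cup \{1\}$-witness row should be extracted from the \emph{final} iteration $m$ --- where $B_{m-1} = B_m$ guarantees $\Gamma_{max}^i - B_{m-1} = D^i_{max}$, so the failure of the $L_m$-membership condition for a column lying outside $\bigcup_{j \le m} L_j$ yields a witness row that $\sigma(x_i)$ actually plays --- rather than from ``the stage it could first have been added,'' since a witness found at an earlier stage $k < m$ lies only in $\Gamma_{max}^i - B_{k-1}$ and might later be absorbed into $B_m$.
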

\begin{proof} Suppose
the loop from step 4.(b) stops at iteration $m$,
such that $B_{m-1} = B_m \subset \Gamma^i_{max}$.
There are two possibilities:
\begin{enumerate}
\item $L_m = \emptyset$: That is, 
for every $a_{min} \in \Gamma_{min}^i -
  \bigcup_{q=0}^{m-1}L_q$, there exists $a_{max} \in 
D^i_{max} = \Gamma_{max}^i -
  B_{m-1}$ such that $A_i(x)_{(a_{max}, a_{min})} \in S \cup \{1\}$. Let $\tau(x_i)$ be an
  arbitrary 1-step strategy for the min player and let $\sigma(x_i)$ 
be as defined in 
	\ref{eq:strategy_sigma}. Also let $D^i_{min} := \Gamma_{min}^i -
  \bigcup_{q=0}^{m-1}L_q$. Then it follows that:
	\begin{equation}
          p(x_i \rightarrow S \cup \{ 1 \}, \sigma(x_i), \tau(x_i)) \ge \sum_{a_{min} \in
            D^i_{min}} \frac{1}{|D^i_{max}|}\tau(x_i)(a_{min}) =
          \frac{1}{|D^i_{max}|} \sum_{a_{min} \in D^i_{min}} \tau(x_i)(
          a_{min})
\label{eq:low-S}
	\end{equation}
	Note that, by construction,
for all $a_{max} \in D^i_{max}$ and $a_{min} \in
       \bigcup_{q=0}^{m-1}L_q$, $A_i(x)_{(a_{max}, a_{min})} \in O$. 
Hence, since the support of distribution $\sigma(x_i)$ is 
$D^i_{max}$, and since $D^i_{min} = \Gamma_{min}^i -
  \bigcup_{q=0}^{m-1}L_q$, 
we have
\begin{equation} 
 p(x_i \rightarrow (F  \cup S \cup \{1\}), \sigma(x_i), \tau(x_i)) \le 
\sum_{a_{min} \in D^i_{min}}
\tau(x_i)(a_{min})
\label{eq:up-F}
\end{equation}

Combining these bounds, we get:
	\begin{eqnarray*}
		p(x_i \rightarrow S \cup \{1\}, \sigma(x_i), \tau(x_i)) & \ge &
\frac{1}{|D^i_{max}|} 
\sum_{a_{min} \in D^i_{min}} \tau(x_i)(a_{min})\\ & \ge & \frac{1}{|D^i_{max}|}p(x_i 
\rightarrow (F \cup S \cup \{1\}), \sigma(x_i), \tau(x_i))
	\end{eqnarray*}
      \item ${L_m \not= \emptyset}$, but 
$\{a_{max} \in D^i_{max} \;|\; \exists
        a_{min} \in L_m \; s.t. \; A_i(x)_{(a_{max}, a_{min})} \in F \} =
        \emptyset$. Therefore for all $a_{max} \in D^i_{max}$, and for all
        $a_{min} \in L_m$, $A_i(x)_{(a_{max}, a_{min})} \in O$. Let $\tau(x_i)$
        be any 1-step strategy for the min player, and 
let $\sigma(x_i)$ be as defined in \ref{eq:strategy_sigma}.
Let 
$D^i_{min} := \Gamma_{min}^i - \bigcup_{q=0}^{m}L_q$.
Note that if $D^i_{min} = \emptyset$, then 
	$p(x_i \rightarrow S \cup \{1\}, \sigma(x_i), \tau(x_i)) = 0
	= p(x_i \rightarrow (F \cup S \cup \{1\}), \sigma(x_i), \tau(x_i))$.
So, in this case, the lemma holds for any constant  $c > 0$.
If $D^i_{min} \neq \emptyset$, then 
both the inequalities (\ref{eq:low-S}) and 
(\ref{eq:up-F}) hold again, with the minor modification that now we have
$D^i_{min} = \Gamma_{min}^i - \bigcup_{q=0}^{m}L_q$ instead
of $D^i_{min} := \Gamma_{min}^i - \bigcup_{q=0}^{m-1} L_q$.
\end{enumerate}
Therefore, in both cases the lemma is satisfied with
$c_i := \frac{1}{|D^i_{max}|} = \frac{1}{|\Gamma_{max}^i - B_{m}|}$.
\end{proof}

We are now ready to prove correctness for the algorithm
in Figure \ref{fig:Qual-LS-Reach}.
\begin{theorem}
  Given a BCSG with minimax-PPS,  $x = P(x)$, with GFP $g^* < \mathbf{1}$,
the algorithm in Figure \ref{fig:Qual-LS-Reach}
  terminates in polynomial time, and returns the set of variables 
$\{x_i \in X \;|\; g_i^*  = 0\}$.
	\label{theorem:Qual-LS-Reach}
\end{theorem}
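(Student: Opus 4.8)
The plan is to follow the same two-sided template as the proof of Theorem~\ref{theorem:Qual-AS-Reach}. Termination and the polynomial bound are immediate: by step~7 each non-halting pass of the main loop strictly enlarges $S$, $S$ is never shrunk, and $|S|\le|X|$, so there are at most $|X|$ passes; the inner loops of step~2, step~4, and the nested limit-escape loop of step~4.(b) each terminate within $|X|$ (respectively $|\Gamma^i_{max}|+|\Gamma^i_{min}|$) iterations of polynomial-time checks. It then remains to prove the two inclusions: \textbf{(I)} if $x_i$ is in the final set $S=X-F$ then $g_i^*>0$, and \textbf{(II)} if $x_i$ is in the final set $F$ then $g_i^*=0$.

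For \textbf{(I)} I would build a single \emph{static} max-player strategy $\sigma^*$ and show $\inf_\tau(g^*_{\sigma^*,\tau})_i\ge\delta$ for a fixed $\delta>0$ and every $x_i$ in the final $S$. Partition the final $S$ into layers by when variables were added: the initial set $\{x_i:P_i(\mathbf{0})>0\}$ (base layer, where $g_i^*\ge P_i(\mathbf{0})>0$ since $g^*\ge P(\mathbf{0})$); the variables added inside step~2 of each pass (handled exactly as in Theorem~\ref{theorem:Qual-AS-Reach} — an L- or Q-type variable inherits a witness strategy from an earlier $S$-variable, and an M-type variable added in step~2.(c) plays uniformly over the witness actions $a_{max}[a_{min}]$, so against any min distribution it enters $S\cup\{1\}$ with probability $\ge 1/|\Gamma^i_{min}|$); and the "undecided" set $O_t=X-(S_t\cup F_t)$ merged into $S$ at the end of pass $t$. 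The new ingredient is the last case: for M-type variables in $O_t$, step~4.(b) returned "No", so Lemma~\ref{lemma:Ratio-LS-Converge} applies and $\sigma^*$ plays the distribution $\sigma(x_i)$ of~(\ref{eq:strategy_sigma}); conditioned on leaving $O_t$ in one step the play then enters $S_t\cup\{1\}$ rather than $F_t$ with probability $\ge c_i$. I also need that under $\sigma^*$ every $O_t$-variable leaves $O_t$ with probability bounded away from $0$ against any $\tau$: otherwise some communicating subclass of $O_t$ would be closed under $\sigma^*$ and consist only of L-type (degenerate-linear) variables and M-type variables whose rows are fixed by $\sigma^*$, forcing greatest fixed point $\mathbf{1}$ there and contradicting $g^*<\mathbf{1}$; finiteness of $O_t$, plus the fact that inside $O_t$ every object has exactly one (or no) child, upgrades this to a uniform positive bound. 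Composing the per-layer bounds down to the base layer (at most $|X|$ layers, each transition of probability $\ge$ a fixed constant) gives $g_i^*\ge\inf_\tau(g^*_{\sigma^*,\tau})_i\ge\delta>0$.

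For \textbf{(II)} I would fix $\epsilon>0$ and construct a \emph{static} policy $\tau_\epsilon$: on variables of the final $S$ let $\tau_\epsilon$ agree with an LDF policy $\tau'$ with $g^*_{*,\tau'}<\mathbf{1}$ (Lemma~\ref{lemma:LDFpolicy}(1)); on an M-type variable $x_i$ of the final $F$ let $\tau_\epsilon$ play the de~Alfaro--Henzinger--Kupferman limit-escape distribution over the layers $L_1,L_2,\dots$ from step~4.(b), with geometrically decreasing weights governed by a small parameter $p=p(\epsilon)$. From the step~4.(b) construction I would extract two facts: (i) a lemma dual to Lemma~\ref{lemma:Ratio-LS-Converge} — since $B_k$ eventually exhausts $\Gamma^i_{max}$, for \emph{every} max action the one-step probability of landing in a variable of $F$ that entered earlier is at least a fixed multiple of $p$ times the one-step probability of landing in $S\cup\{1\}$; equivalently the one-step leak rate into $S$ is $O(p)$ times the rate of strict progress; and (ii) the structural fact, proved as in the $R_i$-tree argument of Theorem~\ref{theorem:Qual-AS-Reach} (and using that in the final pass no $F$-variable depends directly on any $S$-variable), that every $x_i$ in the final $F$ carries a finite witness tree of depth $\le|X|$ whose leaves are Q-type variables or L-type variables $x_j$ with $P_j(\mathbf{1})<1$ (whose successors all lie in $F$), along which $\tau_\epsilon$ makes progress with probability bounded away from $0$ for fixed $\epsilon$. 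Combining these: under $\tau_\epsilon$, from $x_i$ in the final $F$ and against any max strategy, the distinguished "queen" object stays in $F$ except for possible leaks at M-type nodes, either reaches the target directly at a $P_j(\mathbf{1})<1$ leaf or spawns a "worker" at a Q-leaf, and — arguing as in the almost-sure case but now only \emph{conditioned on the queen never moving into $S$} — the target is reached with probability $1$ (direct hits at $P_j(\mathbf{1})<1$ leaves together with infinitely many independent workers, each reaching the target with probability $\ge 1-g^*_{*,\tau'}$ when it lands in $S$, or recursively when it lands in $F$). Since by (i) the leak rate is $O(p)$ times the progress rate and one can progress only boundedly often before reaching a leaf, the probability that the queen ever leaks into $S$ is $O(|X|\,p)$, so taking $p$ small gives $(g^*_{*,\tau_\epsilon})_i\le\epsilon$; as this holds for all $\epsilon>0$, $g_i^*=0$.

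The step I expect to be the real obstacle is the quantitative core of \textbf{(II)}: making the geometric limit-escape weighting precise enough to convert "$B_k$ exhausts $\Gamma^i_{max}$" into the uniform one-step estimate "leak rate $=O(p)\cdot$ progress rate" against arbitrary mixed opponent moves, and then correctly aggregating this single-variable bound against the branching/witness-tree dynamics to obtain the global $O(|X|\,p)$ bound on total leakage into $S$ — this is exactly where concurrency forces limit-sure rather than almost-sure behavior and where all the subtlety of \cite{deAHK07} enters. By comparison \textbf{(I)} is fairly routine given Lemma~\ref{lemma:Ratio-LS-Converge} and the template of Theorem~\ref{theorem:Qual-AS-Reach}, the only delicate point being the "no closed degenerate-linear subclass of $O_t$" argument, which is where the hypothesis $g^*<\mathbf{1}$ is used.
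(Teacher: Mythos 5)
Your overall two-sided template matches the paper's, and termination/P-time and the base and step-2 cases of part \textbf{(I)} are fine, but both halves as sketched have genuine problems. The decisive one is the quantitative core of \textbf{(II)}, which you yourself flag as the expected obstacle: it is precisely the content of the theorem, and the route you sketch would not go through as stated. First, a single small parameter $p(\epsilon)$ with geometric layer weights inside each matrix game is not enough. The paper's $\tau_\epsilon$ uses a \emph{per-variable} parameter $e_t$ for the $t$-th variable added to $F$, with $e_t=(e_{t-1})^{2N}$, i.e.\ doubly exponentially decreasing along the order of addition; the analysis crucially needs $e_t\le e_0\, s_{t-1}$, where $s_{t-1}$ is the product of the progress probabilities $r_j=e_j^{2N-1}$ of all earlier $F$-variables, so the leak allowed at a variable must be small relative to the already doubly-exponentially small probability of the entire progress chain below it; a uniform parameter cannot satisfy this. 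Second, your aggregation step (``one can progress only boundedly often before reaching a leaf, hence total leakage into $S$ is $O(|X|\,p)$'') is false for the actual dynamics: at an M-variable the safe distribution puts mass close to $1$ on actions whose entries may lie in variables added \emph{later} to $F$ (the set $O$ at processing time), and L- and Q-queens can likewise move to later variables, so the queen's trajectory is not monotone in the $F$-order, the number of progress steps along a play is unbounded, and a per-step relative bound does not sum to $O(|X|\,p)$ by counting progress steps. The paper avoids all trajectory bookkeeping: it fixes $\tau_\epsilon$, passes to the max-PPS $x=P_{*,\tau_\epsilon}(x)$, and proves by induction on the order of addition to $F$ the algebraic inequality $M_t\le a_tM^2+(1-a_t-g_t)M+g_t$ with $a_t\ge\lambda s_t$ and $g_t\le t\,e_0\,a_t/\lambda$, concluding from $(a_tM-g_t)(M-1)\ge 0$ and $M<1$ (Claim \ref{claim:less-1}) that $M\le\epsilon$. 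Some argument with this compounded per-variable schedule is needed; it is not recoverable from your one-step ``leak $=O(p)\cdot$ progress'' lemma plus a counting bound.

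In part \textbf{(I)}, the auxiliary claim you say you need --- that under $\sigma^*$ every $O_t$-variable leaves $O_t$ with probability bounded away from zero against every $\tau$ --- is both unnecessary and unprovable: against the uniform distribution on $D^i_{max}$ the minimizer can simply play actions in $\bigcup_q L_q$, under which all matrix entries seen lie in $O$, and thereby remain in $O_t$ forever. This does not contradict $g^*<\mathbf{1}$, since $g^*$ involves an infimum over $\tau$, and one strategy $\tau$ that yields non-reachability $1$ against $\sigma^*$ is irrelevant to it; your ``closed degenerate-linear subclass'' contradiction would need that \emph{all} $\tau$ are trapped, which is not the negation of your claim. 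The paper instead treats confinement to $K$ as a \emph{favorable} event for the maximizer: with $c=\min_i c_i$ from Lemma \ref{lemma:Ratio-LS-Converge} and Claim \ref{claim:r-step-reach-S-F}, either the probability of staying in $K$ forever is at least $c/2$, which already gives non-reachability at least $c/2$, or the play exits $K$ with probability more than $1/2$ and, conditioned on exiting, lands in $S\cup\{1\}$ with conditional probability at least $c$; either way $(g^*_{\sigma,\tau})_i\ge \frac{c}{2}\min\{b_j \mid x_j\in S\}>0$. With that replacement your part \textbf{(I)} becomes the paper's argument; as written, however, both the claim and its justification are incorrect.
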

\begin{proof} 

The fact that the algorithm terminates and runs in polynomial time
is again evident, as in case of the almost-sure algorithm.
(The only new fact to note is that the new inner loop in
 step 4.(b), can iterate at most $\max_i |\Gamma^i_{max}|$ times because with
 each new iteration, $k$, at least one action is added to the $B_{k-1}$,
 or else the algorithm halts.)

We need to show that when 
the algorithm terminates, for all $x_i \in F$,  $g^*_i = 0$,
and for all $x_i \in S = X - F$, $g^*_i > 0$.

Let us first show that for all $x \in S$,  $g^*_i > 0$.
In fact, we will show that there is a
strategy $\sigma \in \Psi_1$, and a vector $b > 0$ of
values, such that
for all $x_i \in S$,  $(g^*_{\sigma,*})_i \geq b_i > 0$.
For the base case, since any variable $x_i$ contained in $S$ at
the initialization step has $g_i^* \ge P_i(\mathbf{0}) > 0$, 
we have $(g^*_{\sigma,*})_i > P_i(\mathbf{0}) > 0$ 
for {\em any} strategy $\sigma$, so let $b_i := P_i(\mathbf{0})$. 
For the inductive step, first consider any variable $x_i$
added to $S$ in step 2, in some iteration of the main loop
of the algorithm.
\begin{enumerate}[label=(\roman*)]
\item If $x_i = P_i(x)$ is of form L, then $P_i(x)$ has a variable
  $x_j$ already in $S$, 
and by induction $(g^*_{\sigma, *})_j \geq b_j > 0$. Since $P_i(x)$ is
  linear, with a term $q_{i,j} \cdot x_j$,
such that $q_{i,j} > 0$, 
we see that $(g^*_{\sigma,*})_i \ge q_{ij}  \cdot b_j > 0$,
so let $b_i :=  q_{ij}  \cdot b_j$.
\item If $x_i = P_i(x)$ is of form Q (i.e., $x_i = x_j \cdot x_r$),
  then $P_i(x)$ has both variables previously added to $S$,
  i.e., $(g^*_{\sigma,*})_j \ge b_j > 0$ and $(g*_{\sigma,*})_r \ge b_r > 0$. 
Then clearly $(g^*_{\sigma,*})_i \geq b_j \cdot b_r > 0$.
 So let $b_i := b_j
  * b_r$.
\item If $x_i = P_i(x)$ is of form M, then $\forall a_{min} \in
  \Gamma_{min}^i$, $\exists a_{max} \in \Gamma_{max}^i$ such that
  $A_i(x)_{(a_{max}, a_{min})} \in S \cup \{1\}$. 
  For each $a_{min} \in \Gamma_{min}^i$, let us use
  $a_{max}[a_{min}] \in \Gamma_{max}^i$, to denote a ``witness''
  to this fact, i.e., such that $A_i(x)_{(a_{max}[a_{min}],a_{min})} \in S \cup \{1 \}$.
 Let strategy $\sigma$ 
  do as follows:  in any object of type $T_i$ corresponding to $x_i$,
  $\sigma$ selects uniformly at random an action from 
   the set  $\{ a_{max}[a_{min}] \in \Gamma_{max}^i \mid  
a_{min} \in \Gamma_{min}^i \}$ of all such witnesses. 
  Clearly then, for any $a_{min} \in \Gamma^i_{min}$,
  the probability that $\sigma$ at an object of type $T_i$ will
  choose the witness action $a_{max}[a_{min}]$ is 
  at least $\frac{1}{|\Gamma_{max}^i|}$ (and in fact is also at least $\frac{1}{|\Gamma_{min}^i|}$). 
  So, using $\sigma$, starting with one object
of type $T_i$, no matter what strategy
           the min player chooses, there is a positive probability 
$\geq \frac{1}{|\Gamma_{max}^i|}$ that either the object will have no child or
           the object will generate a single 
child object
of type $T_{(a_{max},a_{min})}$, associated
           with variable $x_j = A_i(x)_{(a_{max},a_{min})} \in S$,
and hence such that $(g^*_{\sigma,*})_{j} \geq b_{j} > 0$.
So no matter what strategy
  the min player picks, there is at least $\frac{1}{|\Gamma_{max}^i|}$
  probability that the unique 
child object  belongs to $S$, or that there is no child object.
  Hence, $(g^*_{\sigma,*})_i \ge \frac{1}{|\Gamma_{max}^i|} *
  \min \{b_j \mid  x_j \in S\} >
  0$, and again we let $b_i := \frac{1}{|\Gamma_{max}^i|} * \min
  \{b_j \mid  x_j \in S \}$.
\end{enumerate}

Now consider any variable $x_i$ added to $S$ in step 6 at some
iteration of the algorithm (i.e., $x_i \in K := X - (S \cup
F)$). Because $x_i$ was not previously added to $S$ or $F$, then: (A.)
$x_i$ satisfies $P_i(\mathbf{0}) = 0$ and $P_i(\mathbf{1}) = 1$; (B.)
$x_i$ is not of type Q; (C.) if $x_i$ is of form L, then it depends
directly only on variables in $K$; and (D.) if $x_i$ is of type M, then
the answer for $x_i$
in step 4.(b) (during the latest iteration
of the main loop) was ``No''.

For each $x_i \in K$ of type $M$, let $\sigma(x_i)$ be 
a probability distribution on actions in $\Gamma_{max}^i$ defined in
(\ref{eq:strategy_sigma}).   Let strategy $\sigma$ use the 
local 1-step strategy $\sigma(x_i)$ at every object of type $T_i$
encountered during history.
We show that, for every $x_i \in K$,
$(g_{\sigma,*}^*)_i \ge b_i$ for some $b_i > 0$.

By Lemma
\ref{lemma:Ratio-LS-Converge}, for each variable $x_i
\in K$ of type M, and for 
any arbitrary 1-step strategy $\tau(x_i)$  for the min player
at $x_i$, there exists $c_i > 0$ such that:
\begin{align*}
	p(x_i \rightarrow S \cup \{1\}, \sigma(x_i), \tau(x_i)) \ge c_i * 
p(x_i \rightarrow 
(F \cup S \cup \{1\}), \sigma(x_i), \tau(x_i))
\end{align*}

For $r \geq 1$, let $Pr_{x_i}^{\sigma, \tau}(K \Until_{=r}
(S \cup \{1\}) )$ denote the probability that, starting with one object of
type $T_i$, where $x_i \in K$, using strategy $\sigma$ and 
an arbitrary (not necessarily static) strategy $\tau$, 
the history of play will stay in the set $K$  
for $r-1$ rounds, and in the $r$'th 
will either transition to
an object whose type is in the 
set $S$, or will die (i.e., produce no children).  Define $Pr_{x_i}^{\sigma, \tau}(K \Until_{=r}
(F \cup S \cup \{1\}))$ similarly. 
The following claim is a simple corollary
of 
Lemma \ref{lemma:Ratio-LS-Converge}. 
Let $c := \min \{c_i \mid x_i \in K \}$. (Note that
$0 < c \leq 1$.)   

\begin{claim} 
For any integer $r \geq 1$, and for 
any (not necessarily static) strategy $\tau$
for the min player, 
  $Pr_{x_i}^{\sigma, \tau}(K \Until_{=r} (S \cup \{1\})) \ge c * Pr_{x_i}^{\sigma,
   \tau}(K \Until_{=r} (F \cup S \cup \{1\}))$.
\label{claim:r-step-reach-S-F}
\end{claim}
\begin{proof}

Let $H(x_i,K, r-1)$ denote the set of all sequence of types in $K$ of length
$r-1$, starting with $x_i \in K$.  
For a history (sequence) $h \in H(x_i,K, r-1)$,
let $l(h)$ denote the index of the variable associated with the
last type in $h$, i.e., the one occurring at round $r-1$. 
For each $h \in H(x_i, K, r-1)$
there is some probability $q_h \geq 0$
that, starting at $x_i \in K$, the population 
follows the history $h$ for $r-1$ rounds.
So 

\begin{eqnarray*} 
Pr_{x_i}^{\sigma, \tau}(K\Until_{=r} (S \cup \{1\})) & = & \hspace*{-0.1in} \sum_{h \in H(x_i,K, r-1)} q_h  
\cdot
p(x_{l(h)} \rightarrow S \cup \{1\}, \sigma(h), \tau(h))\\
& \geq & \hspace*{-0.1in} \sum_{h \in H(x_i, K, r-1)}  q_h \cdot c_{l(h)} \cdot  
p(x_{l(h)} \rightarrow (F \cup S \cup \{1\}), \sigma(h), \tau(h))  \quad \
\mbox{(by Lemma \ref{lemma:Ratio-LS-Converge})}\\
& \geq & c \cdot \hspace*{-0.1in} \sum_{h \in H(x_i, K, r-1)} q_h  \cdot p(x_{l(h)} \rightarrow (F \cup S \cup \{1\}), 
\sigma(h), \tau(h))  \\ 
& = &   c \cdot Pr_{x_i}^{\sigma,
   \tau}(K \Until_{=r} (F \cup S \cup \{1\}))
\end{eqnarray*}
\end{proof}

We now argue that for all $x_i \in K$,
there exists $b_i > 0$ such that for any strategy $\tau$
for the min player,  $(g^*_{\sigma,\tau})_i > b_i > 0$.

Consider any strategy $\tau$ for the min player.
For $x_i \in K$,
let $Pr_{x_i}^{\sigma, \tau}(\Box K)$ denote the probability
that the history stays forever in $K$, starting at one
object of type $T_i$. 
Let $Pr_{x_i}^{\sigma, \tau}(K \Until
(S \cup \{1\}))$ denote the probability that
the history stays in set $K$  until it eventually either 
dies (has no children) or  transitions to
an object with type in
set $S$. Note that:

\begin{eqnarray*}
(g^*_{\sigma,\tau})_i  & \geq &
Pr_{x_i}^{\sigma, \tau}(\Box K)
+ Pr_{x_i}^{\sigma,\tau}(K \Until (S \cup \{1\})) \cdot 
\min \{(g_{\sigma,*}^*)_j \;|\; x_j \in S\}\\
& \geq & Pr_{x_i}^{\sigma, \tau}(\Box K)
+ Pr_{x_i}^{\sigma,\tau}(K \Until (S \cup \{1 \} )) \cdot \min \{b_j \mid x_j \in S\}
\end{eqnarray*}

We will show that, regardless
of the strategy $\tau$ for the min player, this probability must be at least:
$$b_i := \frac{c}{2}  \cdot \min \{b_j \mid x_j \in S\}$$
where $c := \min \{ c_i  \mid x_i \in K \}$.  Recall that $0 < c \leq 1$.
Let $p = Pr_{x_i}^{\sigma, \tau}(\Box K)$.   If $p \geq \frac{c}{2}$,
then we are done, since the inequalities above imply 
$(g^*_{\sigma,\tau})_i \geq \frac{c}{2}  \geq b_i$.
So, suppose $p < \frac{c}{2}$.
Observe that:
\begin{eqnarray*}
Pr_{x_i}^{\sigma,\tau}(K \Until (S \cup \{1\})) & = &
Pr_{x_i}^{\sigma,\tau}((K \Until (S \cup \{1\})) \cap \neg \Box K)\\
& = &  Pr_{x_i}^{\sigma,\tau}((K \Until (S \cup \{1\})) \mid \neg \Box K)
\cdot Pr_{x_i}^{\sigma,\tau}(\neg \Box K )\\
& = &  Pr_{x_i}^{\sigma,\tau}(K \Until (S \cup \{1\}) \mid  \neg \Box K)
\cdot (1-p) \\ 
& \geq &
Pr_{x_i}^{\sigma,\tau}(K \Until (S \cup \{1\}) \mid \neg \Box K) \cdot \frac{1}{2}.
\end{eqnarray*}

So it only remains to show that
$Pr_{x_i}^{\sigma,\tau}(K \Until (S \cup \{1\}) \mid \neg \Box K) \geq c$.
Note that the event $\neg \Box K$ is equivalent to the event
$(K \Until (F \cup S \cup \{1\}))$. 
The event $K \Until (S \cup \{1\})$ is equivalent to 
the disjoint union $\bigcup^\infty_{r = 1} K \Until_{=r} (S \cup \{1\})$.
Likewise for the event $K \Until (F \cup S \cup \{1\})$. 
Therefore:

\begin{eqnarray}
\nonumber Pr_{x_i}^{\sigma,\tau}(K \Until (S \cup \{1\}) \mid \neg \Box K) 
& = & 
\frac{Pr_{x_i}^{\sigma,\tau}(K \Until (S \cup \{1\}))}{Pr_{x_i}^{\sigma,\tau}(\neg \Box K)}\\
& = & \frac{\sum_{r=1}^{\infty}Pr_{x_i}^{\sigma,\tau}(K \Until_{=r} (S \cup \{1\}) )}
{\sum^\infty_{r=1}Pr_{x_i}^{\sigma,\tau}(K \Until_{=r} (F \cup S \cup \{1\}))}
\label{eq:lowbound-conditional}
\end{eqnarray}
But by Claim \ref{claim:r-step-reach-S-F},
for all $r \geq 1$,    $Pr_{x_i}^{\sigma,\tau}(K \Until_{=r} (S \cup \{1\})) \geq
c \cdot Pr_{x_i}^{\sigma,\tau}(K \Until_{=r} (F \cup S \cup \{1\}))$.
Hence, summing over all $r$, we have
$\sum_{r=1}^\infty Pr_{x_i}^{\sigma,\tau}(K \Until_{=r} (S \cup \{1\})) \geq
c \sum_{r=1}^\infty \cdot Pr_{x_i}^{\sigma,\tau}(K \Until_{=r} (F \cup S \cup \{1\}))$.
Hence,  dividing out and using (\ref{eq:lowbound-conditional}), 
we have $Pr_{x_i}^{\sigma,\tau}(K \Until (S \cup \{1\}) \mid \neg \Box K) \geq c$. 

Thus, $(g^*_{\sigma,\tau})_i \geq b_i$, and since this holds
for an arbitrary strategy $\tau$ for the min player, we have
$(g^*_{\sigma,*})_i \geq b_i > 0$.

\vspace*{0.1in}

We next want to show that if $F$ is the set of variables
output by the algorithm when it halts, then
for all
variables $x_i \in F$,  $g^*_i = 0$, or in other 
words, that the following holds:
\begin{align}
  \forall \epsilon > 0,\; \exists \tau_{\epsilon} \in \Psi_2
  \;\;s.t.\;\; \forall \sigma \in \Psi_1, \;\; (g_{\sigma,
    \tau_{\epsilon}}^*)_i \le \epsilon
	\label{eq:LS}
\end{align}

Let $N := \max_i |\Gamma_{min}^i|$. 
Given some $0 \le e \le
\frac{1}{2N}$, 
consider the following static distribution, $\mathit{safe}(x_i, e)$, on
actions for the min player at $x_i$  (i.e., distribution on $\Gamma^i_{min}$):
\begin{align}
    \mathit{safe}(x_i, e)(a_{min}) := 
\begin{cases}
  \big(e^2\big)^{j-1} \cdot \cfrac{ \big(1 - e^2\big) }{|L_j|} 
& \quad \text{if} \ a_{min} \in L_j , \  \text{for some} \  j \in \{1, \ldots, k-1\} \\[1em]
  \big(e^2\big)^{k-1} \cdot \cfrac{1}{|\Gamma_{min}^i -
    \bigcup_{q=0}^{k-1} L_q|} & \quad \text{otherwise}
\end{cases}
	\label{eq:strategy_tau_eps}
\end{align}

Given an $\epsilon >0$, we define a (static) strategy  $\tau_{\epsilon}$ as follows.
If a variable $x_i$ of form M is in $S$, then we let $\tau_{\epsilon} (x_i)$
be the uniform distribution on the corresponding action set $\Gamma^i_{min}$.
For variables in $F$, we define $\tau_{\epsilon}$ as follows.
Consider the last execution of the main loop of the algorithm.
Let $F_0 = \{x_i \in X-S \;|\; P_i(\mathbf{1}) <
                1$, or $P_i(x)$ is of form Q $\}$ be the set of variables assigned to $F$ in Step 3, and let
$x_{i_1}, x_{i_2}, \ldots, x_{i_{k^*}}$ be the variables
in $F -F_0$ ordered according to the time at which they were
added to $F$ in the iterations of Step 4. 
For each variable $x_{i_t} \in F$ of form M 
we let $\tau_{\epsilon} (x_{i_t}) = safe(x_{i_t} , e_t)$ where the
parameters $e_t$ are set as follows.
Let $n$ be the number of variables, and $N := \max_i |\Gamma^i_{min}|$ 
the maximum number
of actions of player min for any variable of form M.
Let ${\kappa}$ be the minimum of (1) $1/N$, (2) the minimum (nonnegative) coefficient
of a monomial in $P_i(x)$ over all variables $x_i$ of form L,
and (3) the minimum of $1-P_i(\mathbf{1})$ over all $x_i$ of form L
such that $P_i(\mathbf{1}) < 1$.
Let $\lambda= {\kappa}^n$. 
Clearly, $\lambda$ is a rational number that depends on the given minimax-PPS
$x=P(x)$ (and the corresponding BCSG) and it has polynomial
number of bits in the size of $P$.
Let $d_0 = \lceil \log (\frac{n}{ \epsilon \lambda}) \rceil$
and let $d_t = d_0 \cdot (2N)^t$ for $t \geq 1$.
We set $e_t = 2^{-d_t}$ for all $t \geq 0$.
The numbers $e_t$ can be doubly exponentially small, 
but they can be represented compactly in floating point,
i.e., in polynomial size in the size of $P$ and of $\epsilon$.
Note from the definitions that $e_0 \leq \epsilon \lambda /n$,
and $e_t = (e_{t-1})^{2N}$ for all $t \geq 1$.

Consider the max-PPS $x=P_{*,\tau_{\epsilon}}(x)$ obtained from the
given minimax-PPS $x=P(x)$ by fixing the strategy of the min player to
$\tau_{\epsilon}$.  For every variable $x_i$ of form L or Q, the
corresponding equation $x_i = P_i(x)$ stays the same, and for every
variable $x_i$ of form M the equation becomes $x_i = \max_{a_{max} \in
  \Gamma^i_{max}} \sum_{a_{min} \in \Gamma^i_{min}}
\tau_{\epsilon}(x_i)(a_{min}) \cdot A_i(x)_{(a_{max},a_{min})}$.  Let
$f^* = g^*_{*, \tau_{\epsilon}}$ be the greatest fixed point of the
max-PPS $x=P_{*,\tau_{\epsilon}}(x)$, and let $M = \max \{ f^*_i | x_i
\in F \}$.  We will show that $M \leq \epsilon$, i.e., $f^*_i \leq
\epsilon$ for all $x_i \in F$.

First, we show that all variables of $X$ have value strictly 
less than 1 in $f^*$, 
and we also bound the value of the variables of $S$ in terms of $M$.

\begin{claim} \label{claim:less-1} \mbox{}\\
(1) For all $x_i \in X$, $f^*_i < 1$.\\
(2) For all $x_i \in X$, $f^*_i \leq \lambda M +(1-\lambda )$.
\end{claim}
\begin{proof}
The algorithm of Proposition \ref{prop:QualNonReach} (see Fig. \ref{fig:QualNonReach}) computes the set $X$ of variables $x_i$ of the minimax-PPS such that $g^*_i <1$ (this set is denoted $S$ in Fig. \ref{fig:QualNonReach}, but to avoid confusion with the set $S$ of the limit-sure reachability algorithm of Fig. \ref{fig:Qual-LS-Reach}, we refer to it as $X$ in the following).
We use induction on the time that a variable $x_i$ was added to $X$ in
the algorithm of Fig. \ref{fig:QualNonReach} to show the claim. 
For part (2), our induction hypothesis is that
if a variable $x_i$ is added to $X$ at time $t$
(where the initialization is time 1) then 
$f^*_i \leq {\kappa}^t M +(1-{\kappa}^t)$.
This inequality implies (2) since $t \leq n$ and $\lambda={\kappa}^n$.

For the basis case ($t=1$), 
$x_i$ is a deficient variable, i.e. $P_i(\mathbf{1}) <1$, hence
$f^*_i \leq P_i(\mathbf{1}) \leq 1-{\kappa} < 1$.

For the induction step, if $x_i$ is of form L or Q, then
$P_i(x)$ contains a variable $x_j$ that was added earlier to $X$,
hence $f^*_i <1$ follows from $f^*_j <1$ by the induction hypothesis.
For part (2), if $x_i$ is of form L, then the coefficient
of $x_j$ in $P_i(x)$ is at least ${\kappa}$ and 
$f^*_j \leq {\kappa}^{t-1} M +(1-{\kappa}^{t-1})$ by the induction
hypothesis, hence 
$f^*_i \leq {\kappa}({\kappa}^{t-1} M +(1-{\kappa}^{t-1})) + 1-{\kappa}= {\kappa}^t M +(1-{\kappa}^t)$.
If $x_i$ is of form Q, then 
$f^*_i \leq f^*_j \leq {\kappa}^{t-1} M +(1-{\kappa}^{t-1}) \leq {\kappa}^t M +(1-{\kappa}^t)$. 

If $x_i$ is of form M then
for every action $a_{max} \in \Gamma^i_{max}$,
there exists an action $a_{min} \in \Gamma^i_{min}$ such that
the variable $x_j =A_i(x)_{(a_{max},a_{min})}$ was added previously to $X$,
and hence its value in $f^*$ is $<1$ by the induction hypothesis.
Since $\tau_{\epsilon}(x_i)$ plays all the actions of $\Gamma^i_{min}$
with nonzero probability, both when $x_i \in S$ and
when $x_i \in F$, it follows that $f^*_i <1$.
This shows part (1).
For part (2), if $x_i \in F$, then
$f^*_i \leq M \leq {\kappa}^t M +(1-{\kappa}^t)$, where the first inequality follows
from the definition of $M$.
Suppose $x_i \in S$ and let $a_{max}$ be an action in $\Gamma^i_{max}$
that yields the greatest fixed point $f^*_i$ in the max-PPS equation
$x_i =  (P_{*,\tau_{\epsilon}}(x))_{i}$. The right-hand side 
for this
action is a linear expression that contains a
variable $x_j =A_i(x)_{(a_{max},a_{min})}$ that was added previously to $X$, and the coefficient of this term is 
$1/|\Gamma^i_{min}| \geq 1/N \geq {\kappa}$, since $\tau_{\epsilon}(x_i)$
is the uniform distribution for $x_i \in S$.
Therefore, $f^*_i \leq {\kappa} f^*_j +(1-{\kappa}) \leq {\kappa}({\kappa}^{t-1} M +(1-{\kappa}^{t-1})) + 1-{\kappa}= {\kappa}^t M +(1-{\kappa}^t)$.
\end{proof}

We can show the key lemma now.

\begin{lemma}
For all $x_i \in F$, $f^*_i \leq \epsilon$.
\end{lemma}
\begin{proof}
Recall that $F= F_0 \cup \{x_{i_1}, x_{i_2} , \ldots, x_{i_{k^*}} \}$.
Let $M_0 = \max\{ f^*_i | x_i \in F_0 \}$
and let $M_t = f^*_{i_t}$ for $t \geq 1$ 
be the value of $x_{i_t}$ in
the greatest fixed point $f^*$ of the max-PPS 
$x= P_{*,\tau_{\epsilon}}(x)$.
Thus, $M= \max \{ M_t | t \geq 0 \}$.
Let $r_t = (e_t)^{2N-1}$. 
Note that for every $x_{i_t} \in F$ of form M,
the probability with which $\tau_{\epsilon}(x_{i_t})=safe(x_{i_t},e_t)$ plays 
any action in a set $L_j$ is at least 
$(e_t^2)^{N-1} (1-e_t^2)/N$ which is $> (e_t)^{2N-1} =r_t$
because $e_t < 1/(2N)$.
Let $s_t = \Pi_{j=1}^t r_j$; by convention, $s_0 =1$.

We will show first that for all $t \geq 0$,
there exist $a_t, g_t \geq 0$ that satisfy 
$a_t \geq \lambda \cdot s_t$ and $g_t \leq  t \cdot e_0 \cdot a_t/ \lambda$, and such that
$M_t \leq a_t M^2  +(1-a_t - g_t)M+ g_t$.
We will use induction on $t$.

Basis: $t=0$. Then $M_0 =f^*_i$ for a variable
$x_i \in F_0$ which is either a deficient variable of form L
or a variable of form Q. 
If $x_i$ is of form L, then note that
(1) $P_{i}$ does not contain a constant term 
(because otherwise $x_{i}$ would have been added to
set $S$ in Step 1), (2) all the variables of $P_{i}(x)$ are not in $S$ (because otherwise $x_{i}$ would have been added to
set $S$ in Step 2), hence they are all eventually added to $F$
and thus their value in $f^*$ is at most $M$,
and (3) the coefficients sum to at most $1-{\kappa}$ because
$P_i(\mathbf{1}) < 1$.
Therefore, $M_0 =f^*_i \leq (1-{\kappa})M \leq \lambda M^2 +(1- \lambda )M$.
If $x_i$ is of form Q, at least one of the variables of $P_i(x)$ must belong to $F$
(because otherwise $x_i$ would have been added to $S$ in Step 2), 
hence its value in $f^*$ is at most $M$, and the value of the other variable is
at most $\lambda M+(1- \lambda)$ by Claim \ref{claim:less-1}.
Therefore, $M_0 =f^*_i \leq M( \lambda M+1-\lambda ) = \lambda M^2 +(1-\lambda)M$.
Thus in both cases, $M_0 \leq  \lambda M^2 +(1-\lambda)M$.
We can take $a_0 =\lambda$, $ g_0 = 0$.

Induction step: We have $M_t = f^*_{i_t}$.
If $x_{i_t}$ is of form L, then $P_{i_t}(x)$ contains
a variable $x_j$ that was added earlier to $F$; its coefficient, say $p$,
is at least ${\kappa}$.
Note again that $P_{i_t}(x)$ does not contain a
constant term, all the other variables of $P_{i_t}(x)$ are not in 
$S$, hence they are all eventually added to $F$
and their value in $f^*$ is at most $M$, and the sum of their
coefficients is $1-p$.
Since the variable $x_j$ was added earlier to $F$,
by the induction hypothesis we have $f^*_j \leq a_u M^2  +(1-a_u - g_u)M+ g_u$
for some $u \leq t-1$.
Therefore, $M_t \leq p(a_u M^2  +(1-a_u - g_u)M+ g_u) + (1-p)M$
$= a_t M^2  +(1-a_t - g_t)M+ g_t$,
with $a_t = p a_u $ and $g_t = p g_u$.
Since $u \leq t-1$, we have $a_u \geq \lambda \cdot s_{u} \geq \lambda \cdot s_{t-1}$,
and since $p \geq {\kappa} \geq r_t$ it follows that 
$a_t =pa_u \geq \lambda \cdot s_{t-1} \cdot r_t = \lambda \cdot s_t$.
Also, $g_t = p g_u \leq p u e_0 a_u/\lambda \leq t e_0 a_t/\lambda$.

Suppose $x_{i_t}$ is of form M, and let $a_{max} \in
\Gamma^{i_t}_{max}$ be an action of the max player that yields the
greatest fixed point $f^*_{i_t}$ in the max-PPS equation $x_{i_t} =
(P_{*,\tau_{\epsilon}}(x))_{i_t}$.  Then $a_{max}$ belongs to some $B_j$
in Step 4 of the algorithm of Fig. \ref{fig:Qual-LS-Reach}, and thus
there is a $a_{min} \in L_j$ such that the variable
$A_{i_t}(x)_{(a_{max},a_{min})}$ was added earlier to $F$, i.e., it is
variable $x_{i_u}$ for some $u \leq t-1$ or it belongs to $F_0$.  The
probability $p = \tau_{\epsilon}(x_{i_t})(a_{min})$ of this action in
strategy $\tau_{\epsilon}$ is $p=(e_t^2)^{j-1} \cdot (1-e_t^2)/|L_j|$.
All the variables $A_{i_t}(x)_{(a_{max},a)}$ for $a \in \cup_{q=1}^{j}
L_q$ are not in $S$, hence they are all eventually assigned to
$F$. The total probability that strategy $\tau_{\epsilon}$ gives to
the actions $a \in \cup_{q=1}^{j} L_q$ is $1-(e_t^2)^j$, hence the
remaining probability assigned to the other actions $a \in
\Gamma^{i_t}_{min} - \cup_{q=1}^{j} L_q$ is $(e_t^2)^j$ which is $
\leq p e_t$ since $e_t \leq 1/(2N)$.  Therefore, $M_t \leq p M_u +
(1-p - p e_t)M + p e_t$ for some $u \leq t-1$.  By the induction
hypothesis, $M_u \leq a_u M^2 +(1-a_u - g_u)M+ g_u$, where $a_u \geq
\lambda s_u$ and $g_u \leq u e_0 a_u/\lambda$.  Hence, $M_t \leq p
(a_u M^2 +(1-a_u - g_u)M+ g_u) + (1-p - p e_t)M + p e_t$ $= a_t M^2
+(1-a_t - g_t)M+ g_t$, where $a_t = p a_u$ and $g_t = p g_u + p e_t$.
Since $p \geq r_t$ and $a_u \geq \lambda s_u \geq \lambda s_{t-1}$, we
have $a_t \geq \lambda s_t$.  It is easy to check from the definitions
that $e_t \leq e_0 s_{t-1}$.  Indeed, $\log e_t = -d_0 (2N)^t$, while
$\log (e_0 s_{t-1}) = \log e_0 + (2N-1) \sum_{j=1}^{t-1} \log e_j$ $=
-d_0( (2N)^t -2N+1)$.  Since $g_u \leq u e_0 a_u/\lambda$ and $e_t
\leq e_0 s_{t-1} \leq e_0 s_{u} \leq e_0 a_u/\lambda$, we have $g_t =
p g_u + p e_t \leq p (u+1) e_0 a_u /\lambda \leq t e_0 a_t/\lambda$.

Therefore, for all $t$ we have 
$M_t \leq a_t M^2  +(1-a_t - g_t)M+ g_t$,
where $a_t \geq \lambda s_t$ and $g_t \leq t e_0 a_t/\lambda$.
Let $t$ be an index with the maximum $M_t$,
i.e., $M=M_t$.
Then $M \leq a_t M^2  +(1-a_t - g_t)M+ g_t$,
hence $a_t M^2  -(a_t + g_t)M+ g_t \geq 0$.
That is, $(a_t M -g_t)(M-1) \geq 0$.
From Claim \ref{claim:less-1}, $M <1$. Therefore, $a_t M \leq g_t$.
Thus, $M \leq g_t /a_t \leq t e_0/\lambda \leq \epsilon$.
\end{proof}

This concludes the proof of the theorem.

\end{proof}

From the constructions in the proof of the theorem we have the following:

\begin{corollary}
Suppose the algorithm in Figure \ref{fig:Qual-LS-Reach} outputs the set $F$
when it terminates.   Let $S := X-F$.
\begin{enumerate} 
 \item There is a randomized static strategy $\sigma$ for the max player
(maximizing non-reachability) 
  such that for all variables $x_i \in S$, we have $(g_{\sigma, *}^*)_i > 0$. 

\item For all $\epsilon > 0$,
 there is a randomized static strategy
  $\tau_\epsilon$, for the min player (minimizing non-reachability), 
such that for all variables $x_i \in
  F$, $(g^*_{*, \tau_{\epsilon}})_i \leq \epsilon$. 
\end{enumerate}
\end{corollary}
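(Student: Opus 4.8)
The plan is to observe that both strategies the corollary asks for have already been constructed, explicitly and in static (memoryless) form, inside the proof of Theorem~\ref{theorem:Qual-LS-Reach}. So the task reduces to isolating them and checking that they are genuine policies and that the stated bounds hold against \emph{all} opponent strategies, not just static ones.

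\textbf{Part 1.} I would let $\sigma$ be exactly the strategy built in the first half of the proof of Theorem~\ref{theorem:Qual-LS-Reach} for the variables of $S = X - F$: at a variable $x_i$ placed into $S$ during a Step~2 of the algorithm, $\sigma$ plays the uniform distribution on the relevant ``witness'' actions (the L, Q, M cases there), and at a variable $x_i$ placed into $S$ during a Step~6/7 (a variable lying in some layer $K$), it plays the distribution $\sigma(x_i)$ of~(\ref{eq:strategy_sigma}), using the set $B_m$ computed for $x_i$ in Step~4(b) of the iteration in which $x_i$ belonged to $K$. Given that this is a single well-defined static strategy (see the obstacle below), the theorem's proof already establishes, by induction on the time a variable enters $S$, a vector $b > 0$ with $(g^*_{\sigma,*})_i \ge b_i$ for every $x_i \in S$; the only delicate case, $x_i \in K$, is precisely where Lemma~\ref{lemma:Ratio-LS-Converge} and Claim~\ref{claim:r-step-reach-S-F} are invoked, and the argument carries over unchanged. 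Hence $(g^*_{\sigma,*})_i > 0$ for all $x_i \in S$, which is part~1.

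\textbf{Part 2.} Here I would take $\tau_\epsilon$ to be precisely the static strategy of the second half of that proof: the uniform distribution on $\Gamma^i_{min}$ at each form-M variable of $S$, and $\mathit{safe}(x_{i_t}, e_t)$ of~(\ref{eq:strategy_tau_eps}) at each form-M variable $x_{i_t}$ of $F$, with the parameter schedule $e_t = 2^{-d_t}$, $d_0 = \lceil \log(n/(\epsilon\lambda)) \rceil$, $d_t = d_0 (2N)^t$, represented compactly in floating point. This is manifestly a randomized static strategy. Fixing it turns the minimax-PPS into the maxPPS $x = P_{*,\tau_\epsilon}(x)$, whose greatest fixed point $f^*$ equals $g^*_{*,\tau_\epsilon}$; moreover, by Theorem~\ref{theorem:GFP-NonReach} applied to that maxPPS the max player has a static optimal strategy, so $(g^*_{*,\tau_\epsilon})_i = \sup_{\sigma \in \Psi_1}(g^*_{\sigma,\tau_\epsilon})_i = f^*_i$, i.e.\ the bound holds against arbitrary (non-static) max strategies. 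The key lemma inside the proof of Theorem~\ref{theorem:Qual-LS-Reach} gives $f^*_i \le \epsilon$ for all $x_i \in F$, which is exactly the claim.

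\textbf{Main obstacle.} There is no substantial obstacle beyond bookkeeping: the one point that genuinely needs checking is the consistency and memorylessness of $\sigma$ in Part~1, since the theorem's proof describes $\sigma$ iteration by iteration as the algorithm peels successive layers $K$ into $S$. One must observe that $S$ grows monotonically and that a variable lies in a layer $K$ for at most one iteration, so each variable receives its local one-step prescription exactly once and the prescriptions never conflict; the lower-bound vector $b$ is then also independent of the iteration structure. Everything else is a direct quotation of results already established for Theorem~\ref{theorem:Qual-LS-Reach}.
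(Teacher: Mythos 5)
Your proposal is correct and follows essentially the same route as the paper, whose proof of this corollary simply cites the static strategies $\sigma$ and $\tau_\epsilon$ constructed in the proof of Theorem~\ref{theorem:Qual-LS-Reach}. Your added checks (that $\sigma$ is a single well-defined policy since each variable enters $S$ once, and that $(g^*_{*,\tau_\epsilon})_i$ equals the GFP $f^*_i$ of the maxPPS so the bound holds against arbitrary max strategies) are sound clarifications of points the paper leaves implicit.
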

\begin{proof} 
This follows directly from the strategies $\sigma$, and $\tau_\epsilon$,
constructed in the proof of 
Theorem \ref{theorem:Qual-LS-Reach}.
\end{proof}

\bibliographystyle{plainurl}

\end{document}